\newcommand{\diff}{\mathrm{d}}
\newcommand{\E}[1]{\mathbb{E}\left[ {#1}\right]} 
\newcommand{\fil}{{\cal{F}}} 
\newcommand{\edur}{\psi}
\newtheoremstyle{no_par}
  {\topsep}
  {\topsep}
  {\itshape} 
  {} 
  {} 
  {} 
  {5pt plus 1pt minus 1pt} 
  {\thmname{\textbf{#1}}\thmnumber{ \textbf{#2.}}\thmnote{ \textit{#3}}} 
\theoremstyle{no_par}
\newtheorem{algo}{Algorithm}[section]
\theoremstyle{plain} 
\newtheorem{prop}{Proposition}[section]
\theoremstyle{plain} 
\newtheorem{teo}{Theorem}
\author[1]{Marcello Rambaldi}
\author[2]{Emmanuel Bacry}
\author[1]{Fabrizio Lillo}
\affil[1]{\footnotesize{Scuola Normale Superiore, Piazza dei Cavalieri 7, Pisa 56126, Italy}}
\affil[2]{Centre de Mathématiques Appliquées, CNRS, École Polytechnique, UMR 7641, 91128 Palaiseau, France}
\date{}
\title{The role of volume in order book dynamics: \\ a multivariate Hawkes process analysis}
\begin{document}
\maketitle

\begin{abstract}
We show that multivariate Hawkes processes coupled with the nonparametric estimation procedure first proposed in \cite{bacry2014} can be successfully used to study complex interactions between the time of arrival of orders and their size, observed in a limit order book market. We apply this methodology to high-frequency order book data of futures traded at EUREX. 
Specifically, we demonstrate how this approach is amenable not only to analyze interplay between different order types (market orders, limit orders, cancellations) but also to include other relevant quantities, such as the order size, into the analysis, showing also that simple models assuming the independence between volume and time are not suitable to describe the data. 
\end{abstract}

\section{Introduction}

Modeling the order book is a complicated task. Indeed, even in the simplest setting, multiple types of orders (limit orders, market orders, cancellations) arrive on the market at random times, and each one has a label that specifies the quantity to be negotiated (called the volume) and the price. Several are the challenges for a mathematical description of this system. First, the arrival times of the orders are not well described by a Poisson process (see for example \citep{Chakraborti:2011em} and references therein). In fact, the time durations between events are not independent but display strong correlations. Moreover, also the sequence of volumes presents nontrivial correlations \citep{Gould:2013hg}. Furthermore, the interplay between them is relevant and need to be considered for a deep understanding of the system. Finally, the existence of different types of events that could influence each other increases further the complexity of the problem. 

In this paper we are interested in exploring the effects of order size on the order book dynamics. To frame the problem, let us first consider just one type of event, say the occurrence of a transaction. From a statistical perspective, the observation of transactions and the corresponding volumes represents the realization of a \emph{marked} point process ($t_i, v_i$). Since we are interested in the effects of order size on the dynamics and vice-versa, we cannot model the volume as an independent mark process, but instead we need to consider their interdependence. This two-way feedback mechanism between marks and the underlying point process represents a challenge for modelers.

Some models for this system have been proposed in the Autoregressive Conditional Duration framework \citep{engle&russell1998, Bauwens:2004gs, Pacurar:2008df}. These models describe the point process using a duration representation.
The $i$-th duration $d_i$ of a point process is defined as $d_i = t_i -t_{i-1}$. ACD models can be summarized as follows
\begin{equation}
\begin{split}
d_i &= \edur_i \epsilon_i, \;\;\;\;\; \epsilon_i \sim \mbox{i.i.d.}(1, \sigma^2_{\epsilon})\\
\edur_i &\equiv \E{d_i|\fil_i; \theta_d}
\end{split}
\end{equation}
where $\theta_d$ is a set of parameters to be estimated. A parametric form has to be specified for $\edur$.
When marks enter the picture, indicating with $d_i$ the $i$-th duration and with $x_i$ the relative mark, the data generating process can be written
\begin{equation}
(d_t,x_i) \sim f(d_i,x_i|\fil_i)
\end{equation}  
where $f$ is the joint distribution of the $i$-th duration and mark conditional on all the available information $\fil_i$. As was pointed out in \cite{engle2000}, this joint density can be rewritten as 
\begin{equation}
(d_i,x_i) \sim f(d_i,x_i|\fil_i) = g(d_i|\fil_i) \cdot h(x_i|d_i,\fil_i)
\end{equation}

A possible way to construct a model is thus to specify $f$ or $g$ and $h$ as well as the distribution of the error terms. This approach has been followed for example in \cite{manganelli2005duration}. In this stream of literature, the event process is described using Engle's and Russels' ACD models. Then, a model has to be specified for each mark's conditional distribution.  

These models allow for a lot of flexibility, however, they also presents some significant limitations. In fact, parameters do not have a clear interpretation so that the excitement structure is not easy to recover. More important, these models are strongly parametric and the choice of the error distribution can have a large influence on the model performance \citep{AllenComparison}. More generally, a strong structure is imposed \textit{a priori} with the selection of the functional form for $g$ and $h$.
Finally, a severe limitation of ACD models for our purposes is that duration based approaches are not easily generalized to multiple dimensions. Therefore, it is hard to extend these models to describe the whole order book dynamics. 

Intensity-based approaches are on the contrary much more amenable to extension to multiple dimensions. Among intensity based models, Hawkes self-exciting processes \citep{hawkes1971} have been applied successfully in finance to model the irregular arrival in time of a number of event types (trades, quotes, etc.), see for instance \citet{Bowsher2007,bacry2011, Filimonov:2015fm,bormetti_cojumps, Hardiman:2013kj, bauwens2009_review,Rambaldi2015} and \citet{2015arXiv150204592B} for a recent review.

Hawkes processes are a family of doubly stochastic point processes in which the events rate of arrival, $\lambda_t$, is a random function. In particular, at time $t$, $\lambda_t$ is given by
\begin{equation}
\lambda_t  = \mu +\int_{-\infty}^t \phi(t-s) \diff N_s = \mu + \sum_{t_i<t} \phi(t-t_i)
\end{equation} 
where 
$\mu$ is a constant baseline intensity and the function $\phi$, called the kernel of the process, is a non-negative function that is causal (in the sense that its support lies in $\mathbb{R}^+$) which determines the influence of past event on the present value of the event rate.

The process can easily be generalized to a multivariate setting. Denoting with $D$ the number of components, the intensity of component $i$ reads
\begin{equation}
\label{eq:hawkes_multi}
\lambda^i_t = \mu^i +\int_{-\infty}^t \sum_{j=1}^D \phi^{ij}(t-s) \diff N^j_s
\end{equation}
where $\phi^{ij}(t)$ determines the influence of past event of type $j$ on the intensity of type $i$ events. Let us note that, in the following, we will indifferently use the notation $\phi^{ij}$ or $\phi( j \rightarrow i)$ (though heavier, this last notation has the advantage to clearly indicate the direction of the causal relation).

For applications, a very useful particular case of model \eqref{eq:hawkes_multi} is when the time-arrival process of the jumps is stationary.
It has been shown \citep{hawkes1971}  that it is the case if
the spectral radius of the matrix
$$ ||\phi||_1 =\left( \Vert \phi^{ij} \Vert_1 \right) _{1\leq i,j \leq D} $$
is strictly smaller than 1 (we used the notation $||\phi^{ij}||_1 = \int \phi^{ij}(t) dt$). The intensity process $\lambda_t$ is then shown to be a stationary process and 
\begin{equation}
\label{eq:stat_rel}
\Lambda = \E{\lambda_t} = (\mathbb{I}-\Vert \phi \Vert_1)^{-1} \mu.
\end{equation}
So far, Hawkes processes have been used in finance mainly to model the sequence of event times (e.g. change of midprice), disregarding the corresponding sequence of marks, i.e. how much the corresponding quantity has changed (volume traded, price, etc.). In this paper we demonstrate how multivariate Hawkes process can be used to investigate and model the mutual relationship between different orders taking the order size into account.


A simple way to incorporate volumes $v$ in the dynamics is to assume that they are independent from the underlying point process and that the dependence on the volume factorizes,
\begin{equation}\label{eq:naive}
\lambda_t=\mu +\int_{-\infty}^t f(v_s) \phi (t-s) \diff N_s, 
\end{equation} 
As shown by \citep{bacry2014estimation}, the estimation of this model does not pose particular difficulties.

However the predictions of this model are at odds with empirical data. To illustrate this fact and anticipating some of the analyses of this paper, Figure \ref{fig:motivation} shows some of the estimated kernels for the DAX future. We binned the volumes in six groups, $1$ ($6$) labelling the group containing trades with the smallest (largest) volume. The figure plots four kernels $\phi(j\to i)$, measuring how much trades with volume in bin $j$ triggers trades with volume in bin $i$. 
Note that under the naive model of Eq. \ref{eq:naive} all the kernels $\phi(j\to i)$ would have the same shape, while empirical data shows large differences among them. Thus the naive model is unable to fully capture the role of volume in the trade dynamics.
\begin{center}
\begin{figure}[tb]
\centering
\includegraphics[width=0.57\textwidth]{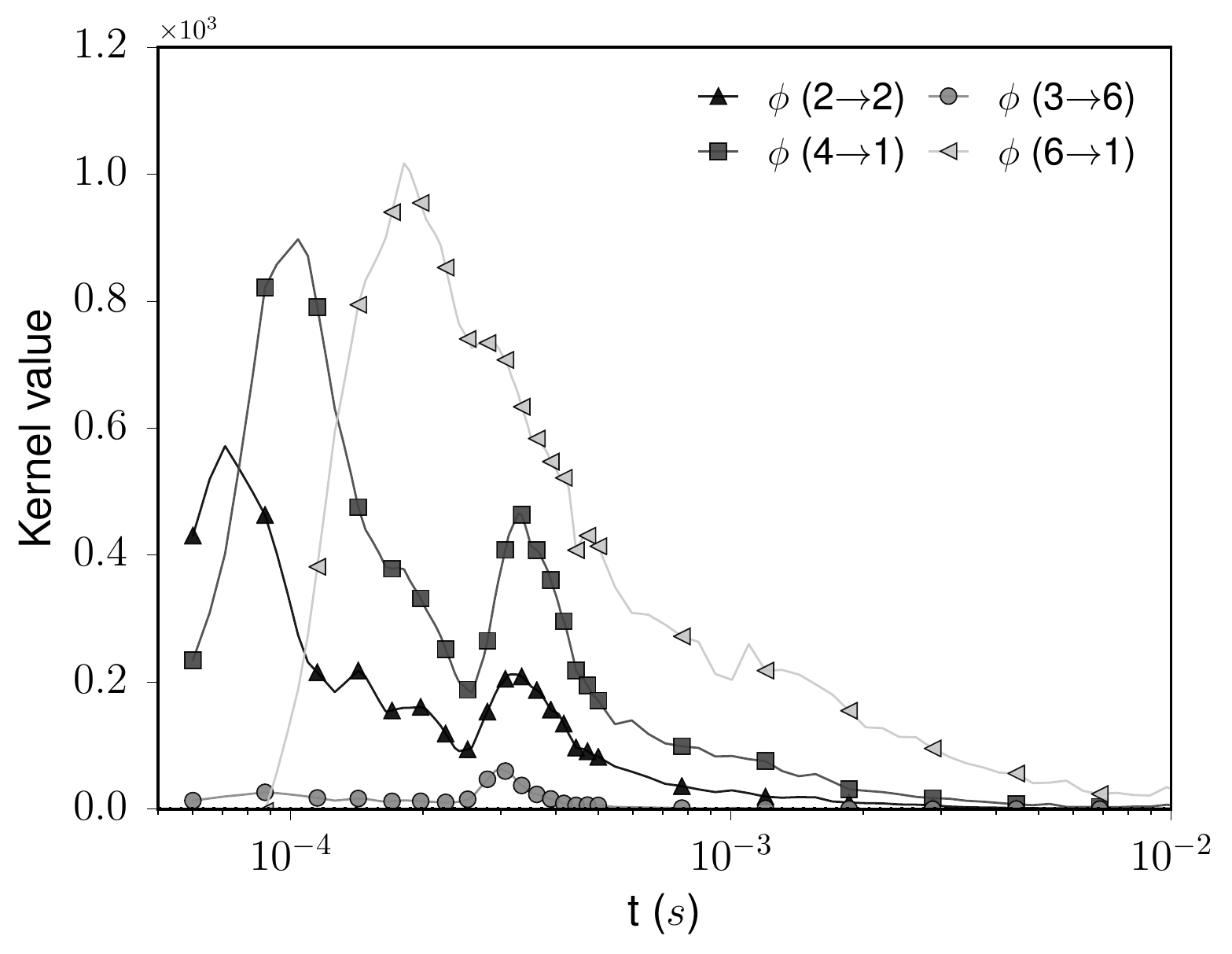}
\caption{Kernel estimates for the DAX future. Trade volumes are binned in six groups, $1$ ($6$) labelling the group containing trades with the smallest (largest) volume. The figure plots the four kernels $\phi^{ij} = \phi(j\to i)$, measuring how much trades with volume in bin $j$ triggers trades with volume in bin $i$.}
\label{fig:motivation}
\end{figure}
\end{center}

This evidence motivates our work and the development of a model that takes into account the complex dependence between time and volume. In the univariate case, the most general Hawkes model specification is 
\begin{equation}
\label{eq:1d_marked}
\lambda_t=\mu +\int_{-\infty}^t \phi (t-s; v_s) \diff N_s
\end{equation}
where the kernel $\phi$ depends on the time distance to previous events as well as on their marks\footnote{Note that in the case of volume we furthermore require that
\begin{equation}
\phi( t, v_1 + v_2) = \lim_{\Delta t \to 0} \phi(t, v_1) + \phi(t+\Delta t, v_2).
\end{equation}
That is in the limit where two events become indistinguishable in time, their effect must be the same of the combined event.}. 
The interdependence structure of the time $t$ and the mark process $v_t$ of the kernel is very difficult to estimate directly. A simple workaround consists in 
considering the volume process $v_t$ as the superposition of $D$ unmarked point processes, each of which corresponds to one of the possible $D$ values $\{V^i\}_{1\le i \le D}$ that $v_t$ can take. 
One can then consider the $D$ multivariate Hawkes process defined by \eqref{eq:hawkes_multi}, where $\lambda^i$ represents the intensity process for the events associated to the value $V^i$. The so-obtained model is an equivalent representation of the original process. 

This approach is convenient as it allows to treat the case of dependent mark with the already available tools developed for multivariate point process. For a mark variable that can take an infinite number of values, some sort of binning is necessary in order to map the marked process to a multivariate one\footnote{Let us point out that the two approaches are strictly equivalent in the limit of infinitely many bins.}.

In this paper we follow this approach and we use the non-parametric estimation method of \citet{bacry2014} to estimate the kernels and the baseline intensities from empirical data. We start by applying this model to unsigned trades, then by increasing the total dimension of the model we extend the study to signed trades (i.e. differentiating between buyer initiated transactions and seller initiated ones), and finally to the whole first level of the order book. 

In the next section we present the estimation method we use and we discuss how to handle inhibition effects. In Section \ref{sec:data} we describe our dataset and some of its empirical properties. Section \ref{sec:unsigned_vol} contains the empirical results on trades when volume is taken into account. Section \ref{sec:fullob} extends our Hawkes based analytical tool to the first level of the order book by including also limit orders and cancellations, and accounting for order size. Finally, in Section \ref{sec:fin_rem} we present our conclusions.

\section{Kernel estimation procedure and inhibition effects}
\label{sec:neg_np}

All along this paper, we shall estimate the kernels of the multivariate Hawkes process by using the non parametric method proposed in \citet{bacry2014} and its variant presented in \citet{bacry2014estimation}. It has been shown that the method works well when large amounts of data are available and the kernel is non-localized, which is typically the case in high-frequency finance applications. This methodology leverages the fact that a stationary multivariate Hawkes process is completely specified by its first- and second-order properties. In fact, the knowledge of the average intensity vector $\Lambda =\E{\lambda(t)}$ and of the conditional laws $g^{ij}(t)$ 
\begin{equation}
\label{eq:claw_def}
g^{ij}(t) \diff t = \E{\diff N_t^i |\diff N_0^j =1} -\epsilon_{ij} \delta(t) -\Lambda^i \diff t
\end{equation}
where $\epsilon_{ij} = 1$ for $i=j$ and 0 otherwise, is sufficient to recover the kernel matrix $\phi$ and the vector of baseline intensities $\mu$ thanks to the following result (proved in \citet{bacry2014}) : 

\begin{prop} 
Let $g(t)$ be the matrix of conditional laws of a Hawkes process, then the kernel matrix $\phi$ is the only causal solution of the Wiener-Hopf system
\begin{equation}
\label{eq:wiener-hopf}
g(t) = \phi(t) + g \ast \phi(t), \;\; \forall \; t>0.
\end{equation}
where $\ast$ denotes the usual matrix product where multiplication is substituted by convolution.
\end{prop}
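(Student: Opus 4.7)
The plan is to compute the conditional expected intensity $\mathbb{E}[\lambda_t^i \mid dN_0^j = 1]$ in two different ways and equate them. Firstly, by the definition of the intensity and of the conditional laws $g^{ij}$ in \eqref{eq:claw_def}, for any $t > 0$ we have the identification $\mathbb{E}[\lambda_t^i \mid dN_0^j = 1] = \Lambda^i + g^{ij}(t)$ (the Dirac contribution disappears since $t>0$). Secondly, plugging the Hawkes representation \eqref{eq:hawkes_multi} into the conditional expectation and applying Fubini yields
\begin{equation*}
\mathbb{E}[\lambda_t^i \mid dN_0^j = 1] = \mu^i + \sum_k \int_{-\infty}^t \phi^{ik}(t-s) \bigl[ \epsilon_{kj}\delta(s) + \Lambda^k + g^{kj}(s) \bigr] \, ds,
\end{equation*}
where I have used the decomposition of $\mathbb{E}[dN_s^k \mid dN_0^j = 1]/ds$ provided by the definition of $g^{kj}$ itself.

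Next I would expand the three pieces of the integral. For $t > 0$ the Dirac mass at $s=0$ lies inside the integration domain and produces the direct triggering term $\phi^{ij}(t)$; the constant piece contributes $\sum_k \Lambda^k \Vert \phi^{ik} \Vert_1$; and the $g$-piece contributes $\sum_k \int_{-\infty}^t \phi^{ik}(t-s) g^{kj}(s) \, ds$, which by causality of $\phi$ can be extended to an integral over the whole real line and is therefore the $(i,j)$ entry of the full matrix convolution. The constant pieces cancel by the first-moment stationarity identity \eqref{eq:stat_rel}, which rearranges to $\mu^i + \sum_k \Vert \phi^{ik} \Vert_1 \Lambda^k = \Lambda^i$. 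Cancelling $\Lambda^i$ on both sides leaves precisely the Wiener-Hopf equation in matrix form, with the convolution in the order prescribed by the indexing convention used in the statement.

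For the "only causal solution" clause, I would argue by uniqueness for the Wiener-Hopf problem. If $\tilde\phi$ were a second causal solution, then $\chi := \phi - \tilde\phi$ would be causal and satisfy the homogeneous equation $\chi + g \ast \chi = 0$ on $t > 0$. Under the stability assumption that the spectral radius of $\Vert \phi \Vert_1$ is strictly below one, which in turn controls the Laplace transform of $g$ on the right half-plane, a standard Wiener-Hopf factorization of $I - \hat g(z)$ into the product of a matrix analytic in the right half-plane and one analytic in the left---or, equivalently, a Banach fixed-point argument on the space of causal $L^1$ matrix-valued functions---forces $\chi \equiv 0$.

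The main obstacle is precisely this uniqueness step. Deriving the identity itself is a routine conditioning-plus-Fubini calculation once one has the Hawkes representation and the definition of the conditional law at hand; the genuinely non-trivial content of the proposition is the uniqueness of the causal inverse of the Wiener-Hopf operator in the non-commutative matrix setting, and this is where the stability condition on $\Vert \phi \Vert_1$ enters crucially.
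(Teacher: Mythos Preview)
The paper does not actually prove this proposition: it states it with the parenthetical ``(proved in \citet{bacry2014})'' and moves on. So there is no in-paper proof to compare against; the result is imported wholesale from the cited reference. Your derivation of the identity---conditioning the Hawkes intensity on $\{dN_0^j=1\}$, expanding via Fubini, and cancelling the constant terms using the stationarity relation \eqref{eq:stat_rel}---is the standard route and is essentially what one finds in \cite{bacry2014}.

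Two remarks. First, a minor ordering issue: your computation produces $g^{ij}(t)=\phi^{ij}(t)+\sum_k\int\phi^{ik}(t-s)g^{kj}(s)\,ds$, i.e.\ $g=\phi+\phi\ast g$, whereas the proposition as stated reads $g=\phi+g\ast\phi$. In the matrix (non-commutative) setting these are not the same equation; both forms circulate in the literature depending on convention, and indeed the paper's own Appendix (proof of Proposition~\ref{prop:2}) writes the convolution in the $\phi\ast g$ order, so the discrepancy is a notational wobble in the paper rather than an error on your side. Second, for the uniqueness clause the paper does not argue via Wiener--Hopf factorization or a contraction argument as you sketch; it simply invokes a separate result (the Theorem immediately following the proposition, attributed to Jaisson's thesis) asserting that for any $g\in L^1$ arising as the conditional law of a point process, \eqref{eq:wiener-hopf} has a unique $L^1$ solution. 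Your sketch of uniqueness is reasonable in spirit but would need more work to be a proof; the paper sidesteps this entirely by citation.
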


We remark that \eqref{eq:wiener-hopf} has at most one solution even if $g(t)$ is not generated by a Hawkes process. More precisely the following result holds (cf. \cite{jaisson:tel-01212087} p. 142)
\begin{teo}
If $g$ $\in\;L^1$ is the conditional law of a point process, then Equation \eqref{eq:wiener-hopf} has one and only one solution in $L^1$.
\end{teo}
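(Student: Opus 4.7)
My plan is to separate existence from uniqueness and address both via a matrix Wiener-Hopf factorization of the Fourier symbol $I + \widehat g(\omega)$, with the required positivity drawn from the hypothesis that $g$ is the conditional law of a point process.

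\emph{Setup.} Because \eqref{eq:wiener-hopf} is asserted only for $t>0$ while $\phi$ is causal, I would first convert it into an identity on all of $\mathbb{R}$. For any causal $\phi\in L^1$, setting $\zeta(t):=(g\ast\phi)(t)$ for $t\le 0$ and $\zeta(t):=0$ otherwise produces a function supported in $\mathbb{R}_-$, and solving \eqref{eq:wiener-hopf} is equivalent to
\[
g\,\mathbf{1}_{\mathbb{R}_+} + \zeta \;=\; \phi + g\ast\phi \quad \text{on } \mathbb{R}.
\]
Fourier-transforming recasts this as a Riemann-Hilbert-type splitting problem: by Paley-Wiener, $\widehat\phi$ extends holomorphically to the lower half-plane and $\widehat\zeta$ to the upper half-plane, and the decomposition of the data $\widehat{g\,\mathbf{1}_{\mathbb{R}_+}}$ along these two half-planes is mediated by the symbol $I + \widehat g(\omega)$.

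\emph{Uniqueness.} For two solutions $\phi_1,\phi_2$, the difference $\psi:=\phi_1-\phi_2$ satisfies the homogeneous equation, which in Fourier form reads $\bigl(I+\widehat g(\omega)\bigr)\widehat\psi(\omega) = \widehat{\zeta_\psi}(\omega)$ with the two sides holomorphic in opposite half-planes. The point-process hypothesis enters essentially here: Bochner's theorem applied to the Bartlett spectral measure of the underlying process forces $I + \widehat g(\omega)$, after the natural Hermitian symmetrization by $\mathrm{diag}(\Lambda)$, to be positive-definite on the real axis and hence invertible there. Combined with the Wiener-algebra structure inherited from $g\in L^1$, this yields a canonical factorization $I + \widehat g(\omega) = L_-(\omega)\,L_+(\omega)$ with $L_\pm$ and $L_\pm^{-1}$ holomorphic in the respective half-planes. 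Rearranging the homogeneous identity so as to collect the lower-half-plane factors on one side and upper-half-plane factors on the other produces an expression whose two sides extend holomorphically to opposite half-planes; the resulting entire function vanishes at infinity by $L^1$ decay, so by Liouville $\psi\equiv 0$.

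\emph{Existence and main obstacle.} For existence I would run the argument in reverse: use the factorization to decouple the inhomogeneous Fourier identity into upper- and lower-half-plane pieces, isolate the causal piece, and Fourier-invert to obtain a function supported on $\mathbb{R}_+$ and lying in $L^1$ that satisfies \eqref{eq:wiener-hopf} by construction. The principal obstacle I anticipate is making the factorization $I + \widehat g = L_-L_+$ rigorous in the Wiener algebra from the sole hypotheses $g\in L^1$ and ``$g$ is the conditional law of a point process''. This requires not only positive-definiteness but also a zero partial-indices condition on $\det(I + \widehat g)$ along the real axis, so that the factors and their inverses live in the right function class. Establishing this condition from the probabilistic structure, rather than from an artificial analytic hypothesis on $g$, is where I expect the substantive work of the theorem to lie.
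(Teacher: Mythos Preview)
The paper does not contain a proof of this theorem at all: it merely states the result and cites Jaisson's thesis, p.~142. So there is no in-paper argument to compare your proposal against; the result is imported wholesale from an external reference.

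That said, your outline is the standard route to such a result and is almost certainly close in spirit to the cited proof. One comment on what you flag as the ``principal obstacle'': you are too pessimistic about the partial indices. Once you have passed to the Hermitian-symmetrized symbol $\mathrm{diag}(\Lambda)^{-1/2}\bigl(\mathrm{diag}(\Lambda)+\widehat C(\omega)\bigr)\mathrm{diag}(\Lambda)^{-1/2}$, which you correctly identify as the Bartlett spectral density and hence positive-definite on $\mathbb{R}$, the existence of a canonical Wiener--Hopf factorization with factors and inverses in the Wiener algebra is \emph{automatic}: this is precisely the matrix spectral factorization theorem (Wiener--Masani / Helson--Lowdenslager). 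Positive-definiteness of a Hermitian symbol in the Wiener algebra forces all partial indices to vanish; no separate ``zero partial-indices condition'' needs to be extracted from the probabilistic structure. Thus the obstacle you anticipate for existence dissolves once the uniqueness machinery is in place, and both halves of the argument run on the same factorization. The only genuine care needed is checking that the Bartlett spectrum is strictly positive-definite (not merely semi-definite), which follows from the diagonal contribution $\mathrm{diag}(\Lambda)$ coming from the atom at zero in the second-order structure of any simple point process with $\Lambda^i>0$.
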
 
Hence, it is reasonable to look for a solution of \eqref{eq:wiener-hopf} when $g(t) $ is estimated on empirical data, which is the case of interest here.

The linear Hawkes process \eqref{eq:hawkes_multi} allows for self- and cross-excitation but not for inhibition. In fact, the non-negativity of the kernels implies that the occurrence of an event never causes a decrease in the intensity\footnote{This also implies that in a linear Hawkes process framework there cannot be a minimum time separation between events as there is always a positive probability of a new event arrival.}. However, in real systems such as the order book we can expect the presence of inhibitory effects alongside exciting ones. 

A proper account for inhibition in a Hawkes processes framework requires to abandon the linear specification \eqref{eq:hawkes_multi} in favor of nonlinear formulations that ensure the positiveness of the intensity, while at the same time allowing the presence of negative valued kernels and thus inhibitory effects (\cite{bremaud1996stability,Sornette:2005js, Bowsher2007, Zheng2014}). This comes however at the  expenses of mathematical tractability. 

An interesting case for practical application, discussed in \cite{bremaud1996stability, ReynaudBouret:2010kw, hansen2015lasso} is
\begin{equation}
\label{eq:pos_part_hawkes}
\lambda(t) = \left( \mu + \int_{-\infty}^t \phi(t-s) \diff N_s \right)^+
\end{equation}
where $(x)^+ = x$ if $x\geq0$ and $(x)^+=0$ otherwise. This case is relevant because if $\mathbb{P} \left[  \mu + \int_{-\infty}^t \phi(t-s) \diff N_s < 0\right]$ is negligible, then it is almost equivalent to the linear case. If this requirement is met, then the non-parametric estimation procedures developed in \citet{bacry2014} and \citet{ReynaudBouret:2010kw} have been shown to lead to reliable results even in presence of moderate inhibition. On the contrary, the non-parametric approach developed in \cite{lewis2011nonparametric} leverages a probabilistic interpretation of the kernel and therefore leads always to non-negative estimates that rule out inhibition.

When we consider the nonparametric estimation method of Equation \ref{eq:wiener-hopf}, it is important to stress that the solution $\phi$ of the equation above is not guaranteed to be positive. Moreover it is possible to show that the negativity of the conditional law $g$ implies the negativity of the kernel $\phi$.  More precisely,

\begin{prop}
\label{prop:2}
Let $g(t)$ and $\phi(t)$ be two matrices of functions in $L_1$ that satisfies 
\begin{equation*}
g(t)=\phi(t)+\phi \ast g(t)\;\;\;\forall \; t>0
\end{equation*}
elementwise. Moreover, let the following assumptions hold:
\begin{enumerate}
\item $g(t)$ is bounded;
\item the spectral radius of the matrix $||\phi||_1=\lbrace \Vert \phi^{ij}(t)\Vert_1\rbrace$ is less than one;
\item $\phi^{ij}(t)$ is a causal function, i.e. $\phi^{ij}(t)=0$ $\forall \; t<0,\; \forall i,j$;
\end{enumerate}

Then, if at least one entry of every column or row of $g$ take on negative values for some $t>0$, then also some elements of $\phi$ take on negative values.
\end{prop}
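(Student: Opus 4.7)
The plan is to prove the contrapositive: under the three assumptions, if every $\phi^{ij}(t) \geq 0$ on $\mathbb{R}^+$, then every $g^{ij}(t) \geq 0$ on $\mathbb{R}^+$. Once this is done, the hypothesis that some column or row of $g$ contains a negative entry (indeed, any single negative entry of $g$ suffices) immediately forces some $\phi^{k\ell}$ to take negative values somewhere. The engine of the proof is a Neumann-series representation of $g$ in terms of $\phi$, obtained by iterating the Wiener-Hopf identity.

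Concretely, substituting $g = \phi + \phi \ast g$ into itself $n$ times yields, by a one-line induction,
\begin{equation*}
g(t) \;=\; \sum_{k=1}^{n} \phi^{\ast k}(t) \;+\; \bigl(\phi^{\ast (n+1)} \ast g\bigr)(t), \qquad t>0.
\end{equation*}
The entrywise $L^1$ inequality $\Vert \phi^{\ast k}\Vert_1 \le (\Vert \phi\Vert_1)^k$ (which follows from Young's inequality applied componentwise and the fact that matrix convolution only sums and convolves non-negative magnitudes) combined with assumption (ii), namely that the spectral radius of the non-negative matrix $\Vert\phi\Vert_1$ is strictly less than $1$, gives $(\Vert\phi\Vert_1)^k\to 0$ entrywise. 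Hence $\sum_k \phi^{\ast k}$ converges in $L^1$. Using assumption (i) (boundedness of $g$) together with Young's inequality, the remainder $\phi^{\ast(n+1)}\ast g$ also vanishes in $L^1$ as $n\to\infty$. This yields the representation
\begin{equation*}
g(t)\;=\;\sum_{k=1}^{\infty}\phi^{\ast k}(t),
\end{equation*}
with convergence in $L^1$ on $\mathbb{R}^+$ (causality from assumption (iii) ensures all convolutions are supported on $\mathbb{R}^+$ and that the iteration is well-defined).

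Now suppose $\phi^{ij}(t)\ge 0$ for all $i,j$ and all $t\ge 0$. Then each iterated convolution $\phi^{\ast k}$ is entrywise non-negative, because the defining formula $[\phi\ast\psi]^{ij}(t)=\sum_m\int_0^t \phi^{im}(t-s)\psi^{mj}(s)\,ds$ manifestly preserves non-negativity. So $g$, being the sum of a convergent series of entrywise non-negative functions, is itself entrywise non-negative on $\mathbb{R}^+$. The contrapositive then gives the proposition: any negative value of $g$ at any $t>0$ propagates back to force a negative value of some $\phi^{k\ell}$. The main technical delicacy is the remainder estimate $\phi^{\ast(n+1)}\ast g\to 0$ under only $L^1$-plus-boundedness of $g$; everything else is essentially bookkeeping. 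The stronger "every row or column" hypothesis on $g$ is not actually exploited — a single negative entry is enough — so one could read the statement as emphasizing the typical situation rather than the weakest possible assumption.
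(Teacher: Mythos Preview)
Your Neumann-series route is genuinely different from the paper's. The paper never iterates the equation; it argues by contradiction directly at an extremal point. Assuming $\phi\ge0$, it fixes a column $j$ of $g$ containing a negative entry, sets $m_j=\min_k\inf_{t}g^{kj}(t)<0$, and lets $i^*$ be a row where this infimum is attained. Splitting the convolution integral in $g^{i^*j}(t)=\phi^{i^*j}(t)+\sum_k\int\phi^{i^*k}(s)\,g^{kj}(t-s)\,ds$ according to the sign of $g^{kj}(t-s)$ and bounding the negative part from below by $m_j$ yields, at the extremal $t$, the inequality $\sum_k\|\phi^{i^*k}\|_1\ge1$. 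The ``every column (or row)'' hypothesis is then used to turn this into a lower bound on the \emph{minimum} row (or column) sum of the nonnegative matrix $\|\phi\|_1$, after which the Perron--Frobenius bound $\rho(A)\ge\min_i\sum_jA_{ij}$ forces $\rho(\|\phi\|_1)\ge1$, contradicting assumption~2.

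Your series argument is cleaner, but there is a real issue with the iteration step that you dispatch with ``causality \ldots\ ensures the iteration is well-defined''. The Wiener--Hopf identity is asserted only for $t>0$, yet $(\phi\ast g)(t)=\int_0^\infty\phi(s)\,g(t-s)\,ds$ evaluates $g$ at $t-s<0$ for every $s>t$. To substitute $g=\phi+\phi\ast g$ under the integral you need the identity at those negative arguments as well. Causality of $\phi$ makes each $\phi^{\ast k}$ causal, but it does \emph{not} make $\phi\ast g$ depend only on $g|_{\mathbb R^+}$. If $g$ were itself causal (so the equation is a genuine Volterra equation), your induction goes through and indeed delivers the stronger conclusion that a single negative entry of $g$ suffices. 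Without that, however, solving the Volterra part shows that $g|_{\mathbb R^+}=\sum_{k\ge1}\phi^{\ast k}+\bigl(\sum_{k\ge1}\phi^{\ast k}\bigr)\ast g|_{(-\infty,0)}$, and the second term can be negative even when $\phi\ge0$ (take $\phi\ge0$ scalar with $\|\phi\|_1<1$ and prescribe $g$ large and negative on $(-1,0)$). So the representation $g=\sum_{k\ge1}\phi^{\ast k}$ does not hold in the generality stated, and your remark that the ``every column or row'' hypothesis is ``not actually exploited'' is unwarranted: the paper's pointwise-extremum argument is precisely what sidesteps the contribution of $g$ from the negative half-line that breaks your global series representation.
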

\begin{proof}
See Appendix \ref{sec:proof}.
\end{proof}

Assumptions 1-3 basically require that $\phi(t)$ possesses the characteristics for being the kernel of a stationary Hawkes process and $g(t)$ those for being a conditional law. Note in particular that condition 2 was shown by \cite{bremaud1996stability} to hold as a stability condition even when $\lambda(t)$ is a nonlinear positive Lipschitz function of $\phi \ast \diff N$ and where the kernels can take negative values.

Whenever some inhibition effect is present, the conditional laws \eqref{eq:claw_def} will take negative values. In fact, we would have that the expected arrival rate of type $i$ event given that event $j$ has occurred is lower than the unconditional rate $\Lambda_i$. Therefore, the above proposition suggests that if some inhibition effect is present, then with our procedure we should see some (at least partially) negative kernel. We will actually encounter some of them, particularly in section \ref{sec:fullob}.

We conclude this section by giving an outline of the estimation procedure we employ. We refer to the original papers \cite{bacry2014} and \cite{bacry2014estimation} for further details.

\begin{algo}{Non-parametric estimation of Hawkes kernels}
\begin{enumerate}
\item For each day of trading, the times and volumes of orders are extracted.
\item Each event is assigned to a component of the multivariate process according to its volume.
\item The conditional law \eqref{eq:claw_def} is estimated using empirical means. For this estimation a linear-log binning is used, with edges:
$$ [0,\delta_{\mbox{\tiny{lin}}},2\delta_{\mbox{\tiny{lin}}},3\delta_{\mbox{\tiny{lin}}},h_{\mbox{\tiny{min}}}, e^{\delta_{\mbox{\tiny{log}}}},e^{2\delta_{\mbox{\tiny{log}}}},e^{3\delta_{\mbox{\tiny{log}}}},\dots,h_{\mbox{\tiny{max}}} ]$$
where $\delta_{\mbox{\tiny{lin}}}$, $h_{\mbox{\tiny{min}}}$ and $\delta_{\mbox{\tiny{log}}}$ are user-defined parameters. In this work we fixed $h_{\mbox{\tiny{min}}}=10^{-3}s$ and $h_{\mbox{\tiny{max}}} = 2\cdot 10^4s$.  $\delta_{\mbox{\tiny{lin}}}$ and $\delta_{\mbox{\tiny{log}}}$ are chosen so to have 50 bins in the linear part and 1500 bins in the log-spaced part.

\item The kernels are then estimated by solving the integral equation \eqref{eq:wiener-hopf}. Again a lin-log quadrature scheme similar to the one above is used. All kernels are estimated on the domain $[0, x_{\mbox{\tiny{max}}}]$ and the quadrature points are 
$$ [0,\epsilon_{\mbox{\tiny{lin}}},2\epsilon_{\mbox{\tiny{lin}}},3\epsilon_{\mbox{\tiny{lin}}},x_{\mbox{\tiny{min}}}, e^{\epsilon_{\mbox{\tiny{log}}}},e^{2\epsilon_{\mbox{\tiny{log}}}},e^{3\epsilon_{\mbox{\tiny{log}}}},\dots,x_{\mbox{\tiny{max}}} ]$$
again, $\epsilon_{\mbox{\tiny{lin}}}$, $x_{\mbox{\tiny{min}}}$, $\epsilon_{\mbox{\tiny{log}}}$ and $x_{\mbox{\tiny{max}}}$ are user-defined parameters.
We choose $x_{\mbox{\tiny{min}}}=0.5\cdot 10^{-3}s$ and $x_{\mbox{\tiny{max}}} = 0.5 s$.  $\epsilon_{\mbox{\tiny{lin}}}$ and $\epsilon_{\mbox{\tiny{log}}}$ are chosen so to have 80 bins in the linear part and 80 bins in the log-spaced part.

\item Finally the baseline intensities $\mu_i$ are recovered by solving for $\mu$ the stationarity condition \eqref{eq:stat_rel}.
\end{enumerate}

\end{algo} 
Let us point out that some care is needed in the choice of the grid used in point 3. of the algorithm above for the estimation of the conditional law (as shown in \cite{bacry2014estimation}) since, in typical application to financial data, its mass is found to span many orders of magnitude in $t$.

\section{Data and main statistics}
\label{sec:data}
In this Section we describe our dataset and we provide an empirical analysis of its main characteristics.

\subsection{Data}
In this paper we use level-I order book data provided by QuantHouse EUROPE/ASIA (http://www.quanthouse.com) for two future contracts, namely the German Bund Future and the DAX Future\footnote{This is the same dataset used in \cite{bacry2014estimation}}. The data span the period from July 2013 to November 2014.

In the dataset, a timestamp is added by the market every time a change in the first level of the order book is registered and outstanding quantities at the best quotes as well as the corresponding prices are specified. A specific timestamp is also added for every trade together with the size, price and side of the order book at which it was executed. Timestamps are provided by the market itself with microsecond precision.

We use the quantity and price information available in the dataset to reconstruct the sequence and type of orders. We distinguish between limit orders, cancellations and trades. We consider only orders at the best quotes (first level of the order book). It is important to remark that we treat multiple orders that happen at the same time and on the same side of the order book as a single event (for instance a market order that hits two limit orders present in the book at the same price is regarded as a single trade). It is still possible to have two simultaneous events on opposite side of the book, i.e. one at the ask and one at the bid. This occurrence is however very rare thanks to the fine time resolution of the dataset, resulting in a ratio of less than 0.2\% of simultaneous events to the total number of events.

\subsection{Empirical properties of durations and volumes}

Figure \ref{fig:dur_all_bund} shows the histograms of all inter-event times (right panel) and of inter-trades times alone (left panel). We never observe events closer than about $10\mu $s, and this may be a minimum technical delay due to data processing. For trades alone, the minimum observed time distance is larger, at about $50 \mu$s, possibly due to longer processing time needed in case of transactions. 

In the duration distributions, we note a major peak around $30 \mu$s when all events are considered and around $70-100\mu$s when only trades are taken into account. There is a second noticeable peak between $200$ and $300\mu s$ for all events and around $300-400\mu$s for trades. Finally, even quite far in the tail, small peaks are visible. Those might be the results of order splitting and automatic trading.
\begin{center}
\begin{figure}[tb]
\centering
\includegraphics[width=0.47\textwidth]{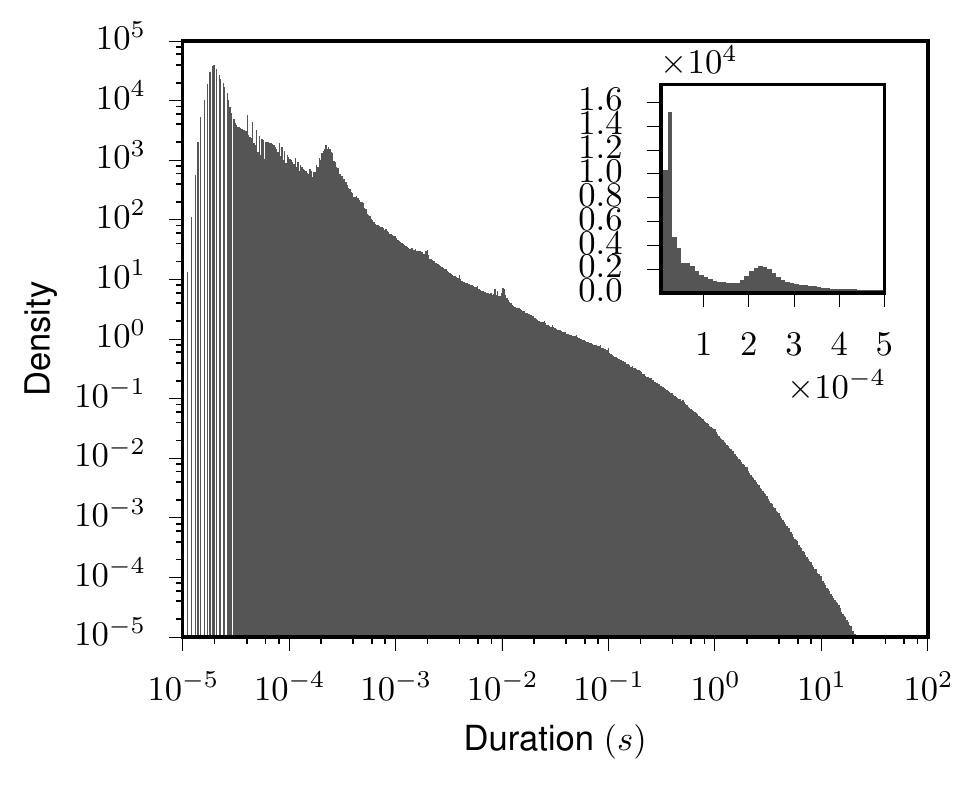}
\includegraphics[width=0.47\textwidth]{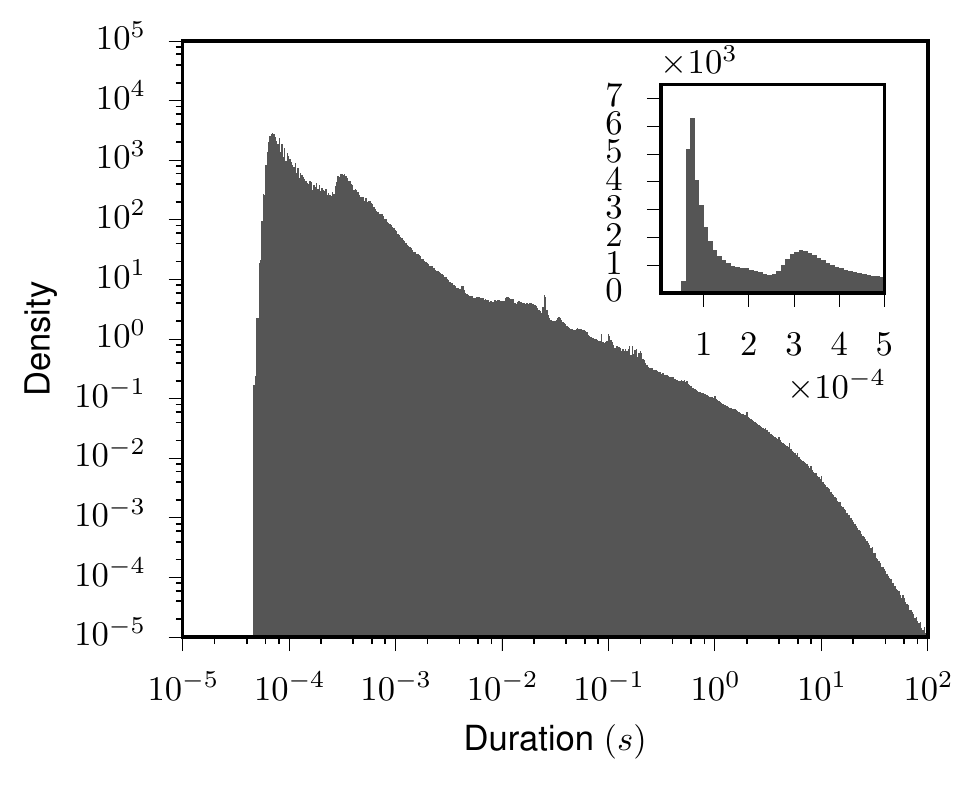}
\caption{Histograms of the inter-event times. All events (left) and only trades (right). Main plots use log-spaced bins, while the inset show equally spaced histograms for short durations. Data refer to the Bund future, no significant difference is observed for DAX future data.}
\label{fig:dur_all_bund}
\end{figure}
\end{center}
Figure \ref{fig:vol_dist} shows the empirical distribution of the trade sizes, measured in number of contracts, for the DAX and Bund futures. A positive sign indicates buyer initiated transaction while a negative sign stands for seller initiated ones. For both assets we note sharp peaks in correspondence of "round" order sizes such as 1, 50, 100. In the DAX future market, transactions of size above 100 contracts are very infrequent, while in the Bund Future market volumes of 1000 contracts are not uncommon. This difference is probably due to the fact that the Bund belongs to the class of the so called "large-tick" assets, while the DAX future is considered a "small tick" one. That is, in the former case, the spread is almost always equal to one tick, thus resulting in fewer price changes and larger available volumes at the best quotes \citep{eisler2012price}. 
\begin{center}
\begin{figure}[tbh]
\centering
\includegraphics[width=0.47\textwidth]{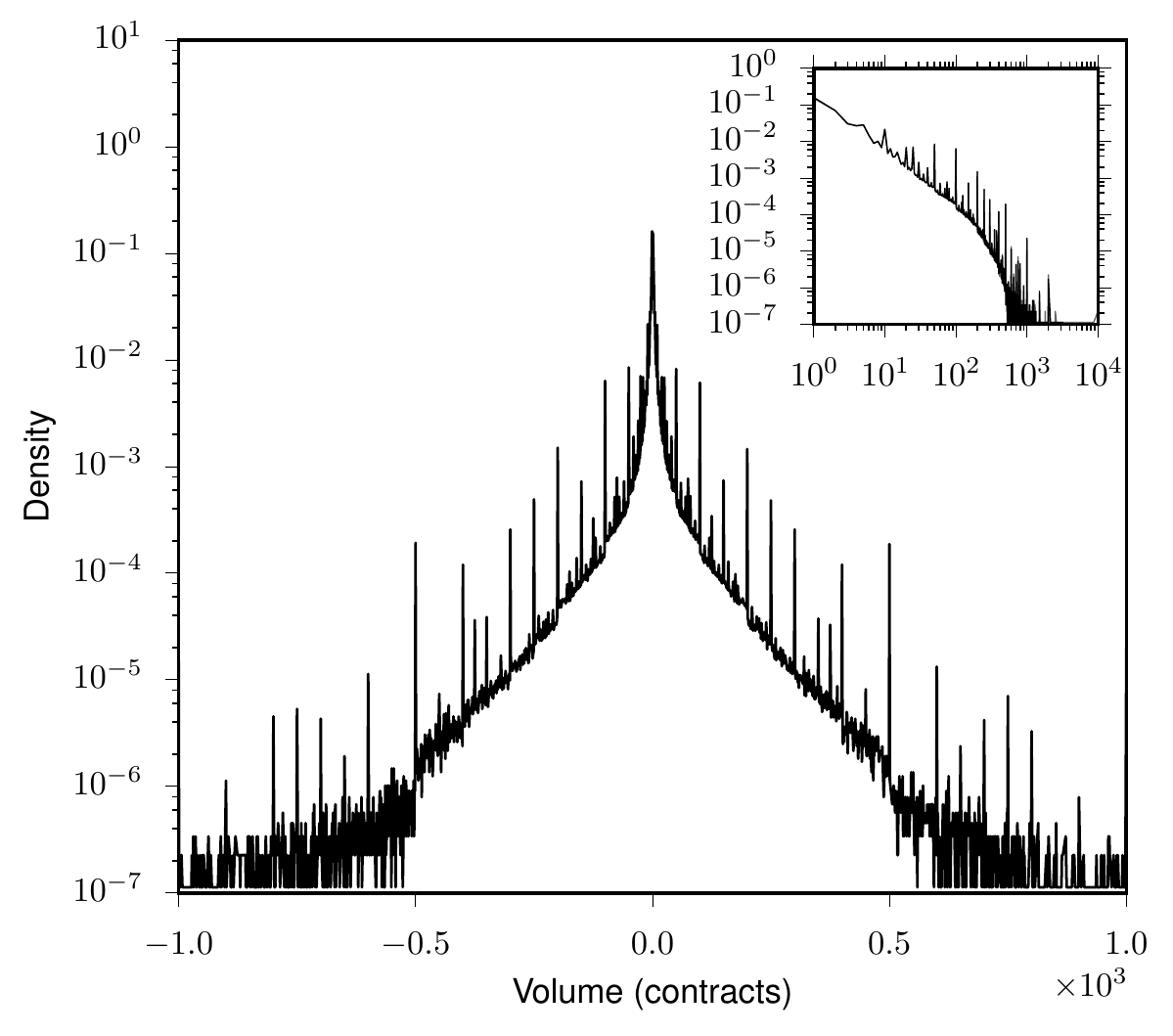}
\includegraphics[width=0.47\textwidth]{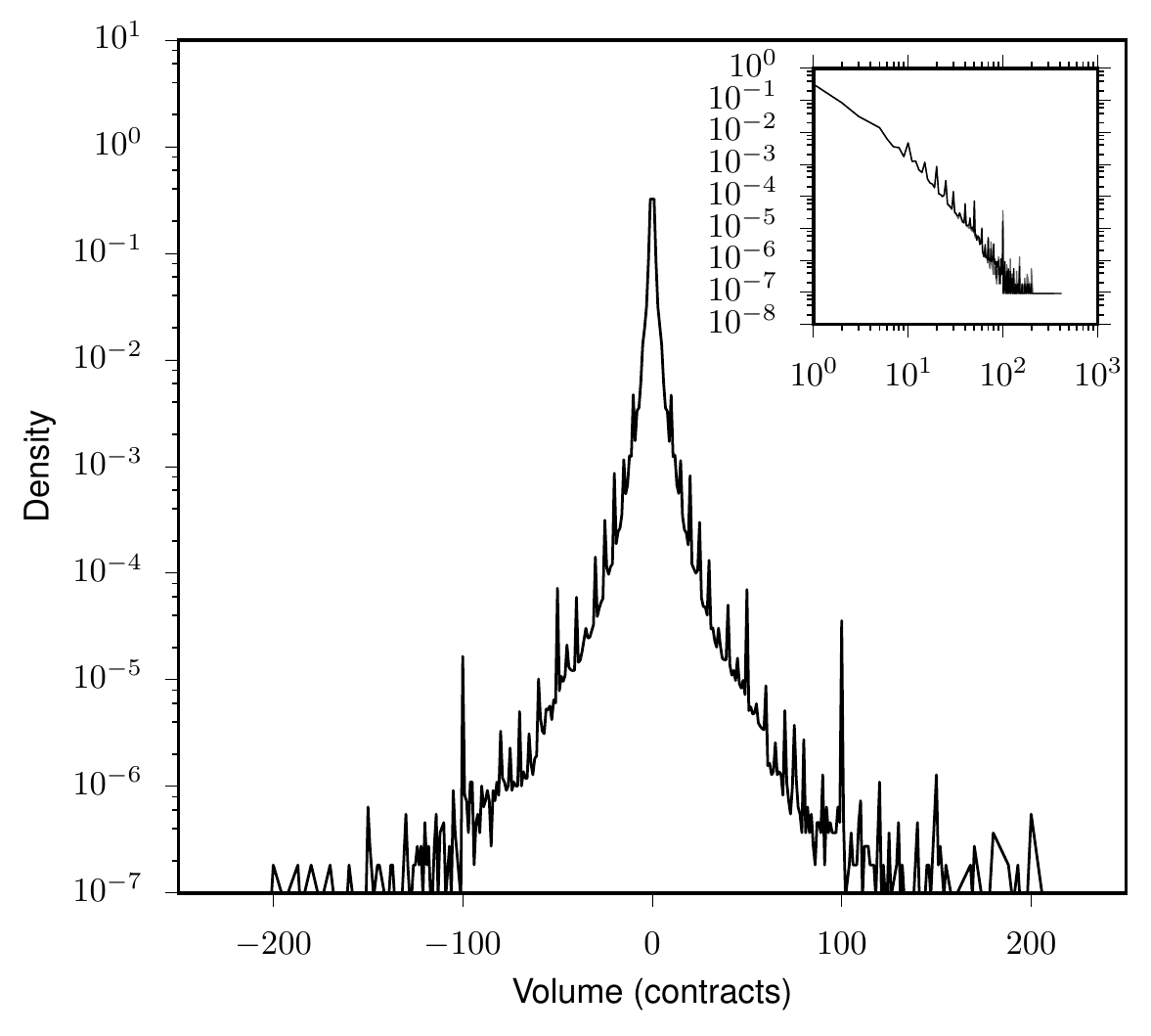}
\caption{Empirical distribution of the trade volumes for the Bund (left) and DAX (right) futures. The inset shows the distribution in log-log scale where buy and sell are aggregated.}
\label{fig:vol_dist}
\end{figure}
\end{center}

We also examined serial correlations in the trade sizes both in trade time and in real (continuous time). We find that autocorrelation in trade time is very small, and that in real time can be almost entirely explained by correlation in the number of transactions. 
\subsection{High frequency characteristics of Eurex market}
\label{subs:300mus}
 The $300 \mu s$ peak observed in the duration distribution is likely related to technical characteristics of the Eurex market. In particular, on the exchange website is reported that

\textit{"For futures orders Eurex Exchange currently offers customers (daily) average roundtrip times as low as 0.2-0.35 milliseconds in co-location"}\footnote{From \url{http://www.eurexchange.com/exchange-en/technology/co-location-services} visited on 7 May 2015. }.

This means that the minimum time it takes for an agent to react to an event on the market is around $300\mu s$, and explains the bump we see in the duration distribution. As a consequence, no direct causal relationship can hold between events that are closer than this timescale, except if they are the result of the same agent's actions. We need to keep this fact in mind when analyzing the result of our study in the next sections.


\section{Mapping Traded volume to multivariate Hawkes processes}
\label{sec:unsigned_vol}
We first apply the multivariate Hawkes framework to the series of trades, thus disregarding other types of order book events. We map trades of different sizes to separate the components of a multivariate Hawkes point process. To keep the total dimension of the resulting multivariate process manageable, we divide the range of order sizes into a small number of bins (six). Once each transaction has been assigned to a component of the multivariate Hawkes process, we use the non-parametric estimation method described before to determine the kernels and baseline intensities of the Hawkes process.
The whole trading day was used (from 8am to 10pm) without taking into account intraday seasonality (the results do not change dramatically by restricting the analysis to the most active hours see Appendix \ref{app:b}). 

\subsection{Unsigned Trades}

We begin by considering unsigned trades, i.e. we do not distinguish between buyer initiated trades and seller initiated ones. We fix as many bins for the unsigned volume as component of our multivariate Hawkes model. Then we assign each transaction to a component based on the bin in which its volume falls.
In Table \ref{tab:bin6D} we report the bin choice as well as the average number of events per day in each bin for the Bund and DAX futures respectively. We note the overwhelming prevalence of size one trades in the case of the DAX future, where they represent more than 60\% of the total. 
%
%
%
\begin{table}[htb]
\centering
\begin{tabular}{c|ccc|ccc}
\toprule[1pt]
&  \multicolumn{1}{p{2cm}}{\centering Volume \\ \small{(contracts)}}
& Avg. N  & Fraction \small{(\%)} &  \multicolumn{1}{p{2cm}}{\centering Volume \\ \small{(contracts)}}
& Avg. N & Fraction \small{(\%)} \\ 
\midrule
$B_1$ &1 & 11049 & 31.3&1 & 28140 & 64.5\\ 
$B_2$ & 2& 4979 & 14.1 & 2&7308 & 16.8\\ 
$B_3$ &3& 2169 & 6.2&3& 2753 & 6.3\\ 
$B_4$ & $(3,7]$ & 5571 &15.8 & $(3,5]$ &2959 & 6.8\\  
$B_5$ & $(7,20]$& 5619 & 15.9& $(5,10]$&1700 & 3.9 \\
$B_6$ & $(20,\infty)$ & 5876 & 16.7& $(10,\infty)$&735 & 1.7\\ 
\bottomrule[1pt]
\end{tabular} 
\caption{Bund (left) and DAX (right) Future: Binning scheme and average number $N$ of events per day in each bin in the six dimensional case}
\label{tab:bin6D}
\end{table}
%

\begin{center}
\begin{figure}[tb]
\centering
\includegraphics[width=0.49\textwidth]{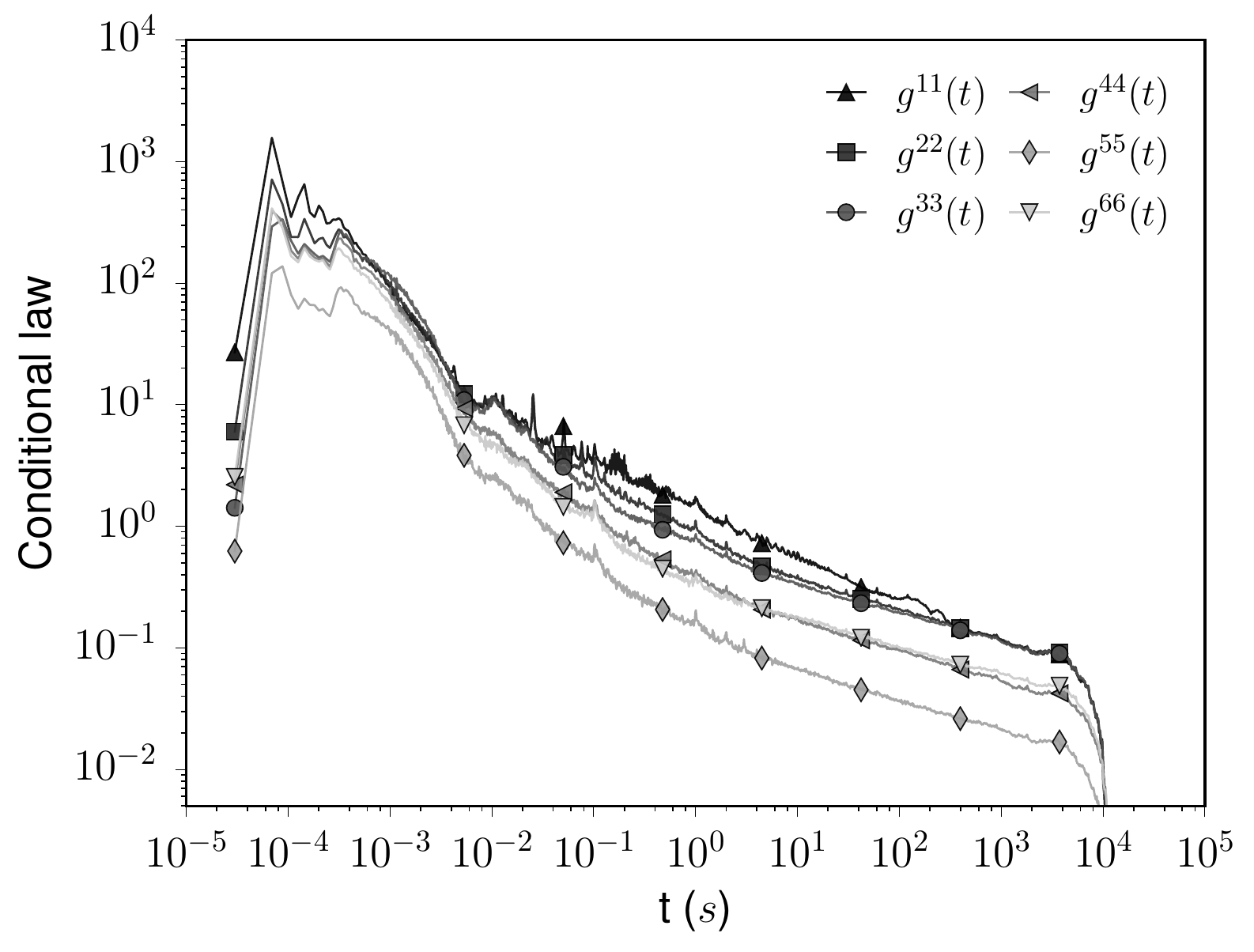}
\includegraphics[width=0.49\textwidth]{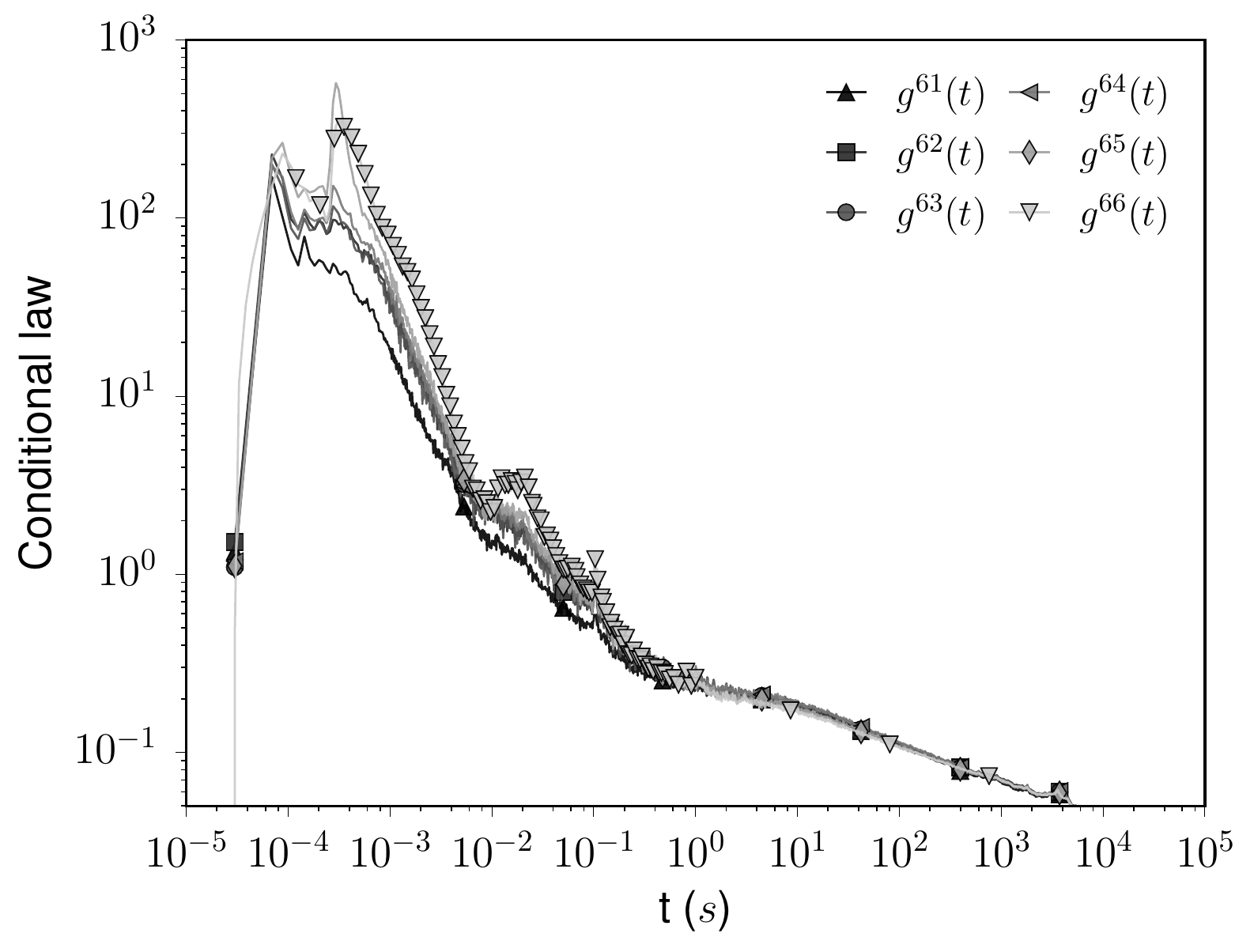}
\includegraphics[width=0.49\textwidth]{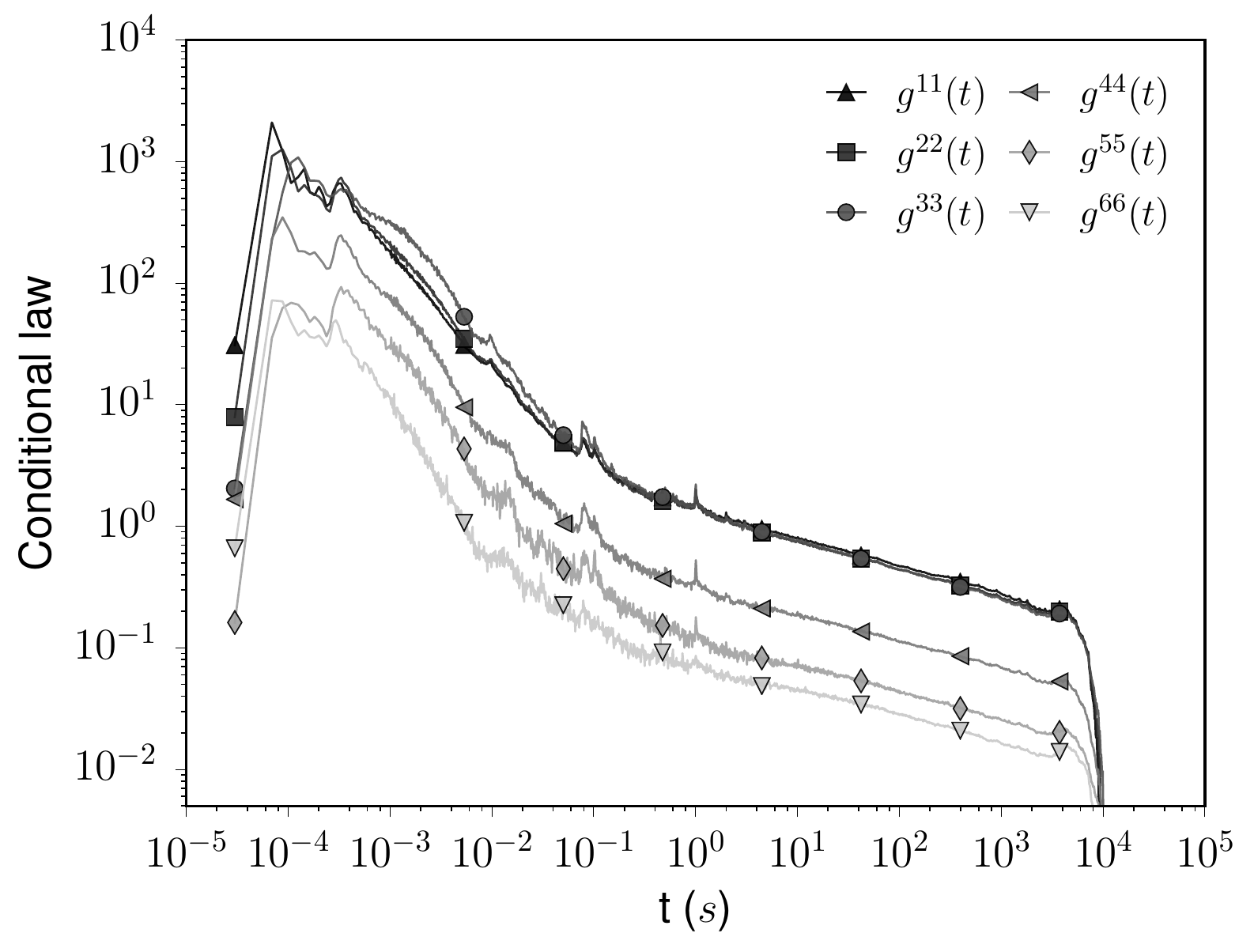}
\includegraphics[width=0.49\textwidth]{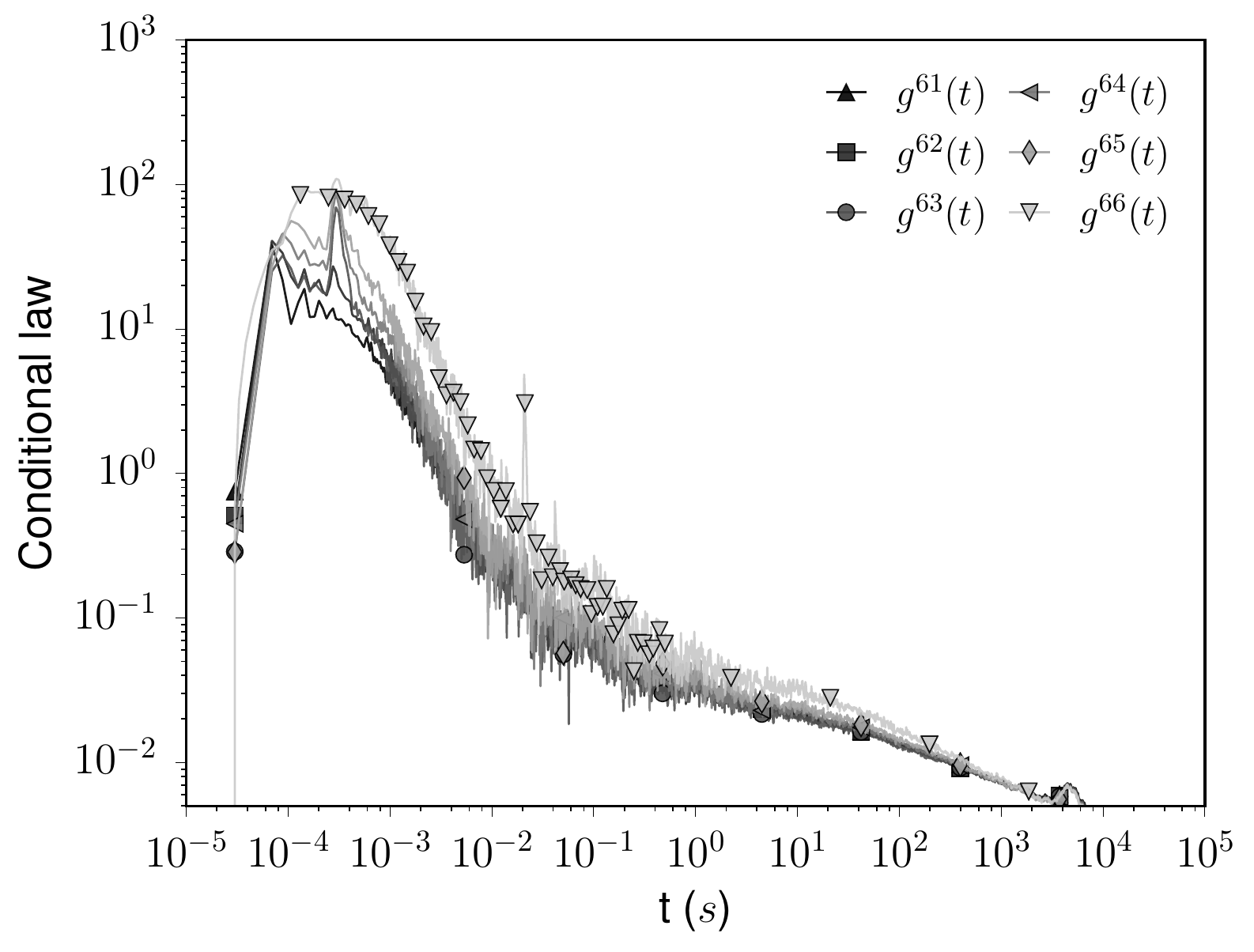}
\caption{Conditional law estimates. Left column: diagonal terms. Right: the row corresponding to largest volumes. Figures on top refers to the Bund future, those at the bottom to the DAX future.}
\label{fig:6D_claw}
\end{figure}
\end{center}

\paragraph{Conditional laws}
The first step of the procedure is the estimation of the conditional laws $g^{ij}(t)$. In the left panels of Figure \ref{fig:6D_claw} we plot the diagonal conditional laws estimates $g^{ii}(t)$ for the Bund and DAX futures. 
These functions attain their largest values at very short lags. In particular, we note two main peaks, the first around 100 $\mu$s and the second around 300 $\mu$s. We also note that this second peak tends to become more relevant for larger volumes. 
Another interesting feature that emerges is the presence of several sharp peaks at "round" time values (0.1 $s$, 1 $s$, ...). These peaks are more evident for the smallest volumes and they are likely the result of automatic trading and order splitting. For lags larger than about one second the diagonal conditional laws show a power law decay with exponents smaller than one.
The right panels of the same figures show instead the terms $g^{6j}(t)$, i.e. those that measure the effect of the different volumes on large trades. At short lags we again note the two major peaks described before. Here we note how the second peak is more pronounced when large-large trades are involved. We will comment more on this when examining the kernels.

\begin{center}
\begin{figure}[tb]
\centering
\includegraphics[width=0.49\textwidth]{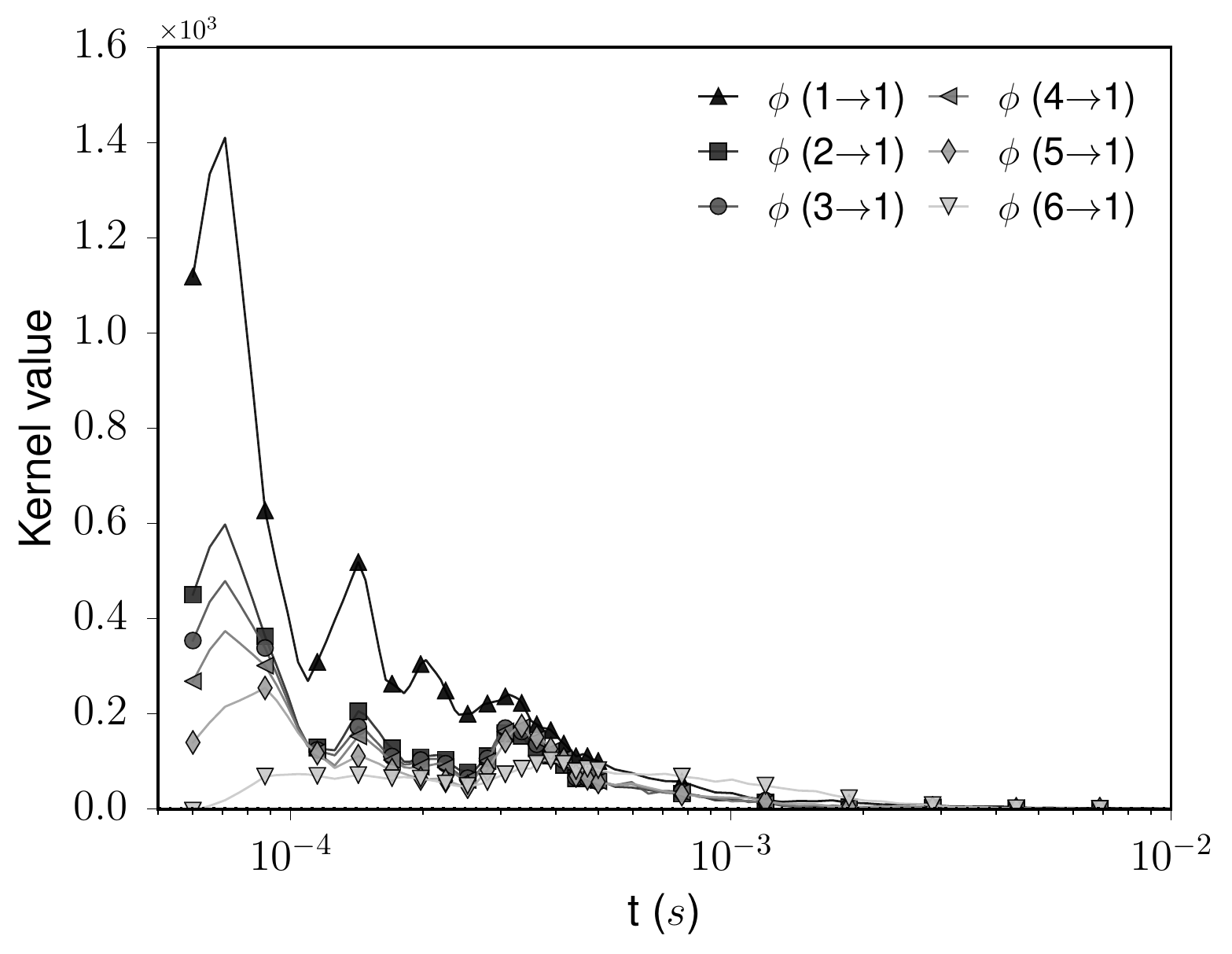}
\includegraphics[width=0.49\textwidth]{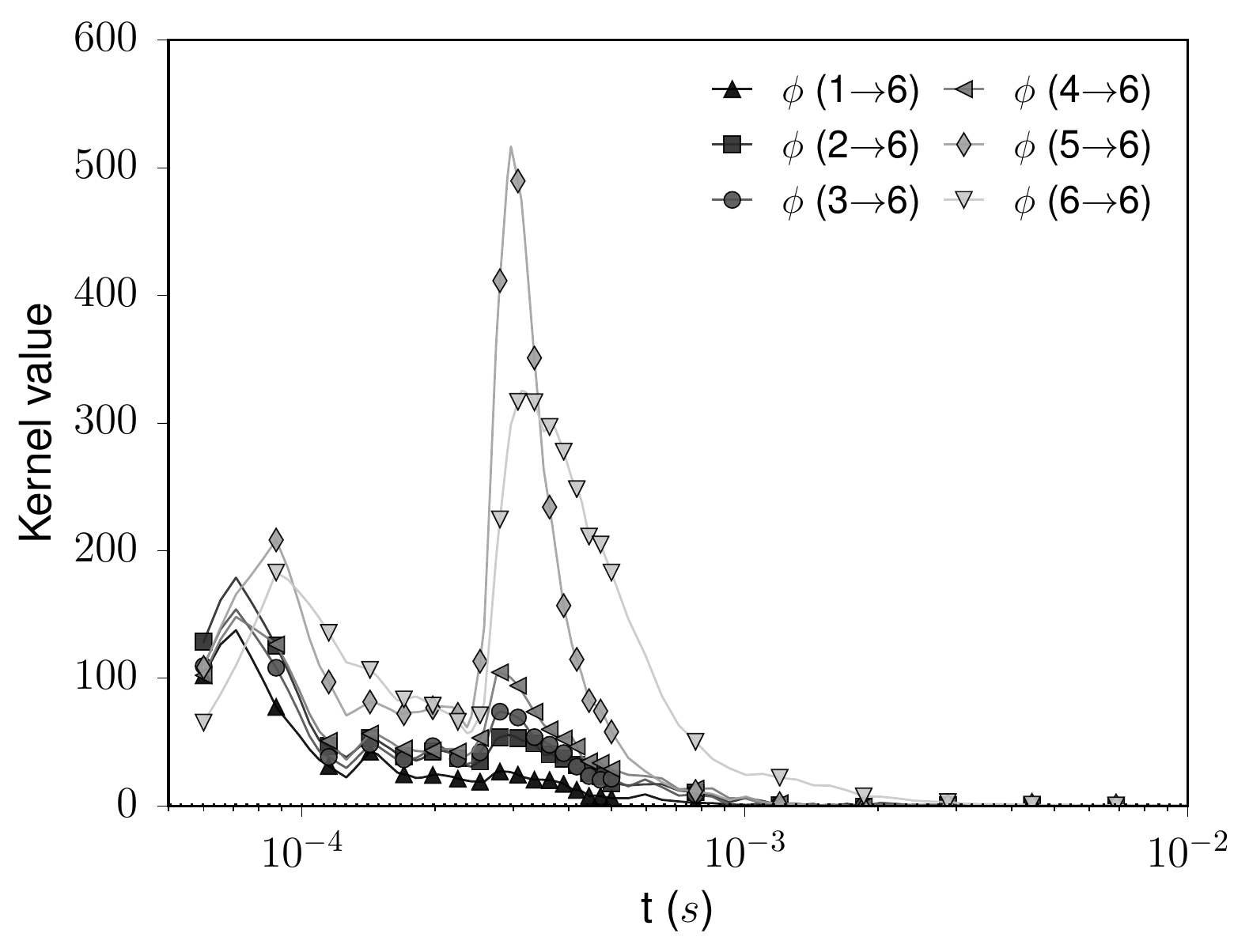}
\includegraphics[width=0.49\textwidth]{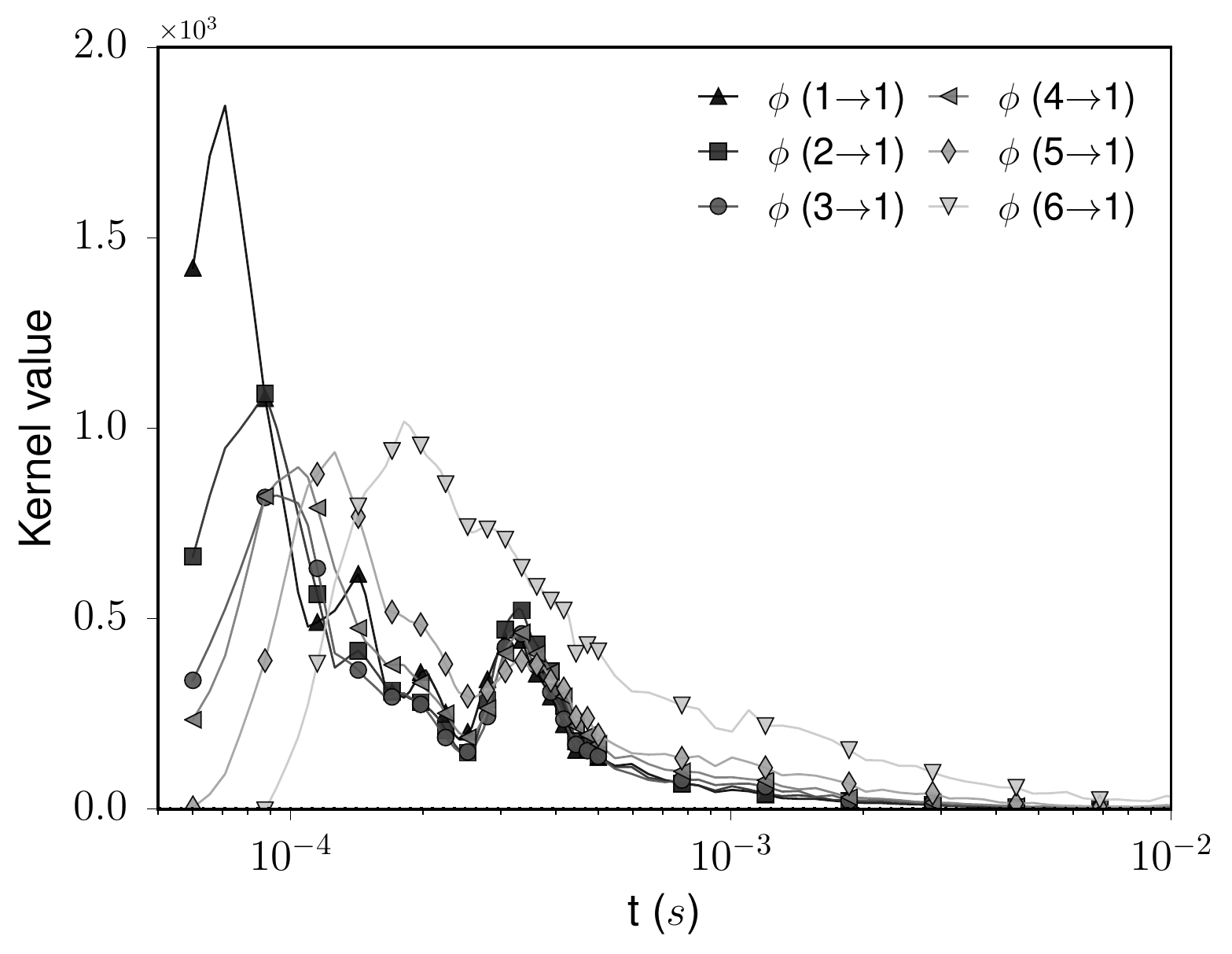}
\includegraphics[width=0.49\textwidth]{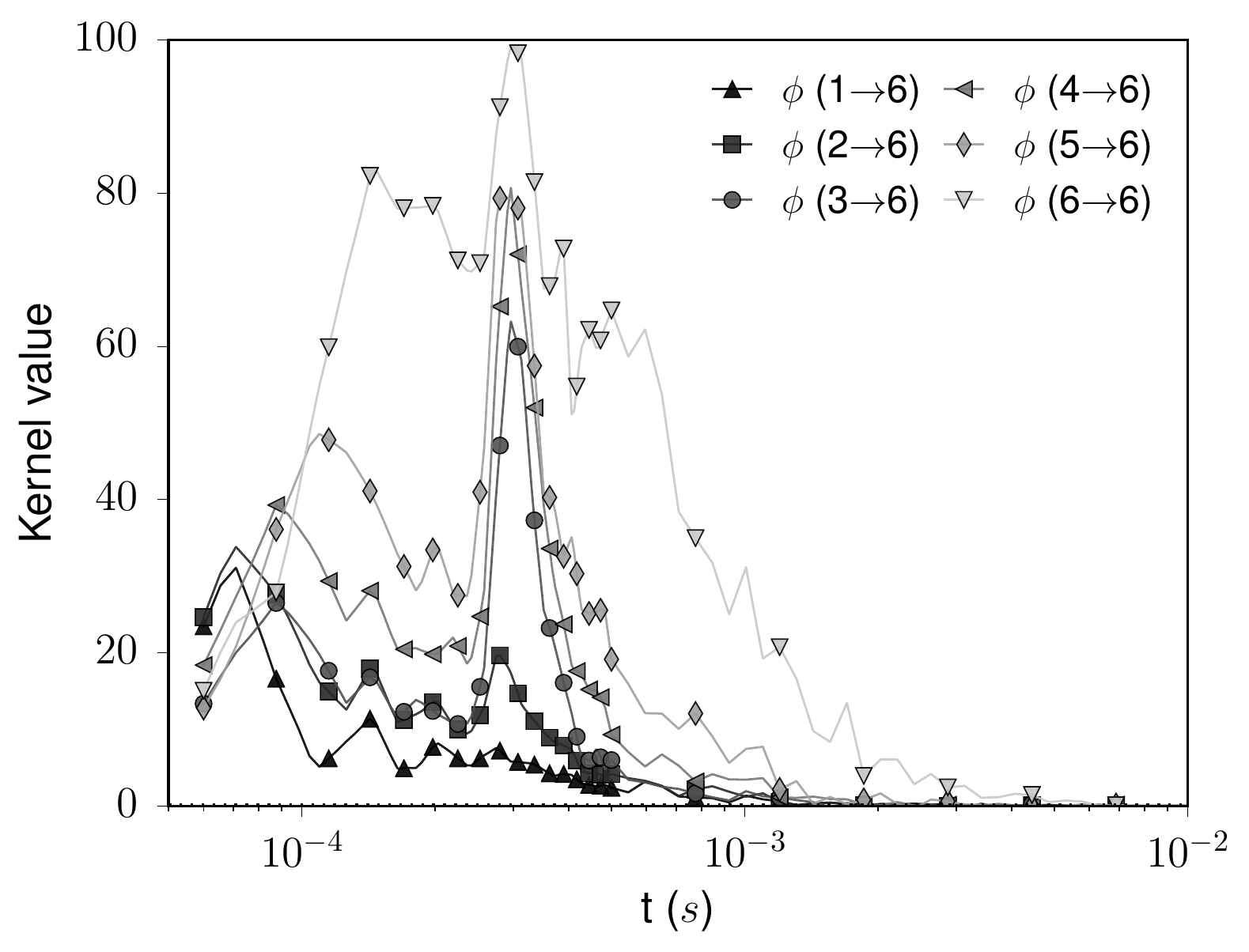}
\caption{Kernel estimates. Figures at the top refers to the Bund, while those at the bottom to the DAX. Left hand side figures: influence on smallest volume. Right hand side: the row corresponding to largest volumes.}
\label{fig:ker_lin}
\end{figure}
\end{center}

\paragraph{Kernels shape} Once the conditional laws have been estimated, we solve the Wiener-Hopf system \eqref{eq:wiener-hopf} for the kernels. A total of 36 kernels are estimated non parametrically and each function $\phi^{ij}(t)$ is estimated for values of $t$ up to $0.5$s. Beyond this value the estimation becomes very noisy. 
Given their number, in examining them, we focus on those contributing to the intensity of the smallest and largest bins. In Figure \ref{fig:ker_lin} we plot the functions $\phi^{1j}$ and $\phi^{6j}$ for the Bund and the DAX futures. To begin with, we remark that taking the volume of the trades into account does add new information. Indeed, the shapes of the kernels involving different sizes are markedly distinct. This has the important consequence that models where the time dependence is separated from size dependence appear to be inadequate.

We now examine the most relevant facts that emerge from the analysis of Figure \ref{fig:ker_lin}. First, we note that the influence of large trades is more intense and more persistent at longer time scales. It thus takes more time for the market to forget about a large trade than it takes for a small one. This can be linked to the fact that the execution of a large trade is more informative for the rest of the market than that of a small one. This is even more relevant in the DAX case where large trades are particularly rare. 

Second, we observe sharp peaks located around $300\mu s$ in the right figures, i.e. those depicting the influence over large trades. These peaks are visible also in the left plots, albeit they are not the main ones there. In light of Section \ref{subs:300mus}, we interpret these peaks as corresponding to the market reaction to some trader's action. This is consistent with the fact that the $300\mu s$ peaks are more significant when large trades are involved. Indeed, a single agent does not typically execute several large order one after the other, and we know that at least the reaction time is needed for an interaction.

The large peaks found on the left figures at $100 \mu s$ are instead likely the result of order splitting, i.e. the same trader executing several consecutive orders, or of traders following the same signals. Actually, we also found the peak at $100\mu s$ to be stronger in small size diagonal term $\phi^{11}$, $\phi^{22}$, which further corroborate the role played by order splitting as traders do not change the size too much when splitting.

\paragraph{Kernels norms} To complete the analysis of the kernel matrix, we plot in Figure \ref{fig:dax_norm} the norms $n_{ij}=\int_0^{\infty} \phi^{ij}(t) \diff t$ of the kernels together with the rescaled norms $\tilde{n}^{ij}=\frac{\Lambda_j}{\Lambda_i} n^{ij} = 1 -\frac{\mu_i}{\Lambda_i}$. It is important to underline the different meaning of these two quantities.
The norm $n^{ij}$ is the average number of events of type $i$ triggered by \emph{one} event of type $j$.Thus, the larger this number is, the bigger is the influence of events of type $j$ on events of type $i$. The rescaled norms represent instead the fraction of the average intensity $\Lambda_i$ attributable to excitation from component $j$. The former is useful to understand what component $i$ is more affected by the arrival of an event $j$ (reading by column) and also how many events of type $i$ are expected after an event of type $j$ (reading by row). The latter instead tells us how much of the activity of component $i$ is explained by component $j$ (reading by row). The comparison of the two matrices is particularly relevant when the components have very heterogeneous average intensities.

In the case of the Bund, the two pictures appear quite similar since, by construction, the average intensities are almost equal across all the components. We observe that self-excitation is preponderant, followed in importance by the excitation from large volumes. The components of the DAX model have instead very different average intensities, with the first component largely dominating. It appears that for this small-tick asset the unusual occurrence of a large trade has a dramatic impact on the activity. However, the scarcity of such large orders is such that when we consider the fraction of each intensity attributable to excitation from other components, size one trades clearly make up the largest amount.

\paragraph{Baseline intensities} Lastly, we examine the contribution of the baseline intensities $\mu_i$. In Table \ref{tab:binR6D} we report the ratio between the exogenous intensity $\mu_i$ and the average intensity $\Lambda_i$ for each bin. Since the kernels are estimated only up to $t=0.5 s$, the norms can be underestimated, because the contribution from the tail is disregarded. With this in mind, for both assets it appears nevertheless that large trades are more exogenously driven than the others. The exogenous fractions $\mu_i/\Lambda_i$ for the DAX are higher than those obtained for the Bund, with all but size one trades having exogenous components accounting for 40-50\% of the average intensity. 

\begin{center}
\begin{figure}[tb]
\centering
\includegraphics[width=0.49\textwidth]{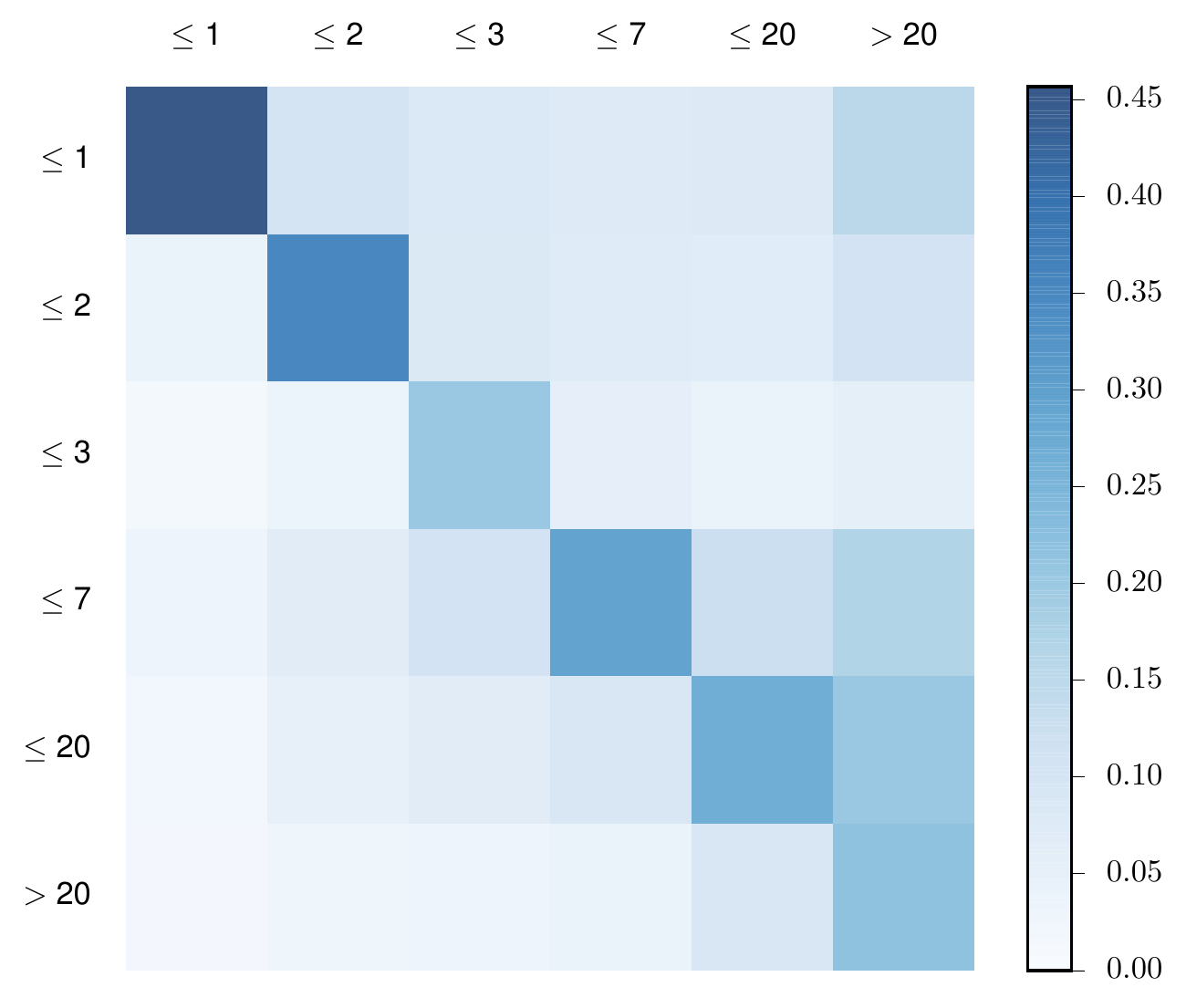}
\includegraphics[width=0.49\textwidth]{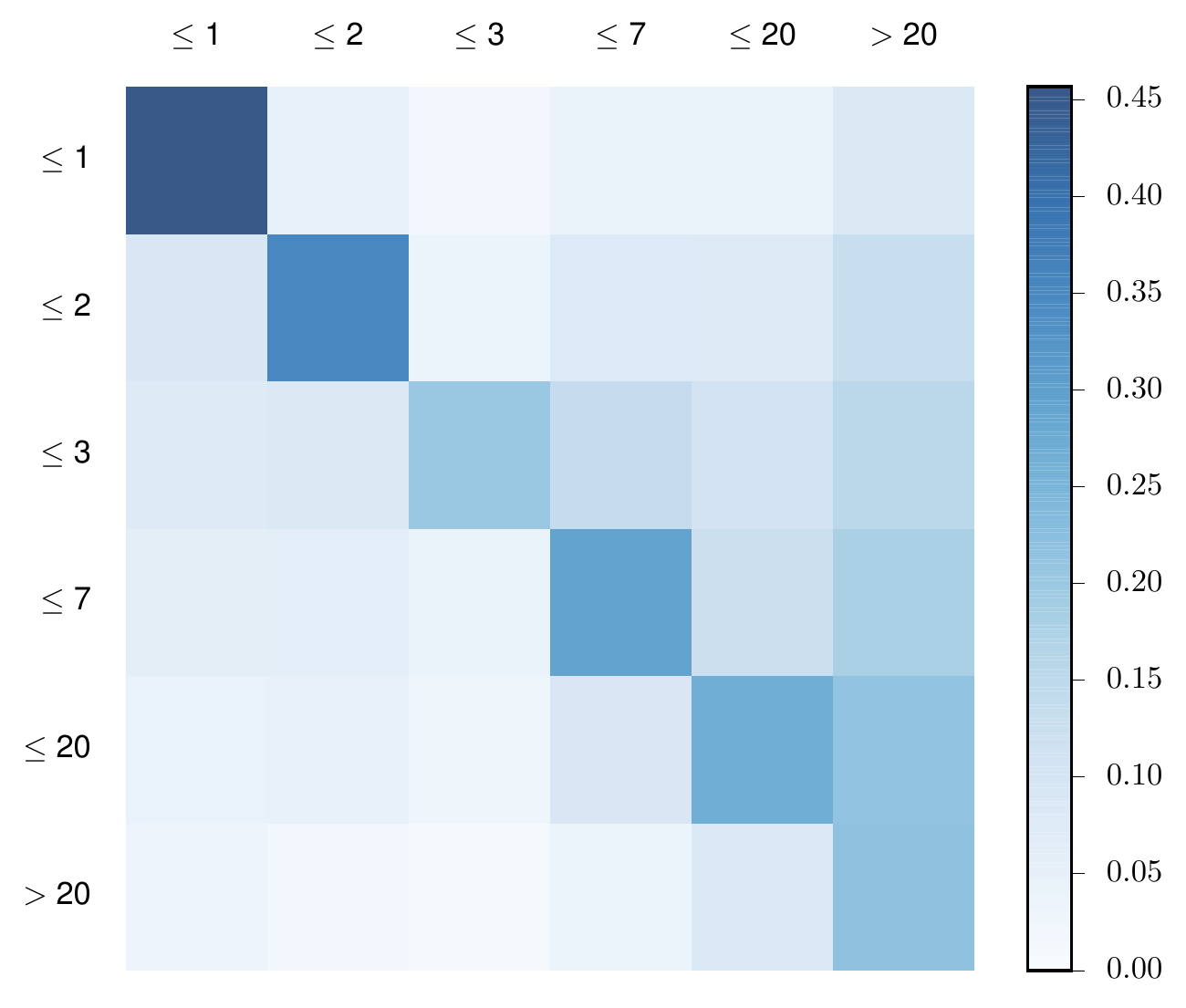}
\includegraphics[width=0.49\textwidth]{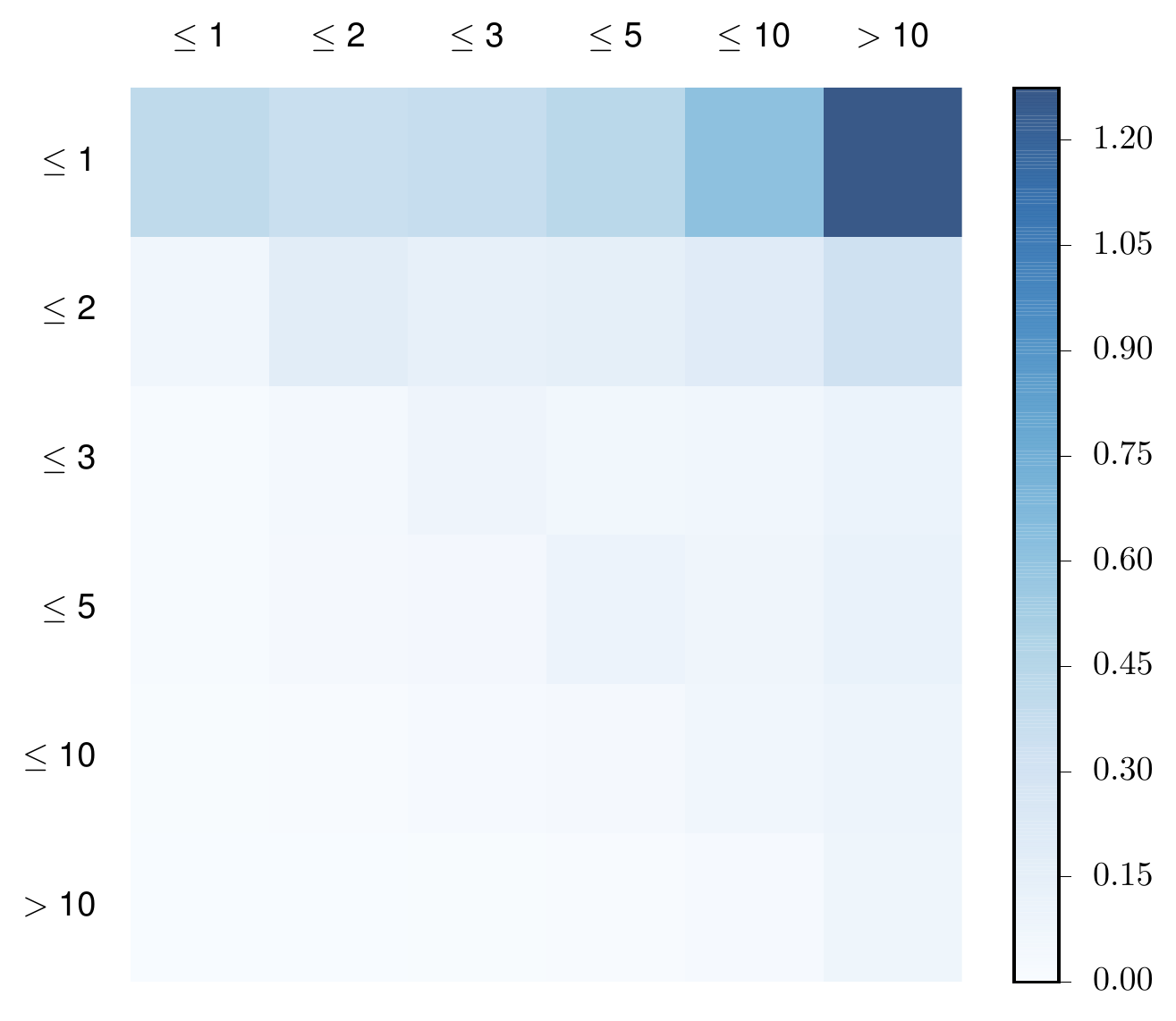}
\includegraphics[width=0.49\textwidth]{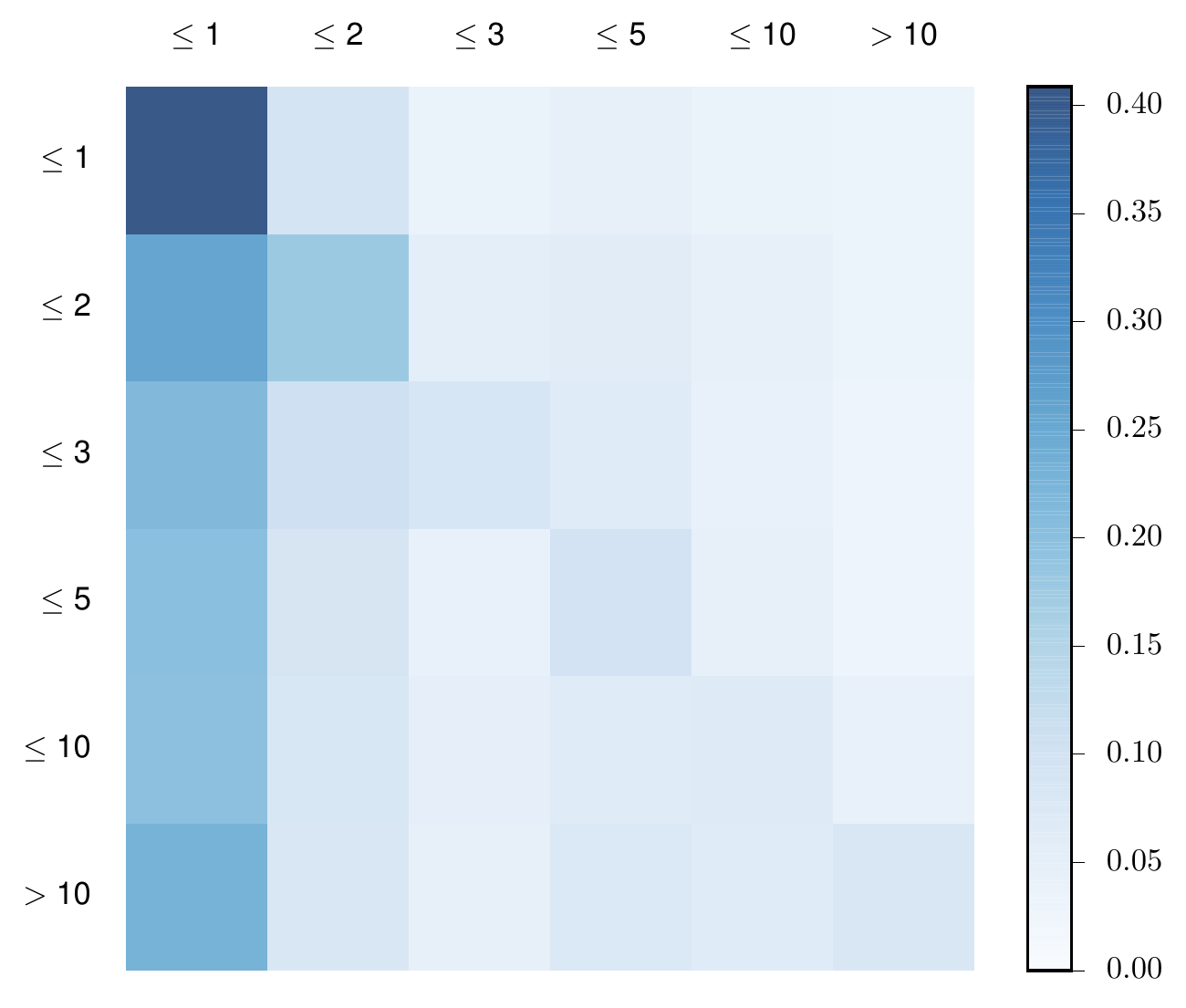}
\caption{Kernel norms. Top: Bund future. Bottom: DAX future. The left figures represent the kernel norms, while in the right side figures the kernel norms normalized by $\frac{\Lambda_j}{\Lambda_i}$ are shown.}
\label{fig:dax_norm}
\end{figure}
\end{center}
\begin{table}[tb]
\begin{center}
\begin{tabular}{ccccccc}
  \toprule[1pt]
  &\multicolumn{6}{c}{Bund}\\
 Bin & 1 &2 &3& $(3,7]$ & $(7,20]$ &  $>20$  \\ 
  \midrule
$R_i=\frac{\mu_i}{\Lambda_i}$ & 31.62 \% & 22.58 \% &23.76 \% & 23.63 \% &31.06 \% & 58.93 \% \\
 \bottomrule[1pt]
 & \multicolumn{6}{c}{DAX}\\
 \midrule
 Bin & 1 &2 &3& $(3,5]$ & $(5,10]$ &  $>10$  \\ 
  \midrule
$R_i=\frac{\mu_i}{\Lambda_i}$ & 34.44 \% &
 36.98 \% &
 45.84 \% &
 49.81 \% &
 49.79 \% &
 42.14 \% \\
\bottomrule[1pt]
\end{tabular}
\caption{Ratios $\mu_i/\Lambda_i$ for the two assets examined. All values in percent.}
\label{tab:binR6D}
\end{center}
\end{table}

\subsection{Signed Trades}
We now take the ``sign" of the transaction into account. If a market order hits the ask side of the book we label it as a buy trade, while if it hits the bid side we call it a sell trade. To analyze the interaction between different order sizes, we proceed as before by binning the trades according to their volume.
To keep the overall dimension of the model manageable we use four bin of volume for each side (buy/sell). The binning schemes for the two assets examined are detailed in Table \ref{tab:bin4+4D}. 

\begin{table}
\centering
\begin{tabular}{cc|cc|cc}
\toprule[1pt]
& Volumes (contracts) & Avg. N events & Fraction (\small{\%}) & Avg. N events& Fraction (\small{\%})\\ 
\midrule
$S_1$ & 1 & 5678 & 16.10& 14066 & 32.24\\ 
$S_2$ &  $(1,3]$&3558 & 10.09&5047& 11.57\\ 
$S_3$ &  $(3,10]$& 4132 & 11.72& 2340& 5.37\\ 
$S_4$ & $(10,\infty]$ &4397 & 12.47&376&0.86 \\  
$B_1$ & $1$&5371 & 15.23&14082& 32.28\\
$B_2$ & $(1,3]$&3591 & 10.18 &5019& 11.50\\ 
$B_3$ & $(3,10]$&4152 & 11.77 &2322 &5.32\\ 
$B_4$ & $(10,\infty)$&4384 & 12.43&370& 0.85\\ 
\bottomrule[1pt]
\end{tabular} 
\caption{Bund (left) and DAX (right) Future: Binning scheme and average number of events per day in each bin in the signed case.}
\label{tab:bin4+4D}
\end{table}

In Figure \ref{fig:dax_sig_ker_lin} we plot the kernels for the sell components obtained on Bund (top 4 panels) and DAX (bottom 4 panels) data, the results on the buy components are almost exactly symmetrical and thus not shown. We again observe two main peaks at 100$\mu s$ and $300\mu s$ respectively, with the latter being more relevant when large trades are involved. Now that trade sign is taken into account we immediately notice that short term influence is almost entirely due to same-sign trades, in line with the hypothesis that is the result of order splitting. Excitation from opposite side trades is negligible, except for the largest ones that show some influence around the millisecond timescale. In the DAX figures, the excitation from large trades of opposite sign over larger trades is even more evident and the cross excitement starts to become relevant around time lags of 0.5 milliseconds. A possible explanation is that at this timescale, enough time has passed for the sign of the order to become less relevant thus the contribution is towards a general increase in activity rather than targeting specifically more buying or selling. 

In Figure \ref{fig:dax_sig_norm} we plot the kernel norms for the Bund an the DAX in a color map. We can divide the matrix into the four sectors sell-sell, sell-buy, buy-sell and buy-buy. The two same-sign blocks and the two opposite-sign blocks appear identical to each other as we would expect, i.e the buy-sell symmetry is respected. Moreover, opposite-sign trades provide almost no contribution, with the notable exception of the column of largest size trades. In the Bund case, we find the same-sign blocks to be almost perfectly symmetric. Indeed, the asymmetry observed in the unsigned case, that is that large trades influence small one more than the other way round, is now found to be due to excitation from opposite-sign large trades.
In DAX future instead, each of the four main blocks is markedly asymmetric, with influence of large trades being more pronounced. As already noted this is related to the much lower frequency of large orders.
\begin{center}
\begin{figure}[]
\centering
\includegraphics[width=0.48\textwidth]{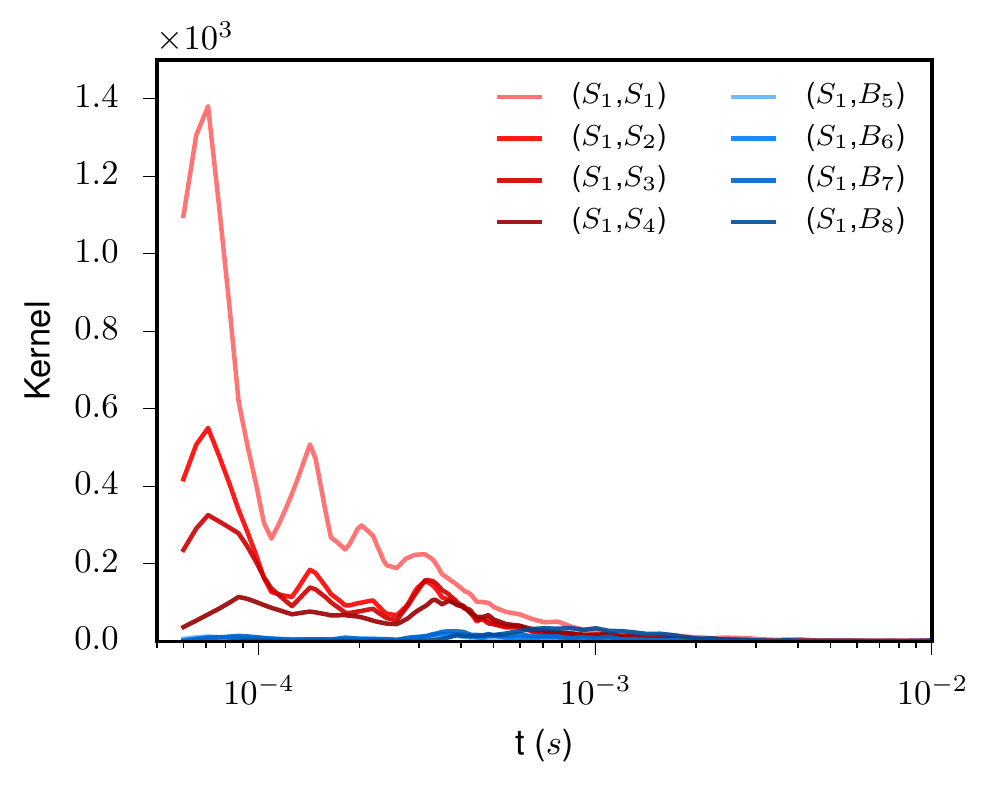}
\includegraphics[width=0.48\textwidth]{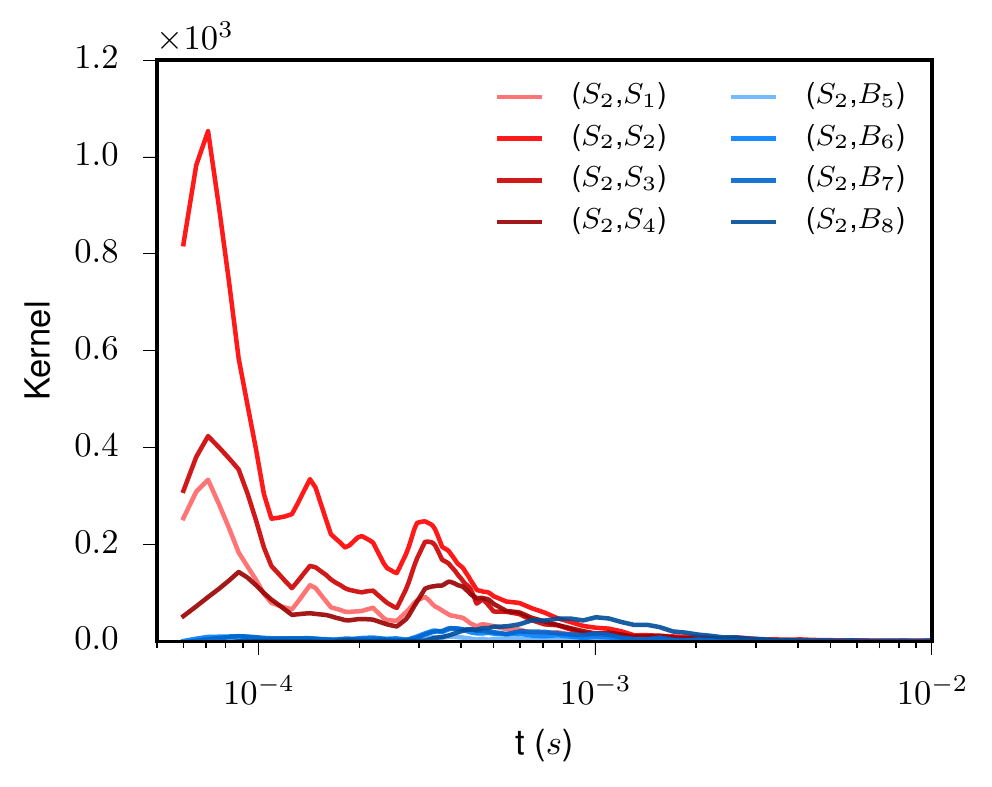}
\includegraphics[width=0.48\textwidth]{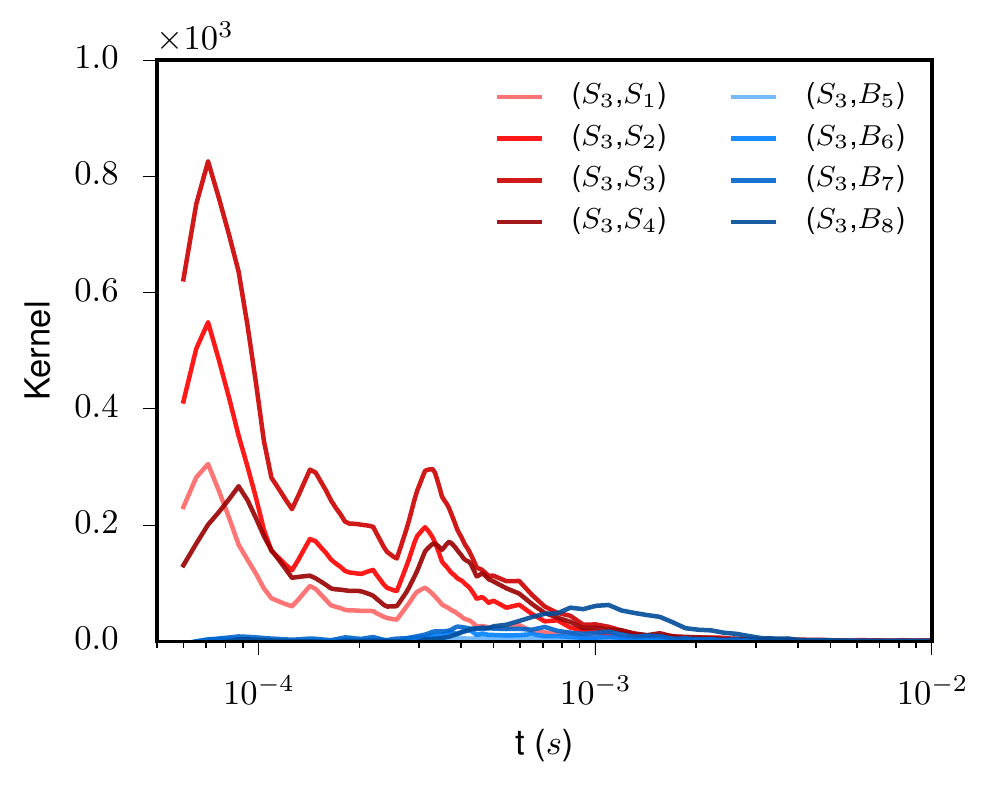}
\includegraphics[width=0.48\textwidth]{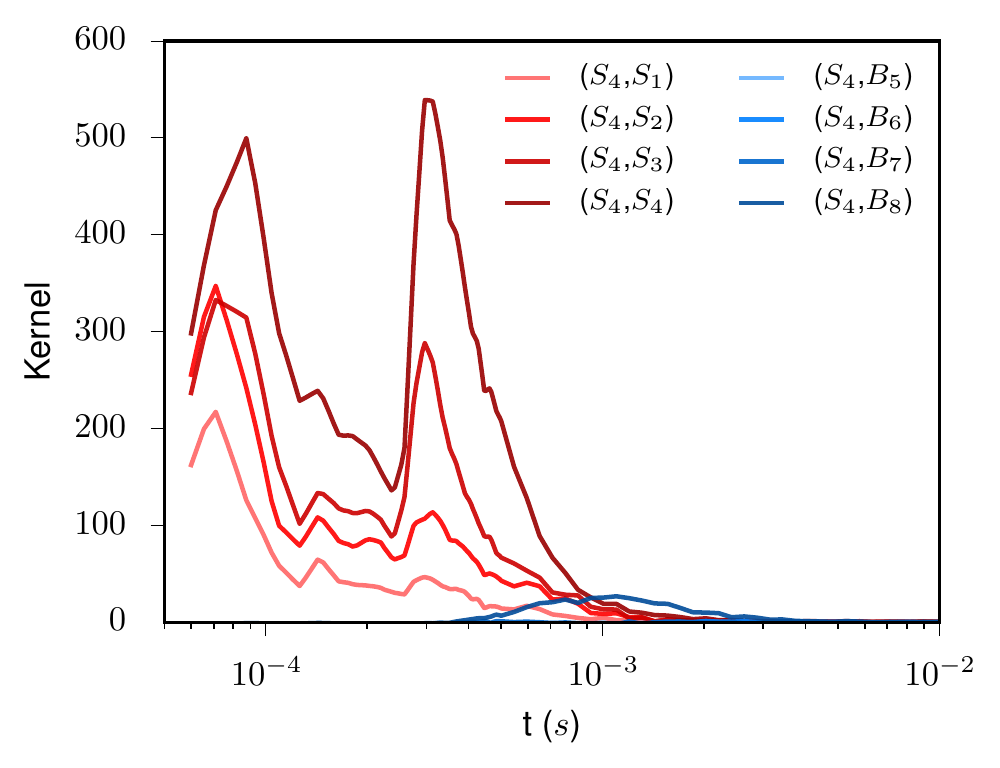}
\includegraphics[width=0.48\textwidth]{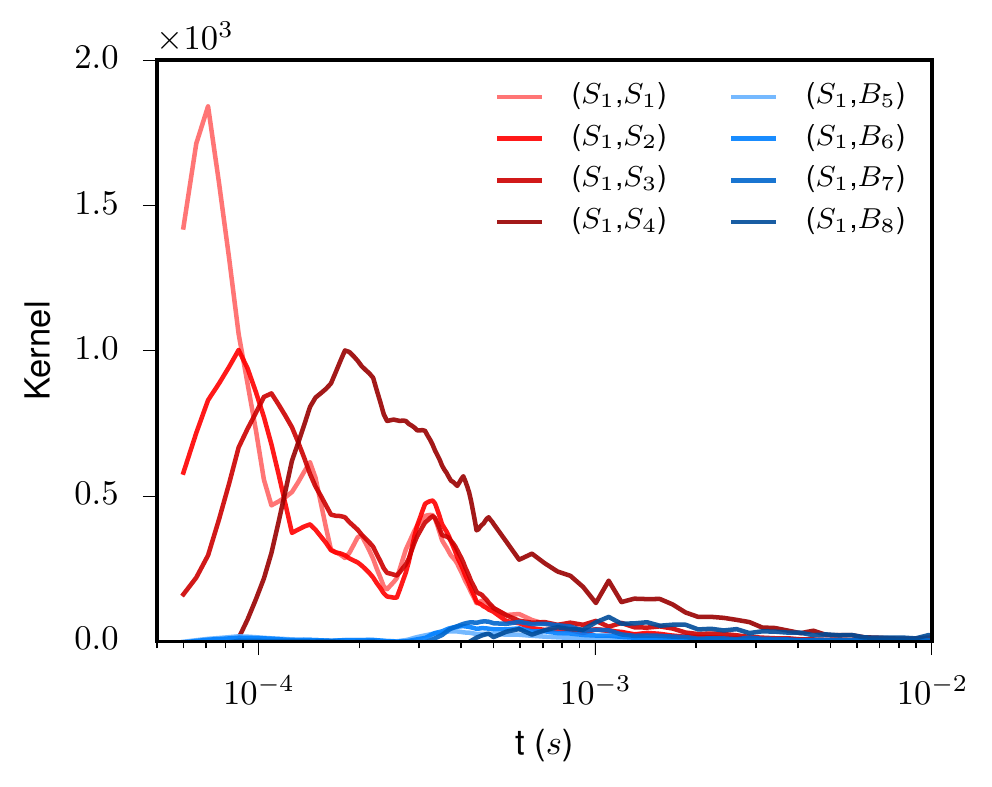}
\includegraphics[width=0.48\textwidth]{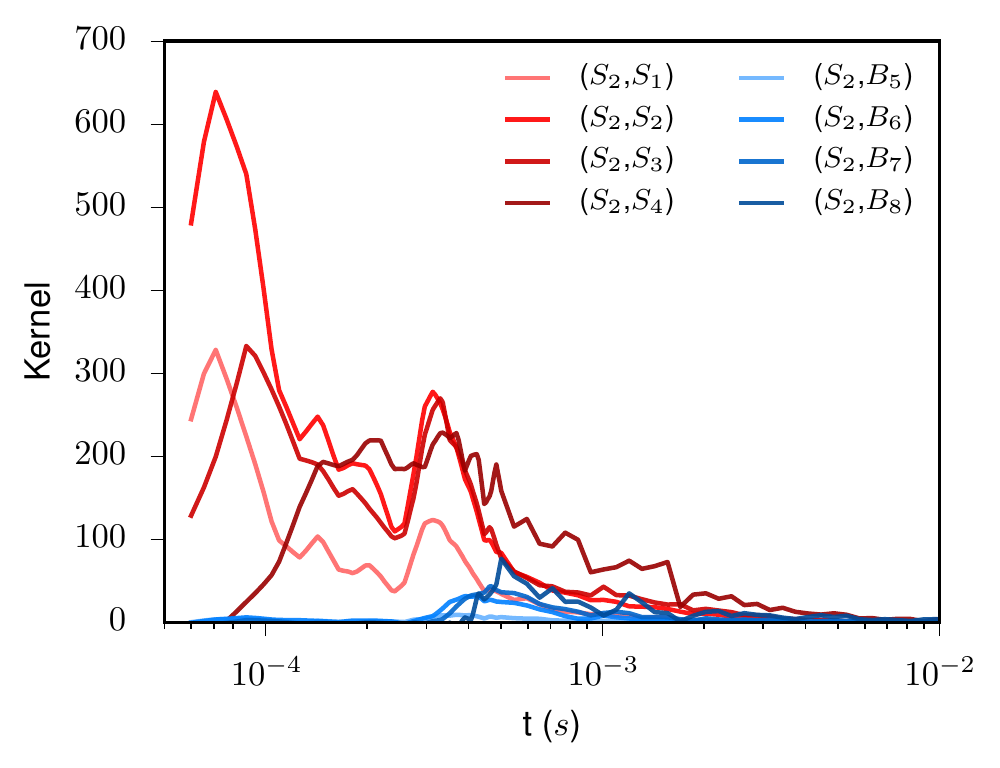}
\includegraphics[width=0.48\textwidth]{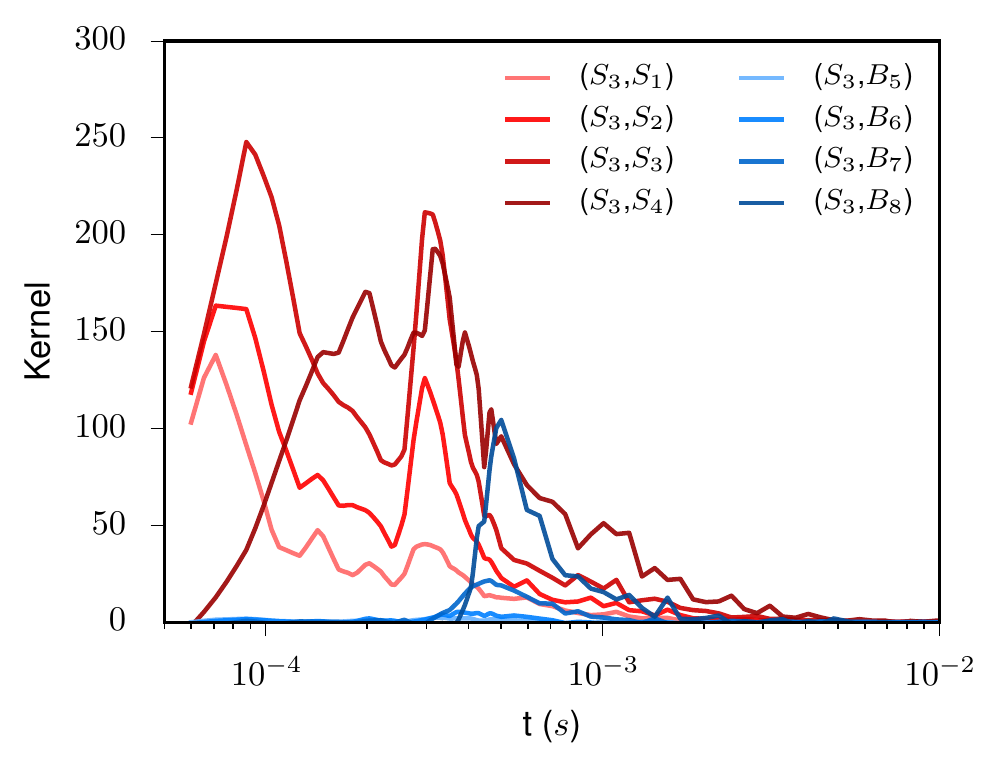}
\includegraphics[width=0.48\textwidth]{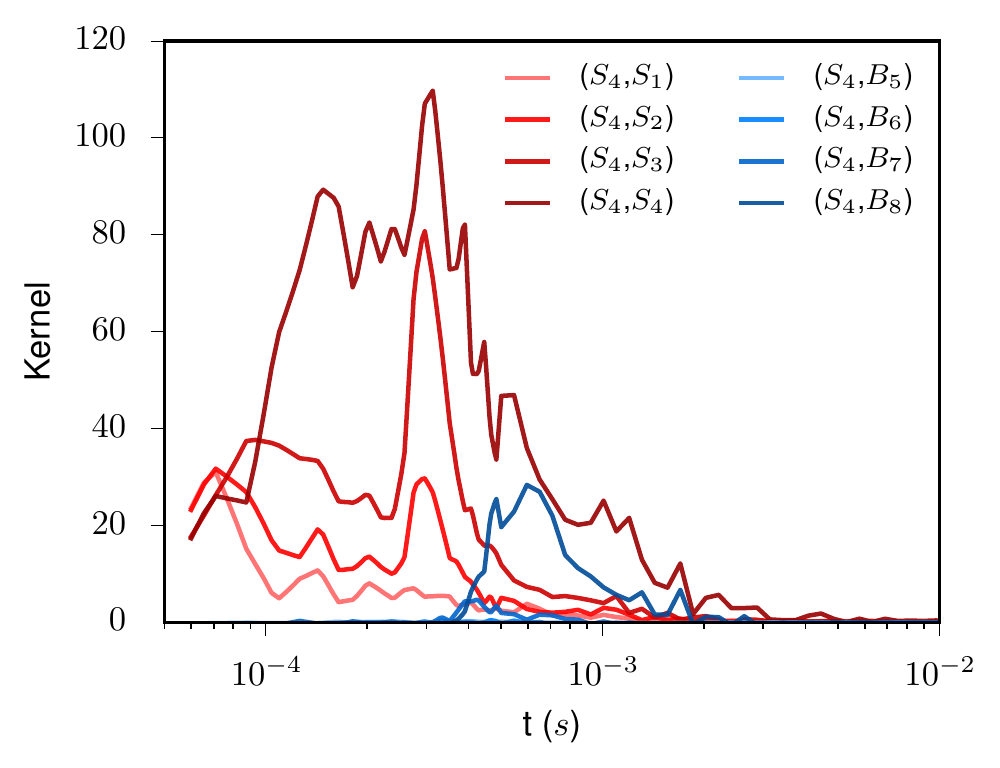}
\caption{Estimated kernels relative to the four sell components in the signed trades model. Top four panels refer to the Bund, bottom four to the DAX. }
\label{fig:dax_sig_ker_lin}
\end{figure}
\end{center}
\begin{center}
\begin{figure}[htb]
\centering
\includegraphics[width=0.49\textwidth]{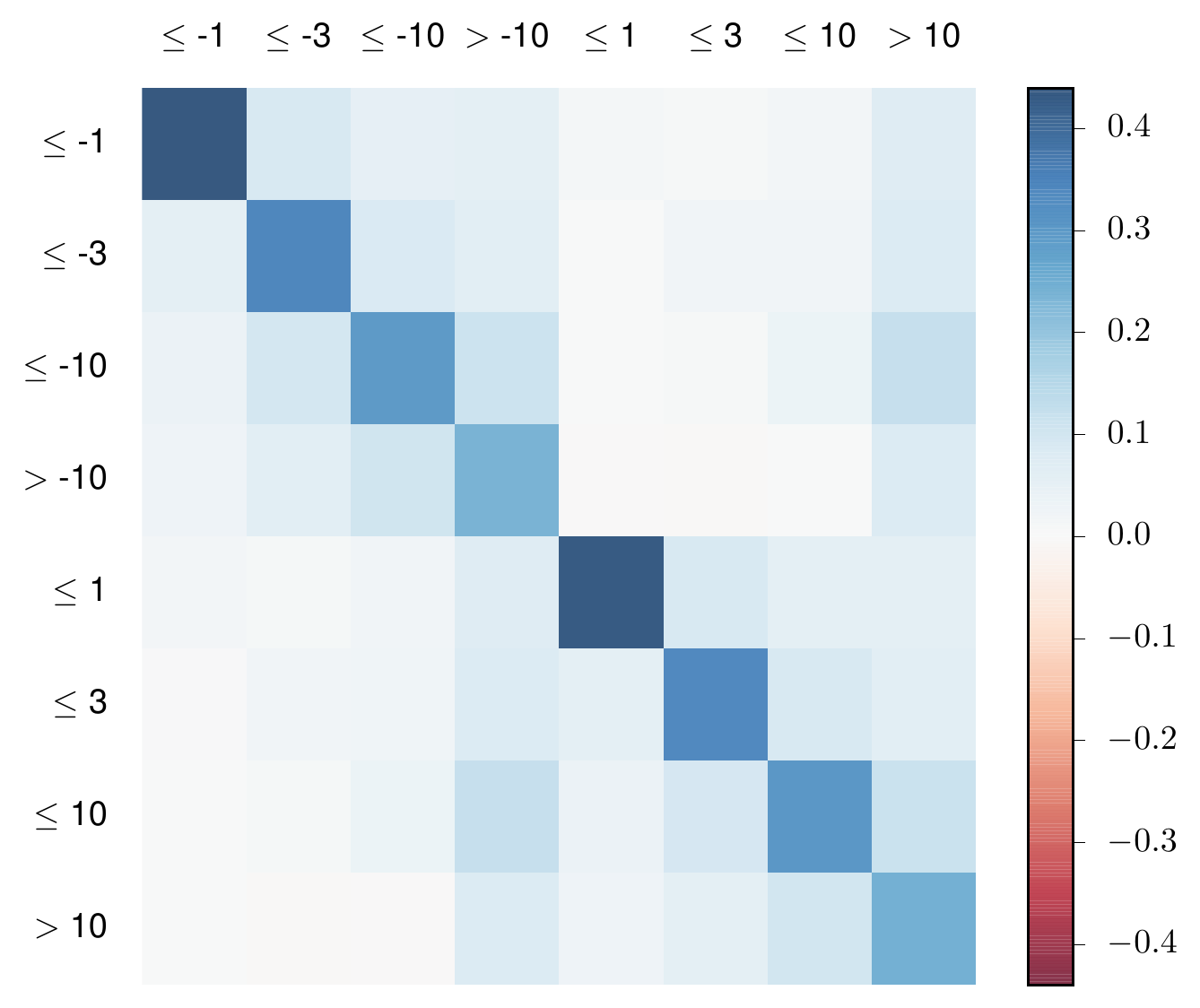}
\includegraphics[width=0.49\textwidth]{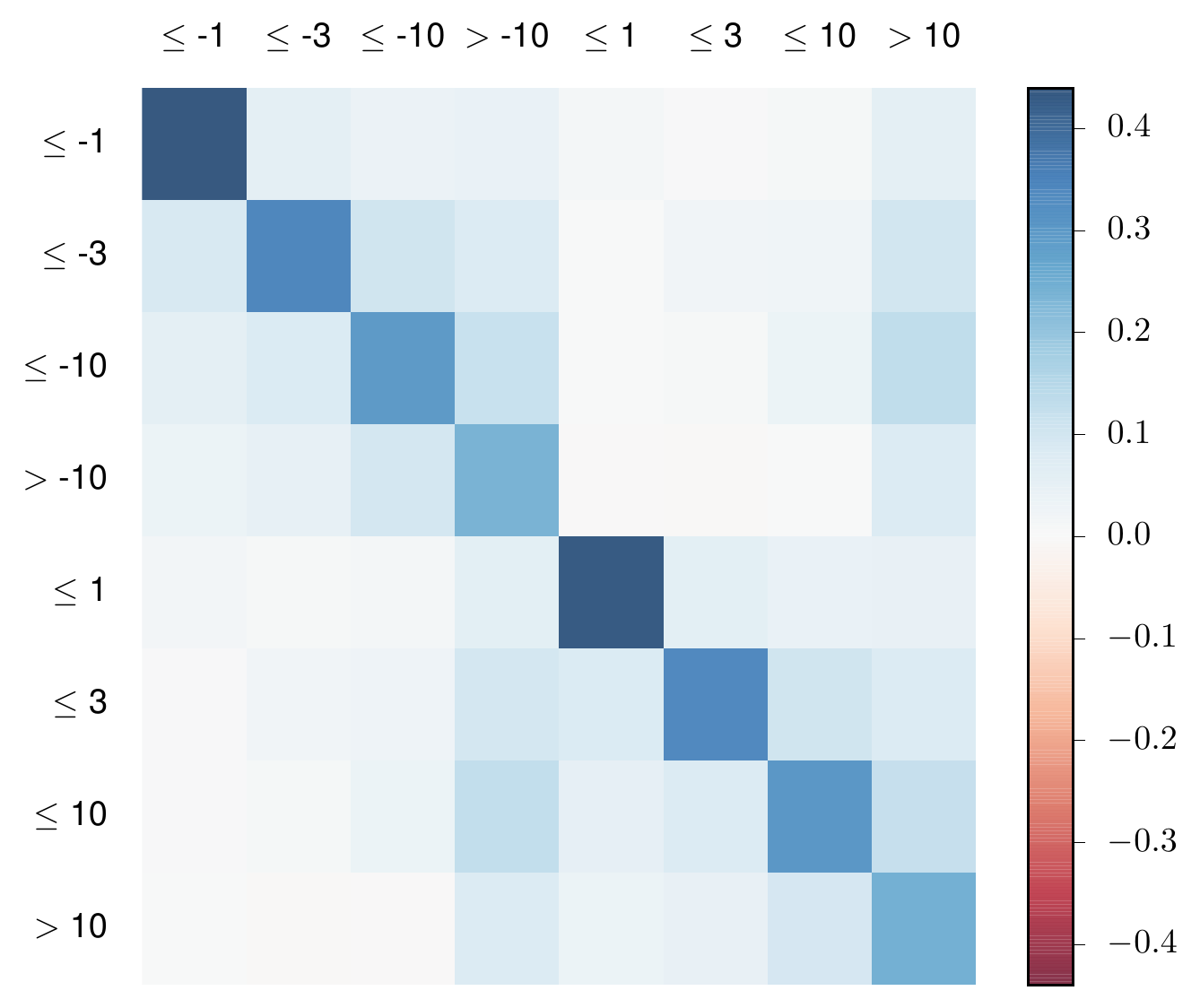}
\includegraphics[width=0.49\textwidth]{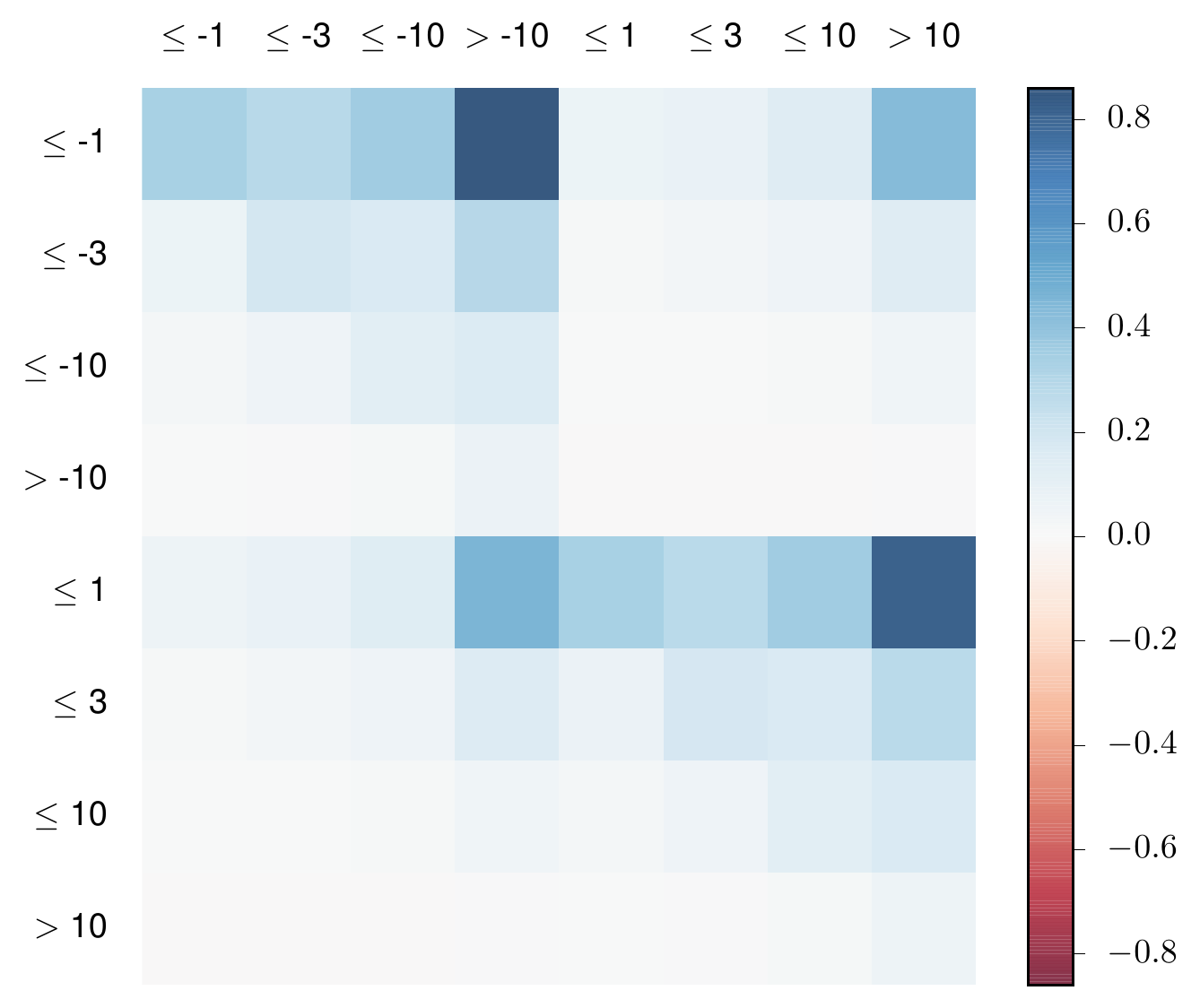}
\includegraphics[width=0.49\textwidth]{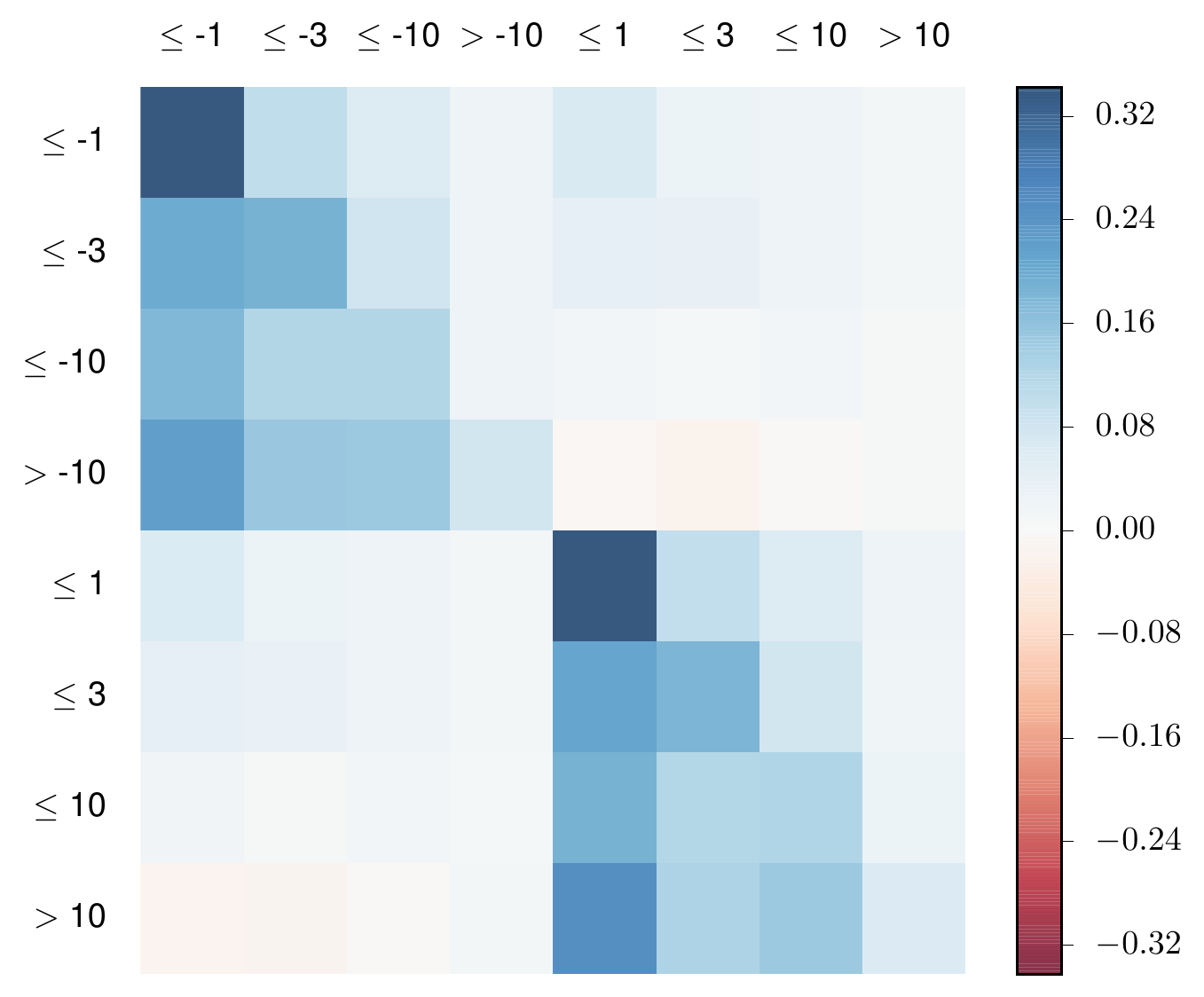}
\caption{Kernel norms (left) and rescaled kernel norms (right) for the signed trades model. Top figures refer to the Bund, bottom ones to the DAX.}
\label{fig:dax_sig_norm}
\end{figure}
\end{center}

Finally, we mention that the baseline intensities confirm what emerged for the unsigned case, namely that a significant fraction of the trades is of exogenous origin and this fraction is larger for big trades.

\section{Level one order book with volume}
\label{sec:fullob}

\begin{table}
\centering
\begin{tabular}{ccccccccccccc}
\toprule[1pt]
&\multicolumn{12}{c}{Bund}\\
\midrule
&$L^a_1$ & $L^a_2$ & $L^a_3$ & $L^a_4$ & $C^a_1$ & $C^a_2$ & $C^a_3$ & $C^a_4$ & $T^a_1$ & $T^a_2$ & $T^a_3$ & $T^a_4$\\
mil & 11.46 &15.24 &12.22 &4.41 &10.61 &13.33 &10.05 &3.66 &1.80 &1.24 &1.44 &1.53 \\
\% & 6.60 &8.78 &7.04 &2.54 &6.11 &7.68 &5.79 &2.11 &1.04 &0.71 &0.83 &0.88 \\
\midrule
&$L^b_1$ & $L^b_2$ & $L^b_3$ & $L^b_4$ & $C^b_1$ & $C^b_2$ & $C^b_3$ & $C^b_4$ & $T^b_1$ & $T^b_2$ & $T^b_3$ & $T^b_4$ \\
mil &11.49 &15.24 &12.19 &4.28 &10.64 &13.30 &9.92 &3.50 &1.88 &1.22 &1.43 &1.53 \\
\% & 6.62 &8.78 &7.02 &2.47 &6.13 &7.66 &5.71 &2.02 &1.08 &0.70 &0.82 &0.88 \\
\toprule[1pt]
&\multicolumn{12}{c}{DAX}\\
\midrule
&$L^a_1$ & $L^a_2$ & $L^a_3$ & $L^a_4$ & $C^a_1$ & $C^a_2$ & $C^a_3$ & $C^a_4$ & $T^a_1$ & $T^a_2$ & $T^a_3$ & $T^a_4$ \\
mil &29.07 &3.84 &0.71 &0.05 &30.29 &2.45 &2.53 &0.76 &5.14 &1.82 &0.82 &0.13 \\
\% & 18.80 &2.48 &0.46 &0.03 &19.59 &1.58 &1.64 &0.49 &3.32 &1.18 &0.53 &0.08 \\
\midrule
& $L^b_1$ & $L^b_2$ & $L^b_3$ & $L^b_4$ & $C^b_1$ & $C^b_2$ & $C^b_3$ & $C^b_4$ & $T^b_1$ & $T^b_2$ & $T^b_3$ & $T^b_4$ \\
mil & 28.88 &3.81 &0.71 &0.06 &29.95 &2.41 &2.51 &0.76 &5.12 &1.82 &0.83 &0.13 \\
\% & 18.68 &2.46 &0.46 &0.04 &19.37 &1.56 &1.62 &0.49 &3.31 &1.18 &0.54 &0.08 \\
\bottomrule[1pt]
\end{tabular}
\caption{Total number of events in our database, divided by type and size.}
\label{tab:full_ob_counts}
\end{table}

In this section we extend the model beyond trades including a total of three type of events, namely limit orders, cancellations, and trades. For each one, we differentiate between ask and bid side and we consider four bins of volume with edges at 1, 3, and 10 contracts. The resulting Hawkes model has thus 24 components, and 576 kernels. The total number of events recorded in each component is detailed in Table \ref{tab:full_ob_counts} for both assets. We use the same estimation procedure as before to recover the kernels and the baseline intensities. Finally, we indicate with $L^x_i$ ($x\in \{a,b\}$, $i=1,..,4$) the limit order events at the ask ($x=a$) or at the bid ($x=b$) with volume in bin $i$. Similar notation is used for cancellations ($C^x_i$) and trades ($T^x_i$).

\subsection{Kernel norms}
\begin{figure}[tbh]
\centering
\includegraphics[width=.48\textwidth]{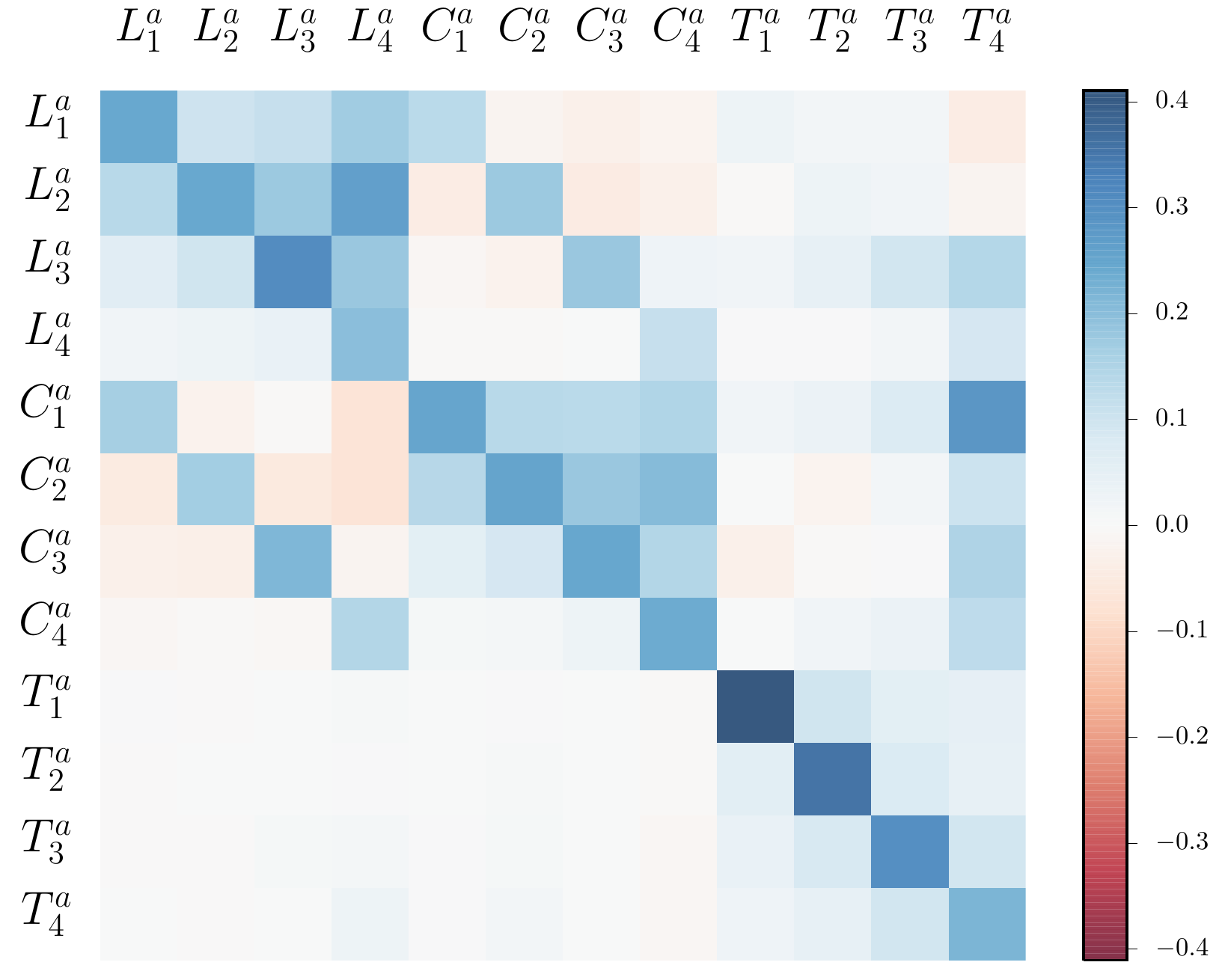}
\includegraphics[width=.48\textwidth]{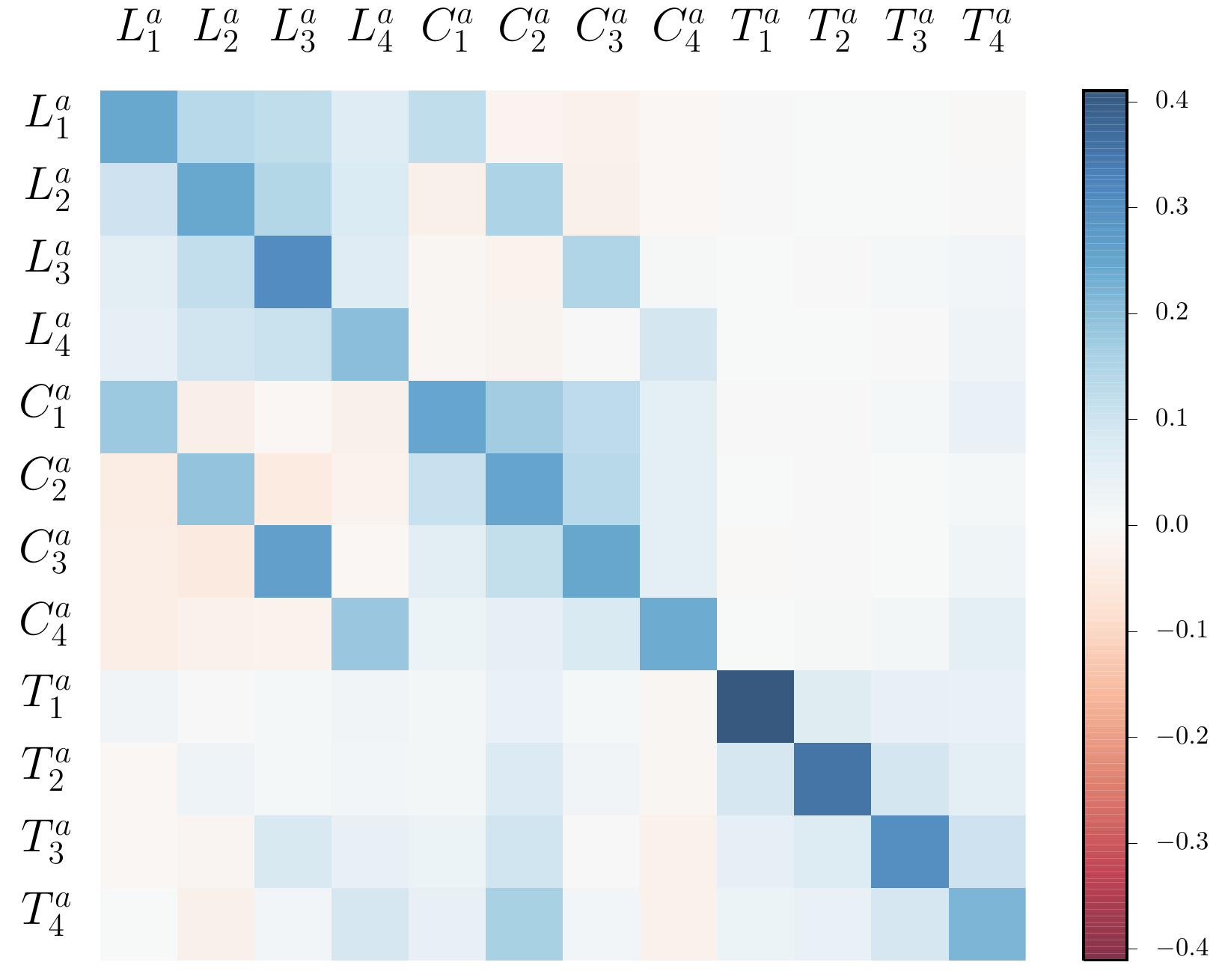}
\caption{Bund future: Ask-Ask quadrant matrix of the norm for the full order book model (left). Normalized norms (right)}
\label{fig:fullob_bund_xx}
\end{figure}
\begin{figure}[tbh]
\centering
\includegraphics[width=.48\textwidth]{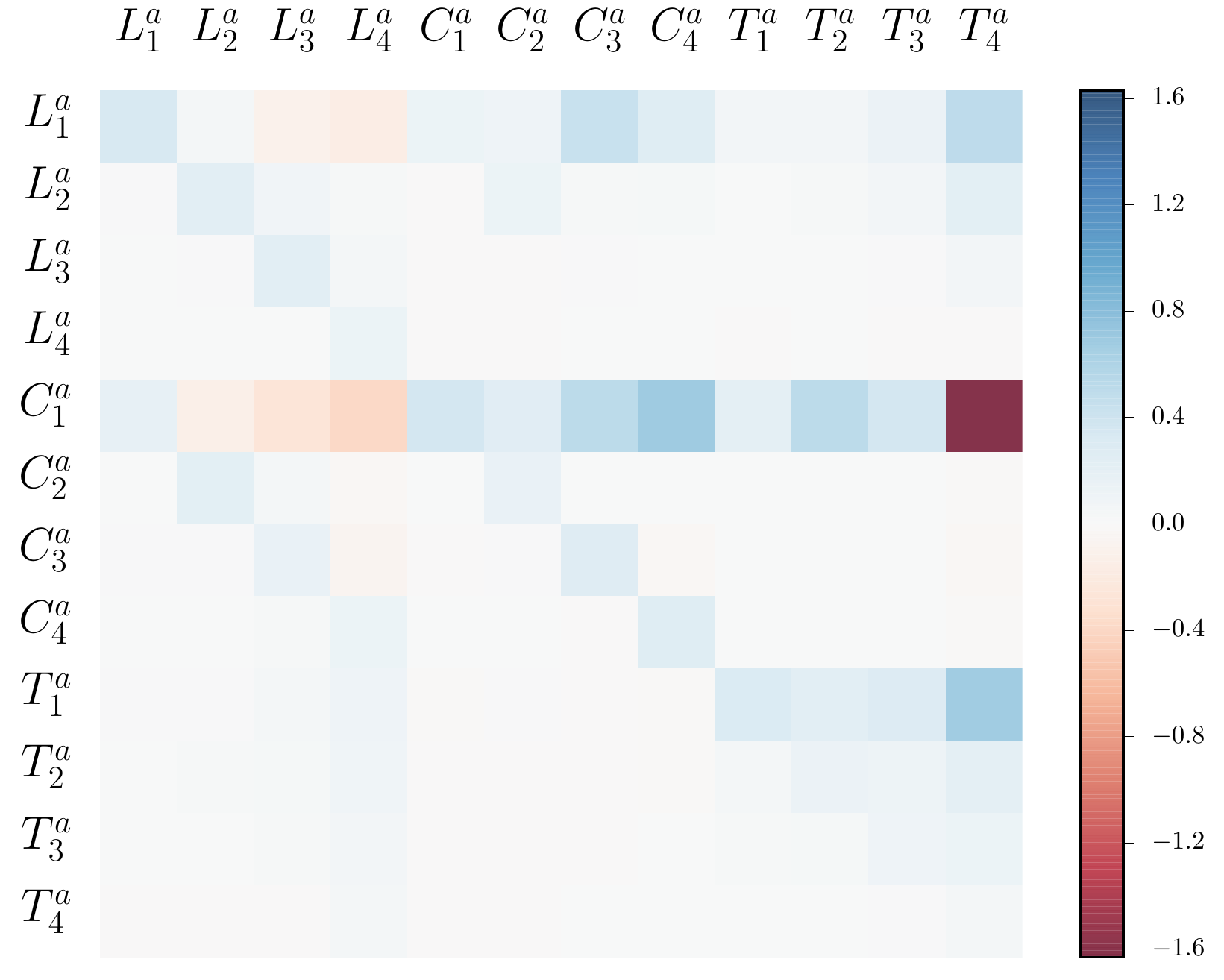}
\includegraphics[width=.48\textwidth]{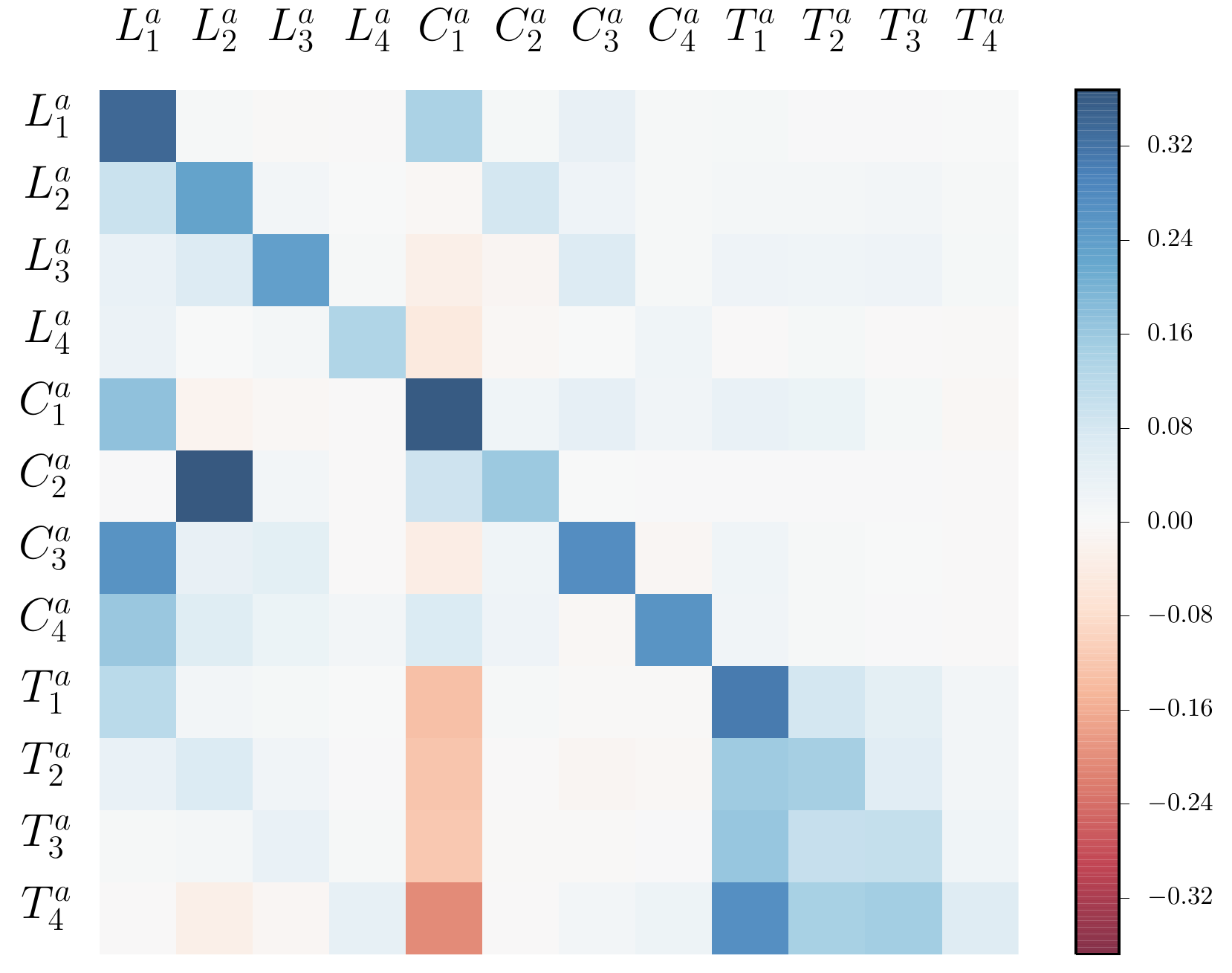}
\caption{DAX future: Ask-Ask quadrant matrix of the norm for the full order book model (top). Normalized norms (bottom)}
\label{fig:fullob_dax_xx}
\end{figure}

We start our analysis by examining the matrix of the norms $\Vert\phi^{ij} \Vert_1$, that give us a summary of the main mutual influences detected in the order book with our procedure. Since we find the bid/ask symmetry to be fairly well respected, we plot only the quadrants Ask$\rightarrow $Ask and Bid$\rightarrow $Ask in order to improve readability.

We first examine the detected relationship between same side orders. In Figures \ref{fig:fullob_bund_xx} and \ref{fig:fullob_dax_xx} we plot the ask/ask quadrant, along with the rescaled version, for the Bund and Dax future, respectively. We stress once again that in the case of the DAX the four volume components have very different average intensities. As a result, the normalized norms matrix looks very different from the original norm matrix. 
We note that diagonal terms, corresponding to self excitation, play an important role in the dynamics. Also in \citep{bacry2014estimation} diagonal terms (limit/limit, cancel/cancel, trade/trade) were found to be important. By adding order size to the analysis, we determine that this interaction is stronger for same size orders.
 
Another feature, more visible for the Bund, concerns the blocks $L^x_i \rightarrow C^x_j$ and $C^x_i \rightarrow L^x_j$. They describe the influence of cancellations on limits on the same side of the book (and vice-versa), and present positive diagonal terms and slightly negative and mostly negligible off diagonal terms. This means that excitation is found only between orders of the same (or similar) size. The presence of strong excitation between orders of the same size appears very reasonable. Indeed, one can only cancel orders that he placed before. The fact that the same structure is found in both the blocks $L^x \rightarrow C^x$ and $C^x \rightarrow L^x$ suggests that these blocks reflect repositioning of limit orders, for example because the price has moved.

We already discussed in the previous section the structure of the trade-trade interaction. Here we note that trades appear to be mostly influenced by other trades with very little excitation from other types of orders. This is somewhat expected, since market orders have been found in different studies \citep{MuniToke:2015eh} to trigger limit orders. Trades, especially large ones appear to have also a significant effect on cancellations. We will discuss in more detail the effects of trades in Section \ref{sub:ob_trades}.

\begin{figure}[tbh]
\centering
\includegraphics[width=.48\textwidth]{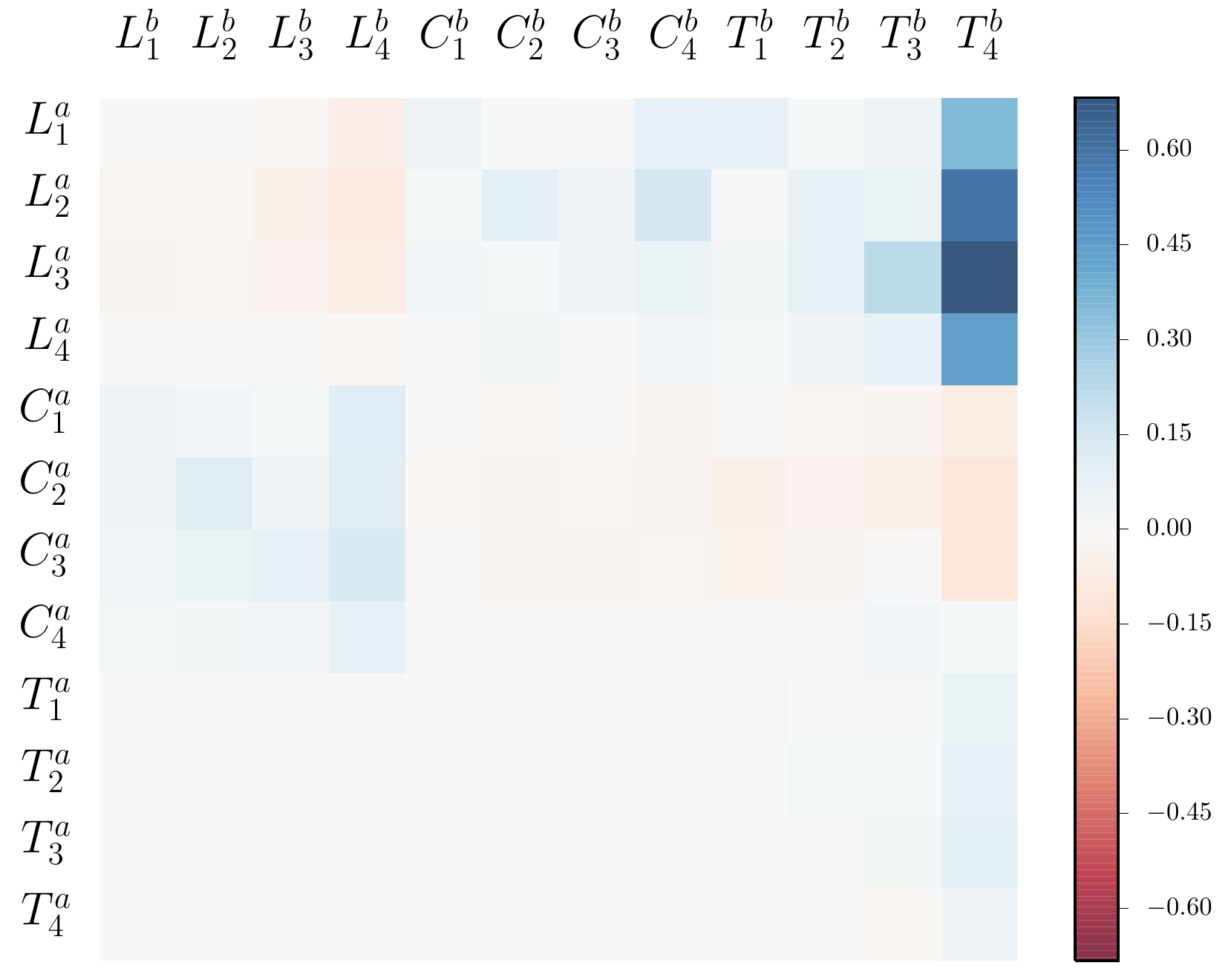}
\includegraphics[width=.48\textwidth]{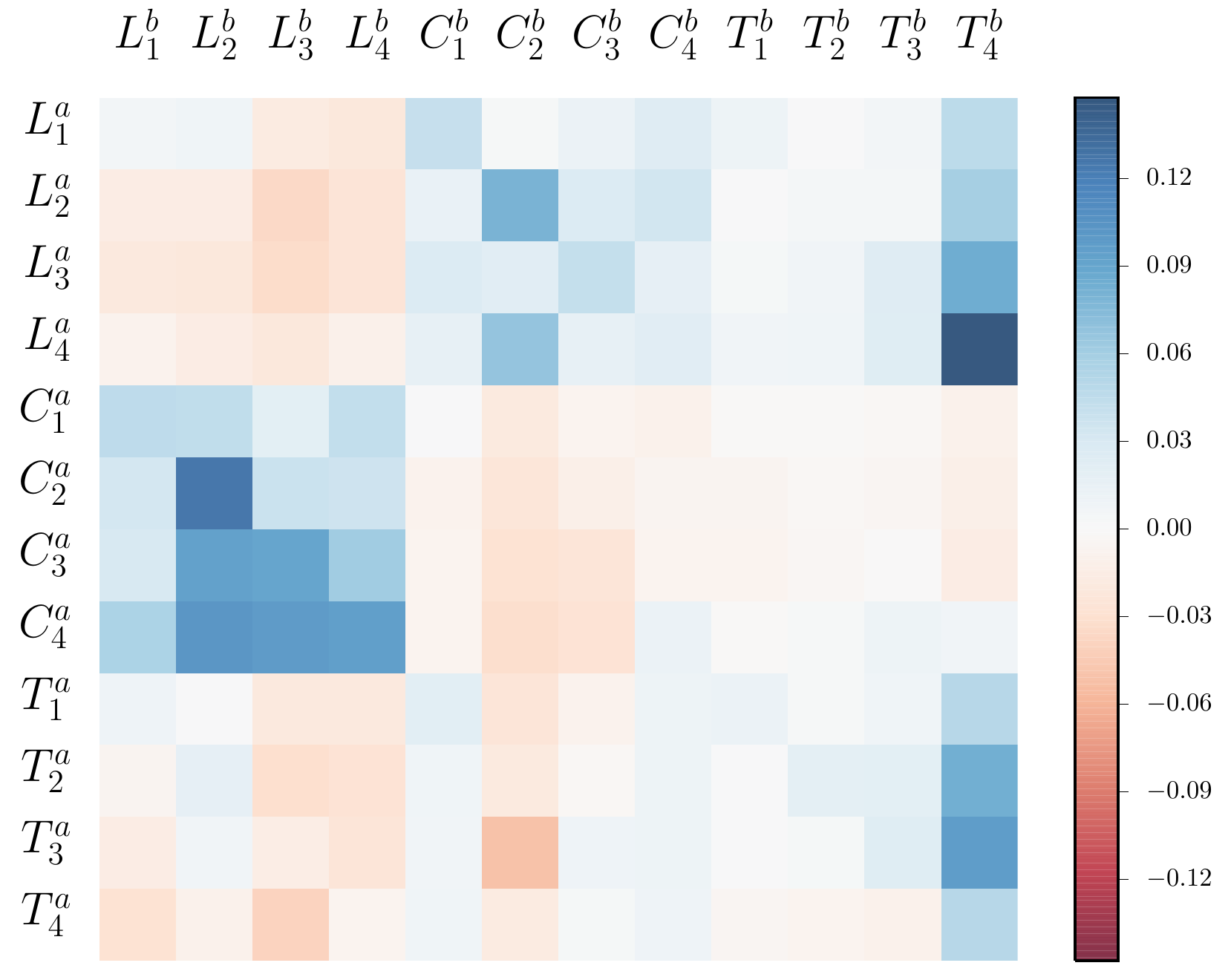}
\caption{Bund future: Ask-Bid quadrant matrix of the norm for the full order book model (left). Normalized norms (right)}
\label{fig:fullob_bund_xy}
\end{figure}
\begin{figure}[tbh]
\centering
\includegraphics[width=.48\textwidth]{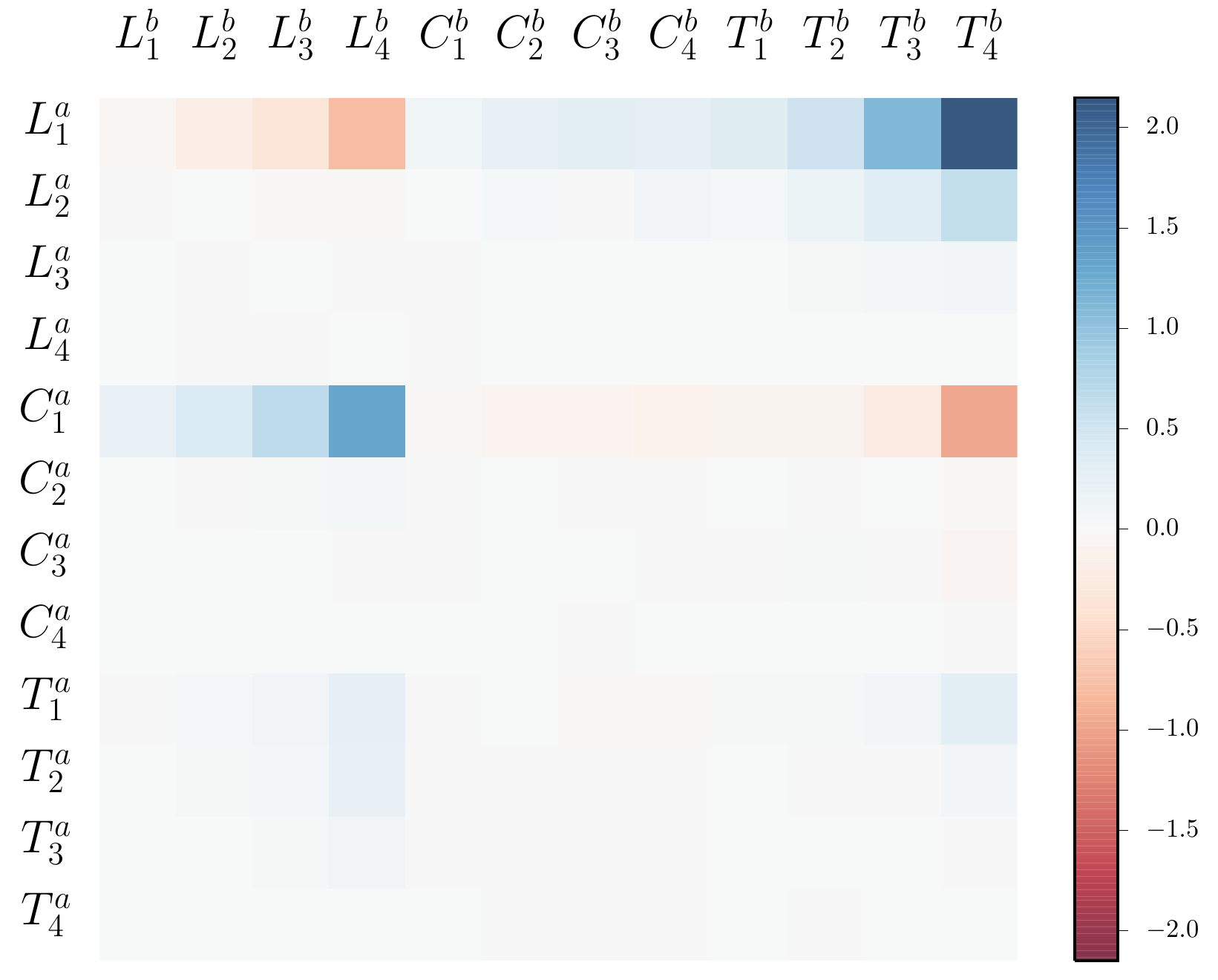}
\includegraphics[width=.48\textwidth]{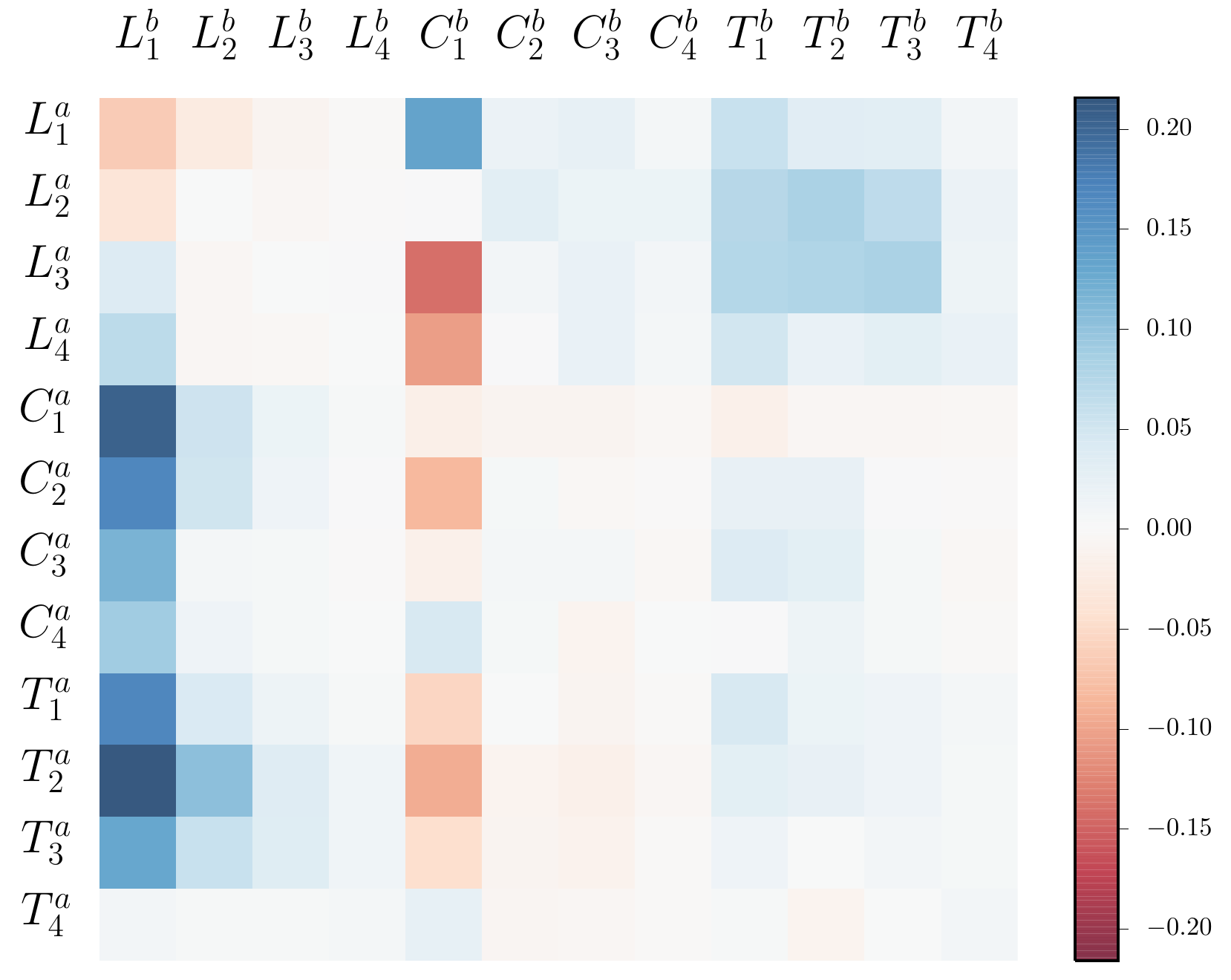}
\caption{DAX future: Ask-Bid quadrant matrix of the norm for the full order book model (top). Normalized norms (bottom)}
\label{fig:fullob_dax_xy}
\end{figure}

Figures \ref{fig:fullob_bund_xy} and \ref{fig:fullob_dax_xy}, show instead the quadrant ask/bid of the estimated kernel matrix. First, we observe that trades, particularly large ones, influence significantly limit orders on the opposite side of the book. This is more evident in the case of the DAX. The DAX is a case of a small-tick asset, so large trades are more likely to move the midprice. As a consequence, new bid limit orders are submitted in order to follow the price move upwards and vice-versa.

Finally, we note the mutual influence between limit and cancel orders on opposite side of the book and the inhibitory effect $L^x_i\rightarrow L^y_j$ and $C^x_i \rightarrow C^y_j$ ($x\ne y$), i.e. between the same order type on opposite side of the order book. This can be linked to a fair price view, in the sense that when the flow of limit orders is more intense on one side it means that the "fair price" is closer to the other side, and thus the flow of limit orders decreases. 

The norm matrix provides a summary of the mutual influence structure, however, to complete the analysis it is important to look at the precise shape of the kernels which add noteworthy information. In the following, we will examine the shape of the kernels focusing in particular on the effect of the different order sizes.

\subsection{Effects of limit orders and cancellations}

To identify the main effect of the arrival of an event of type $Z^x_j$ (e.g. limit, cancel, trade with size $j$), we plot all the kernels $\phi(Z^x_j\to \cdot)$ i.e. we plot along one column of the kernel matrix. To improve readability we separate the effect on the same side of the book from those on the opposite side.

In Figure \ref{fig:dax_limit}  we plot the kernels that describe the influence of a small and respectively a large limit order on the DAX future. Let us look first at the left column of the figure, that shows the influence on the same side of the book. We notice immediately that both a small and a large limit order trigger mainly small limit orders. However, while the small/small kernel attains high values already at very short time scales (below 300$\mu$s), the large/small one is negligible until that scale. This is a feature we already noticed in Section \ref{sec:unsigned_vol} and suggests that when a large order arrives, what follows is mainly the reaction of the market to the new information, and it takes some time to react. While the reaction part is present also for small orders, here a significant part seems to be the result of the same trader executing a series of orders at very short distances.  

Small limit orders also trigger cancellations of the same size on both sides of the book. This may be the result of traders repositioning their limit orders because e.g. the midprice has shifted. Note that the $L_1^a\to C_1^b$ kernel features high values also before the 300$\mu$s reaction time, unlike the $L_1^a\to C_1^a$ kernel. This is probably due to the fact that if a trader decides to change side on the book then he will almost simultaneously place a new limit order on one side and cancel his previous one on the other.

A large limit order also triggers many cancellations and, to a lesser extent, trades on the opposite side. A possible reading is the following. A large limit order on ask side conveys the information that the "fair price" is actually closer to the bid side. Then agents rush to cancel their outstanding limit order at the bid, but some get caught by fast 
traders that place market orders at the bid to take advantage of the situation.

Effect of cancellations, reported in Figure \ref{fig:dax_cancel} follow the same lines, albeit the effect on trades is negligible.
Similar considerations hold also in the Bund case, the main difference being that in the DAX case the occurrence of a large order (in our binning scheme) is a much rare event and thus has more dramatic consequences. 

\begin{figure}[tbh]
\centering
\includegraphics[width=.48\textwidth]{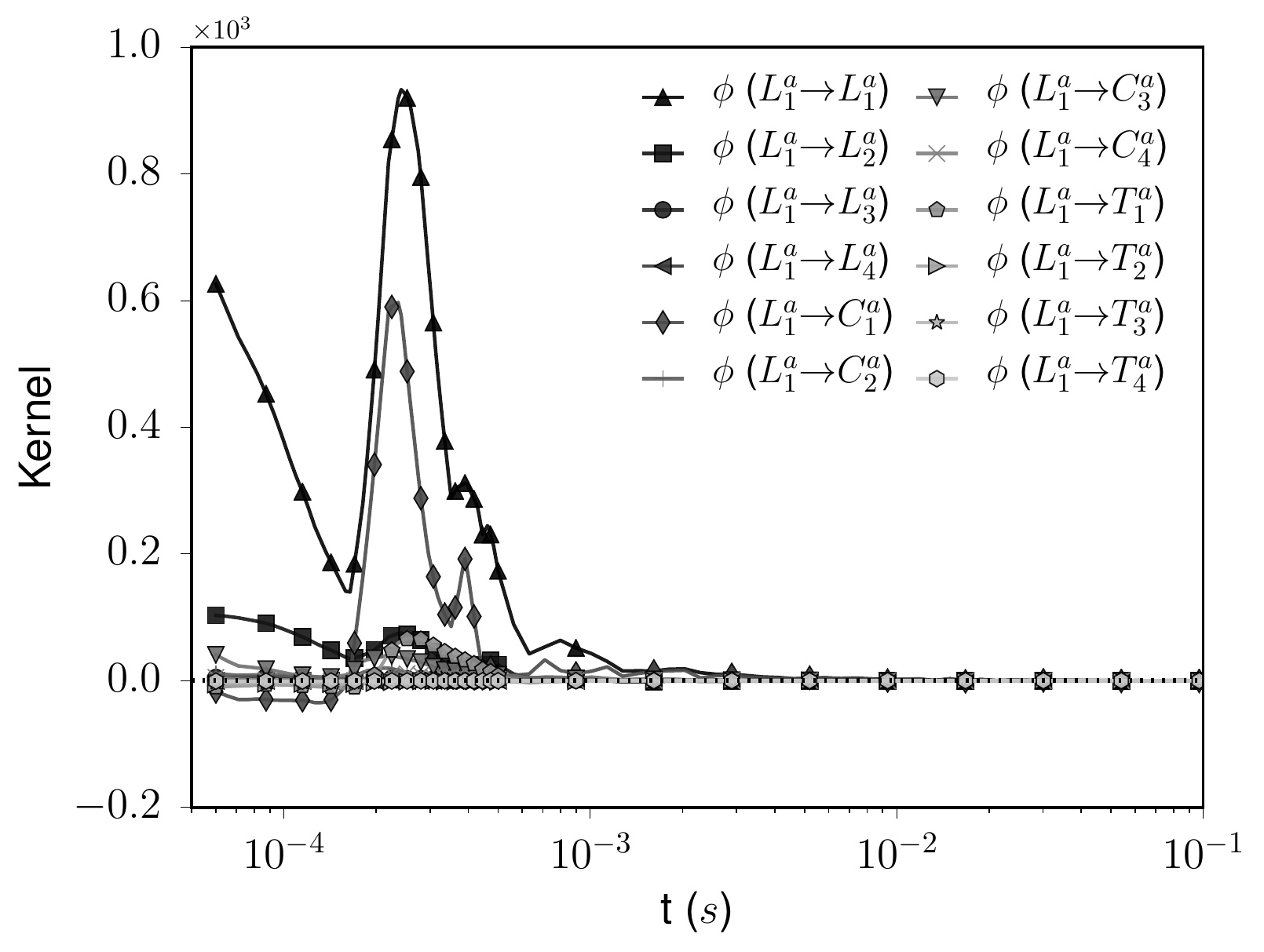}
\includegraphics[width=.48\textwidth]{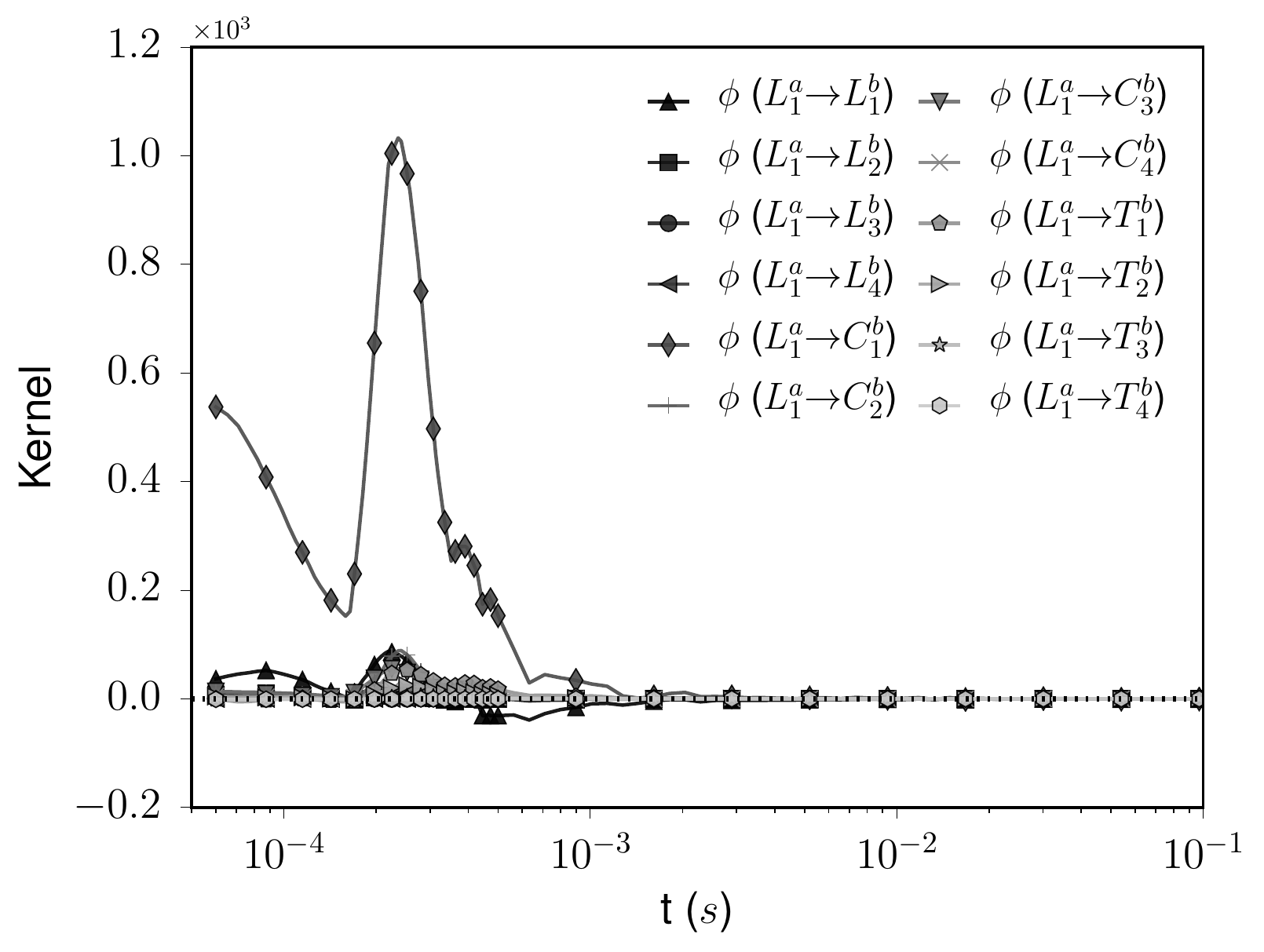}
\includegraphics[width=.48\textwidth]{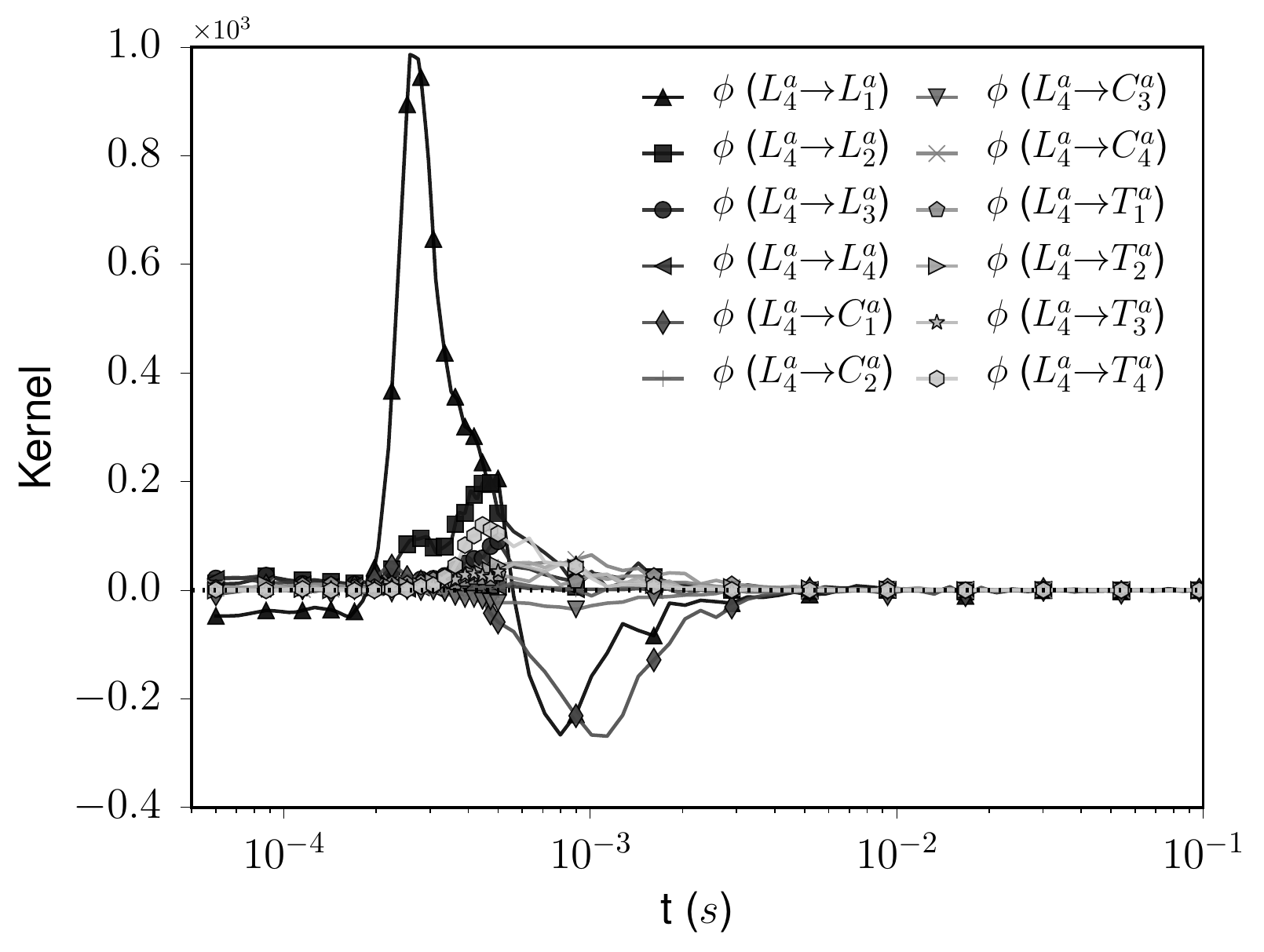}
\includegraphics[width=.48\textwidth]{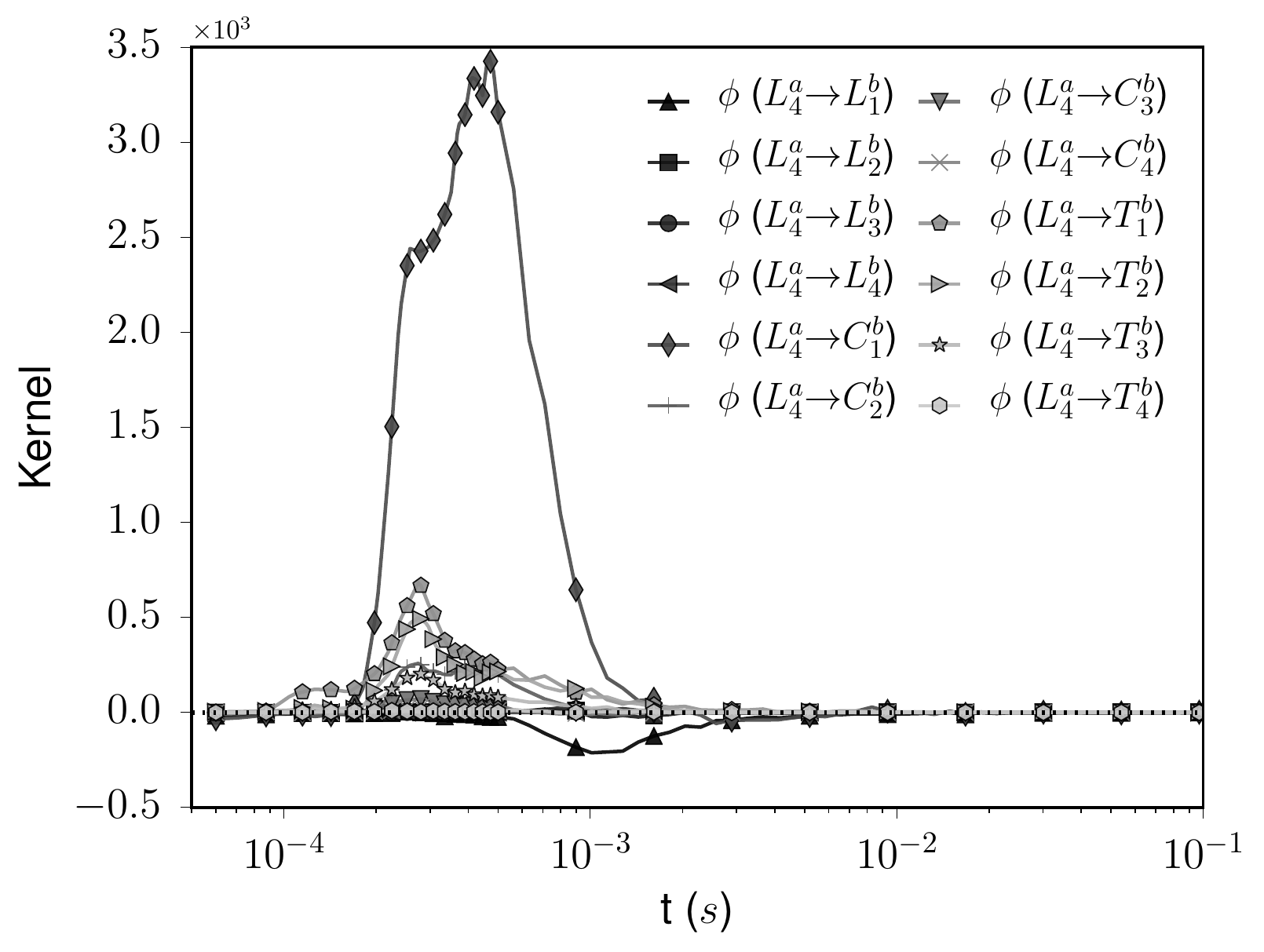}
\caption{DAX future: kernels describing the influence of a small (top) and large (bottom) limit order on other events. The left column refers to same side of the book, the right side to the opposite one.}
\label{fig:dax_limit}
\end{figure}

\begin{figure}[tbh]
\centering
\includegraphics[width=.48\textwidth]{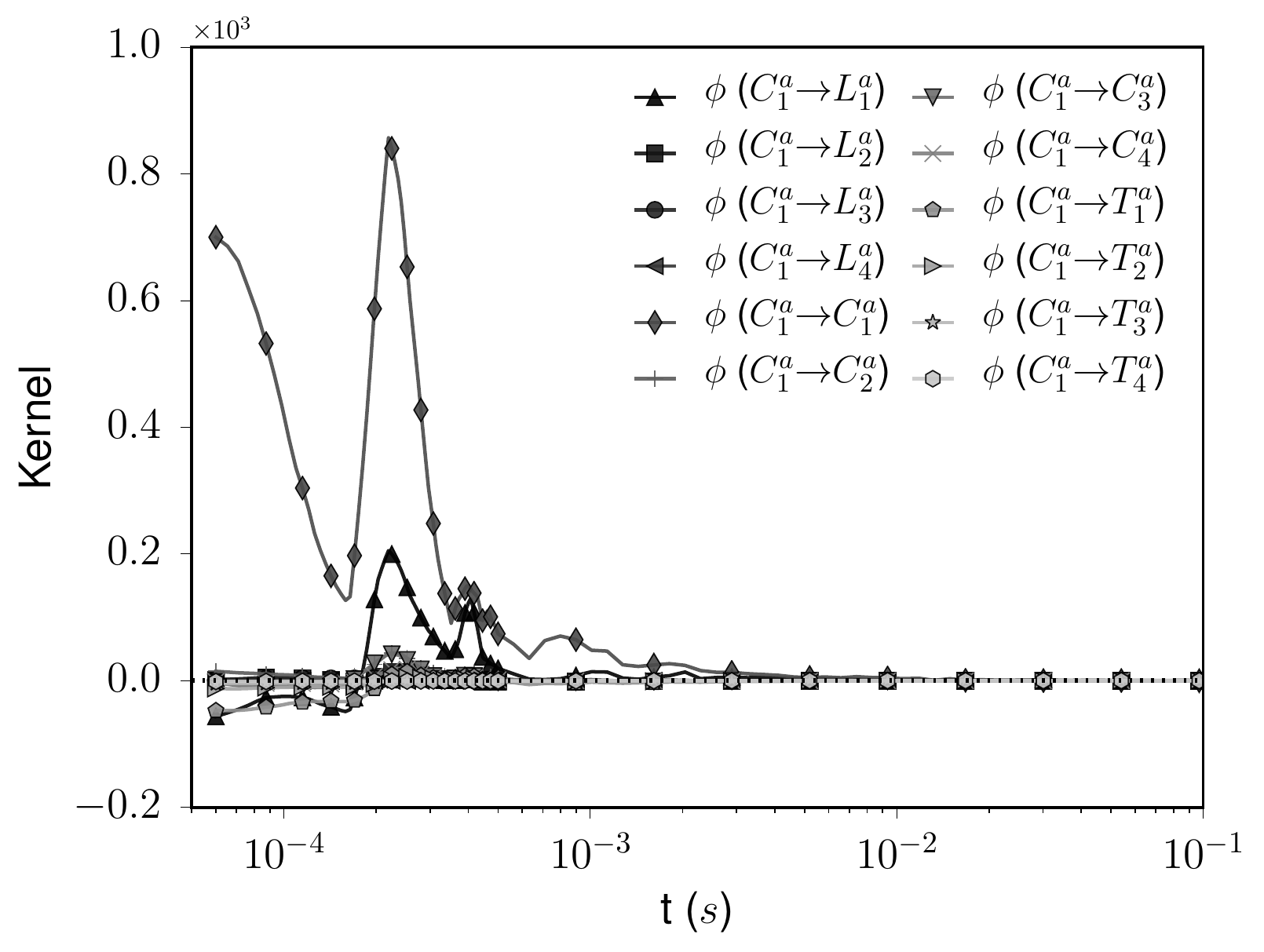}
\includegraphics[width=.48\textwidth]{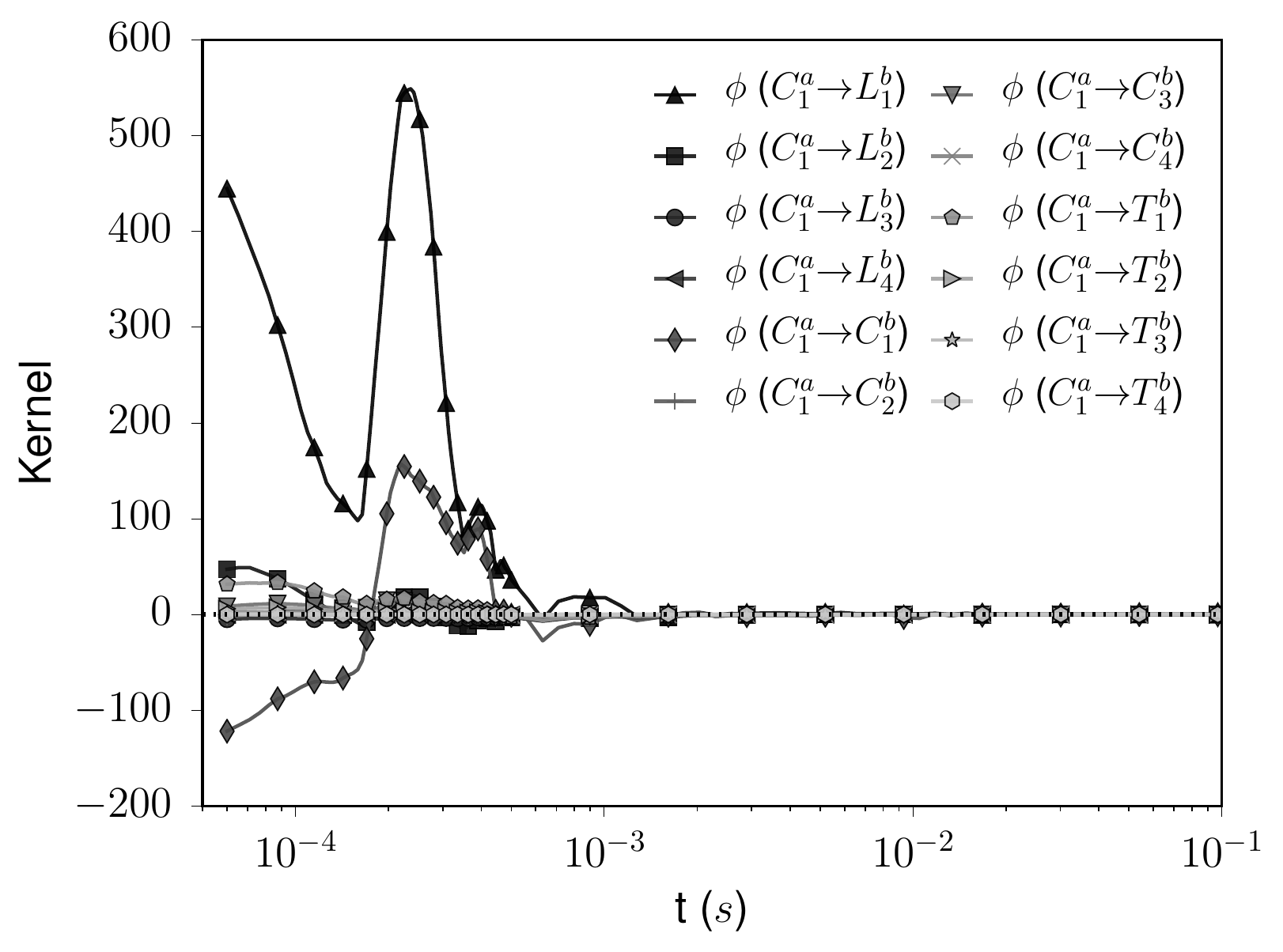}
\includegraphics[width=.48\textwidth]{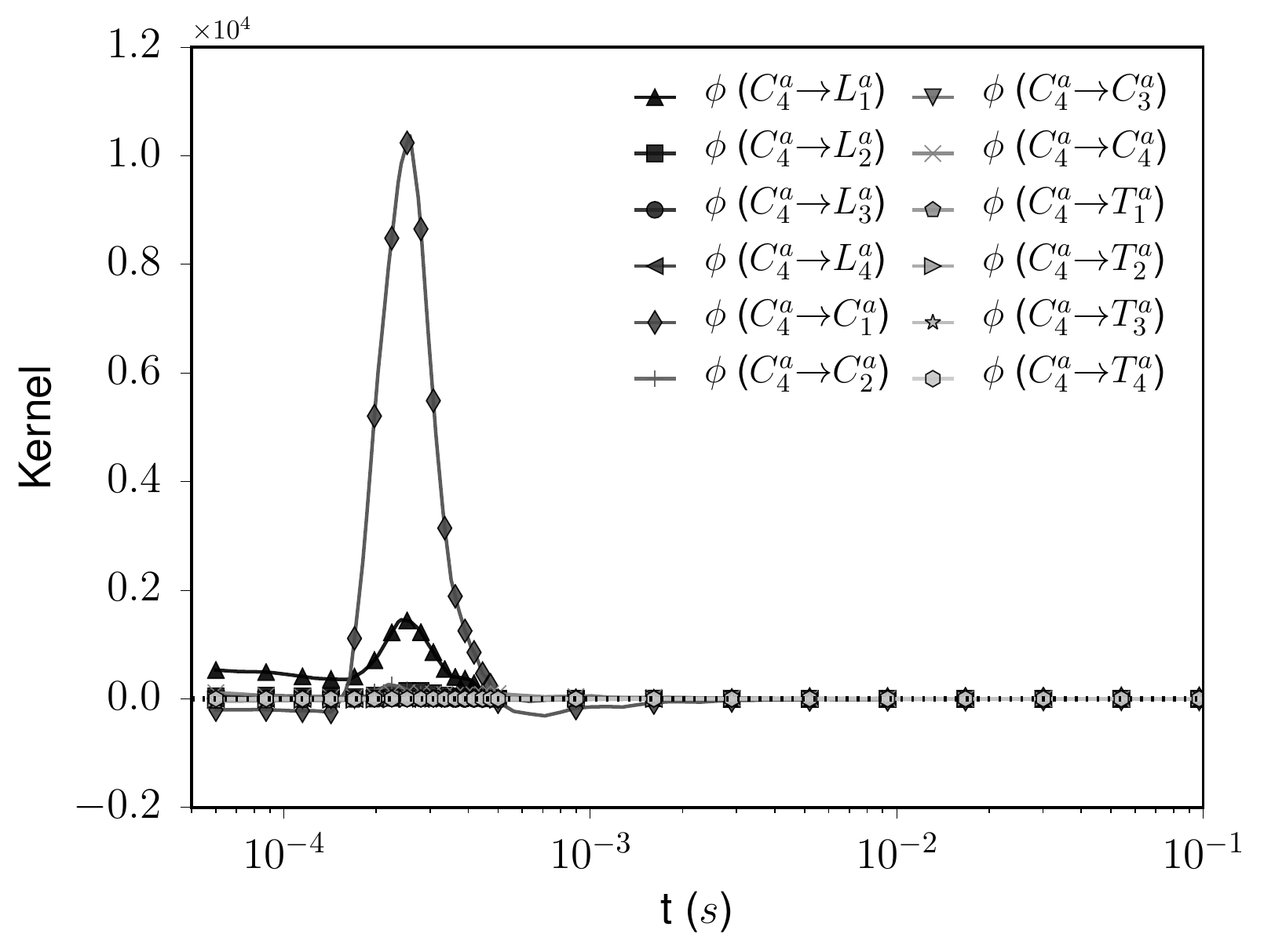}
\includegraphics[width=.48\textwidth]{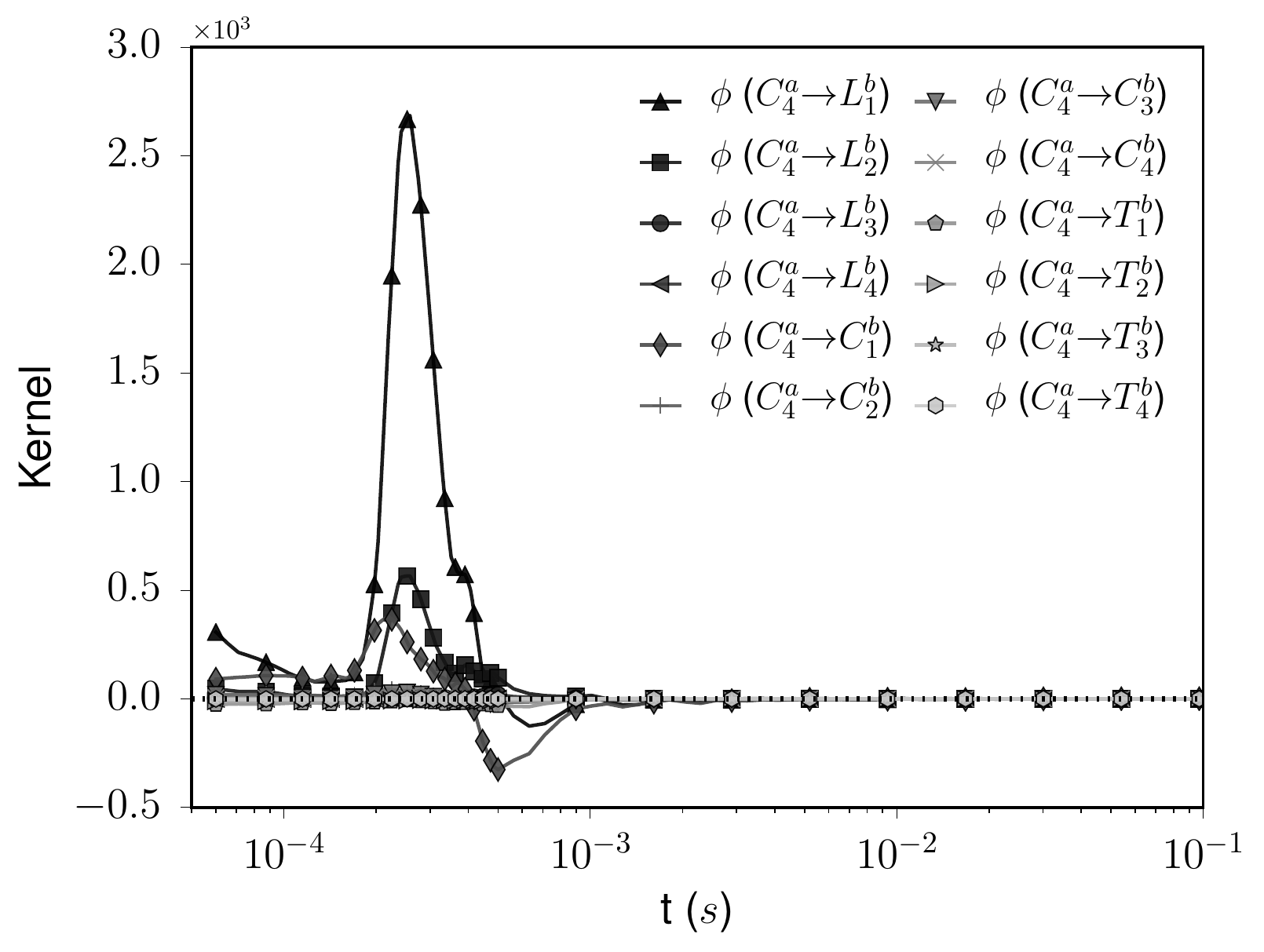}
\caption{DAX future: kernels describing the influence of a small (top) and a large (bottom) cancel order on other events. The left column refers to same side of the book, the right side to the opposite one.}
\label{fig:dax_cancel}
\end{figure}

\subsection{Effects of Trades}
\label{sub:ob_trades}

We now look at the effect of trades on other events. In Figures \ref{fig:dax_trade} and \ref{fig:bund_trade} the relative kernels are plotted for the DAX and the Bund futures, respectively. As we already pointed out, below the market reaction time we note the effect of order splitting, that manifests itself in the large positive values of the small-trade/small-trade kernel. In the case of the DAX this effect is absent for large trades, since there is no point in executing two or more large orders one immediately after the other. In the Bund case, where a 10 contracts order is still not so large we note some excitation from large trades to small ones at the shortest time scale. 

Another feature that emerges concerns the kernels $T^a\to C^a_1$. For both assets, these kernels are negative before the market reaction time, then they assume very large positive values around 300$\mu$s and then around $1m$s they revert to zero or to negative values in the case of large trades. We can give the following reading of this effect. When a large trade arrives, it will likely eat up a significant part of the outstanding liquidity. Of course, those limit orders that have been matched cannot be canceled anymore. Thus, the negative values of this kernel before the market reaction time are explained by this mechanical effect. The reaction of market participants with a market making strategy to a large trade is then to cancel their outstanding limit orders on the side hit by the trade, and to place limit orders on the opposite side. This is a consequence of the fact that a trade at the ask signals that the "true price" is closer to that side, and thus liquidity must adapt to the new situation.
We remark again that the effect of large trades on liquidity is both more intense (compare the different y-scales of the figures) and much more persistent in time. 

\begin{figure}[tbh]
\centering
\includegraphics[width=.48\textwidth]{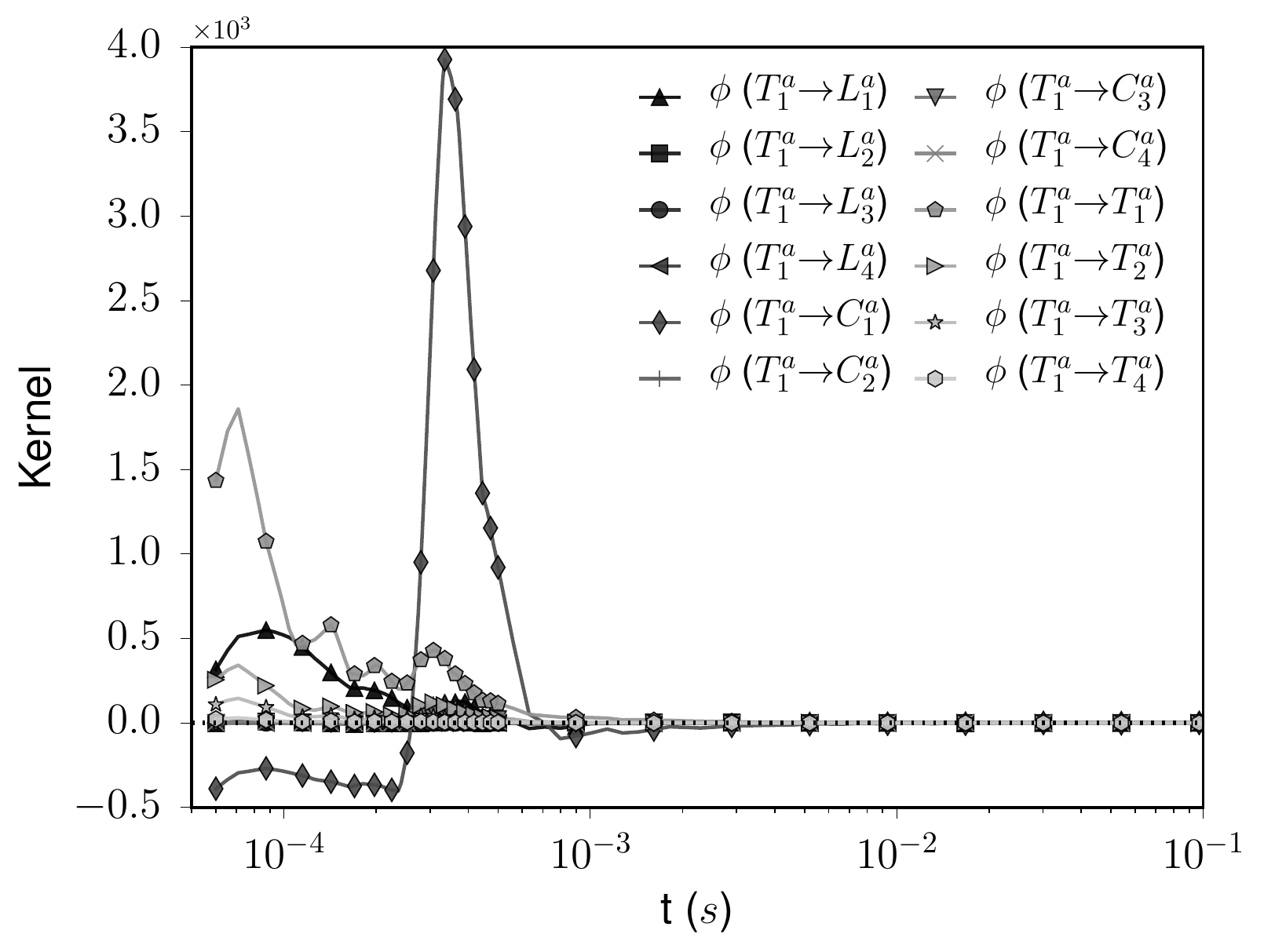}
\includegraphics[width=.48\textwidth]{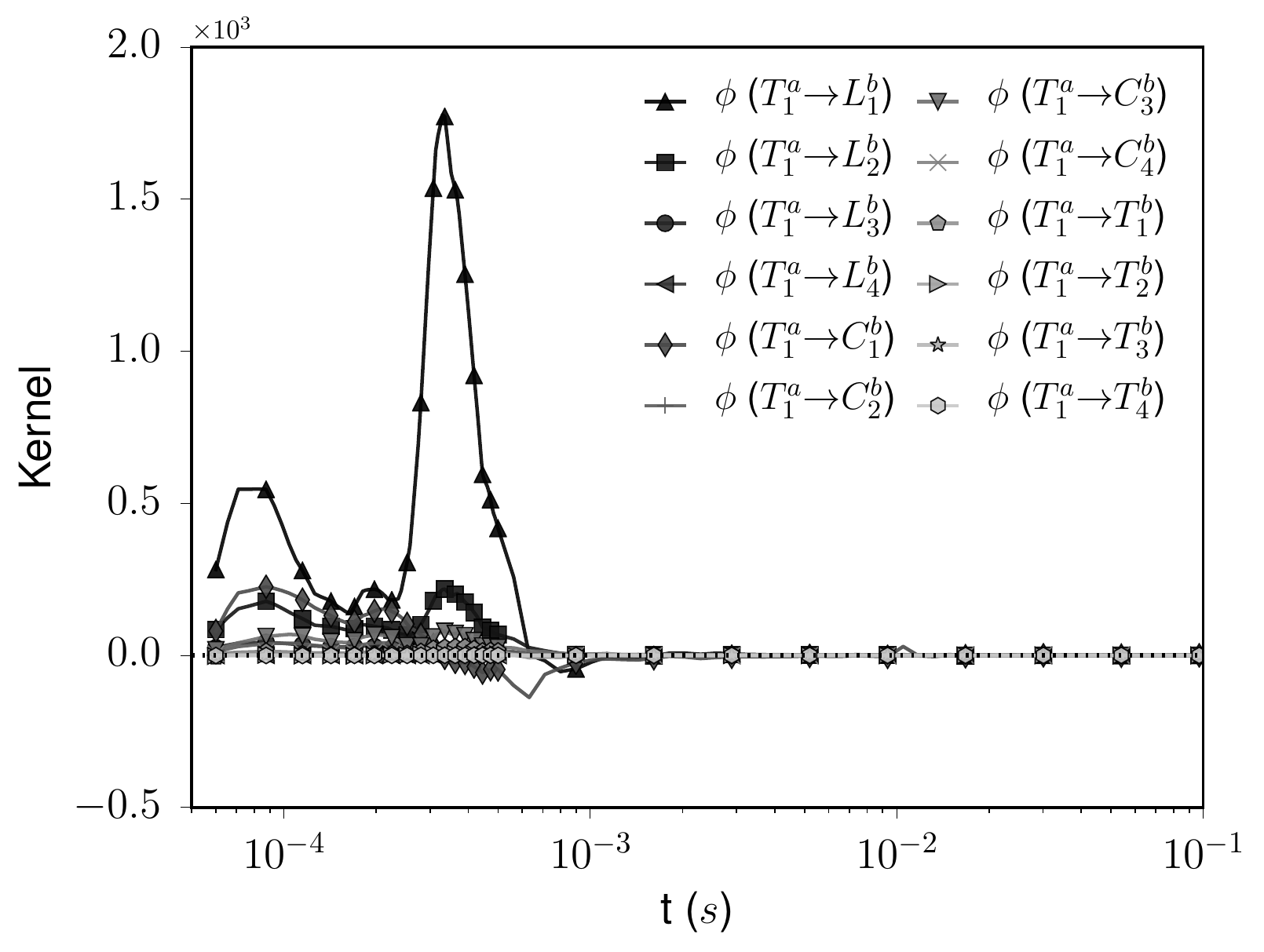}
\includegraphics[width=.48\textwidth]{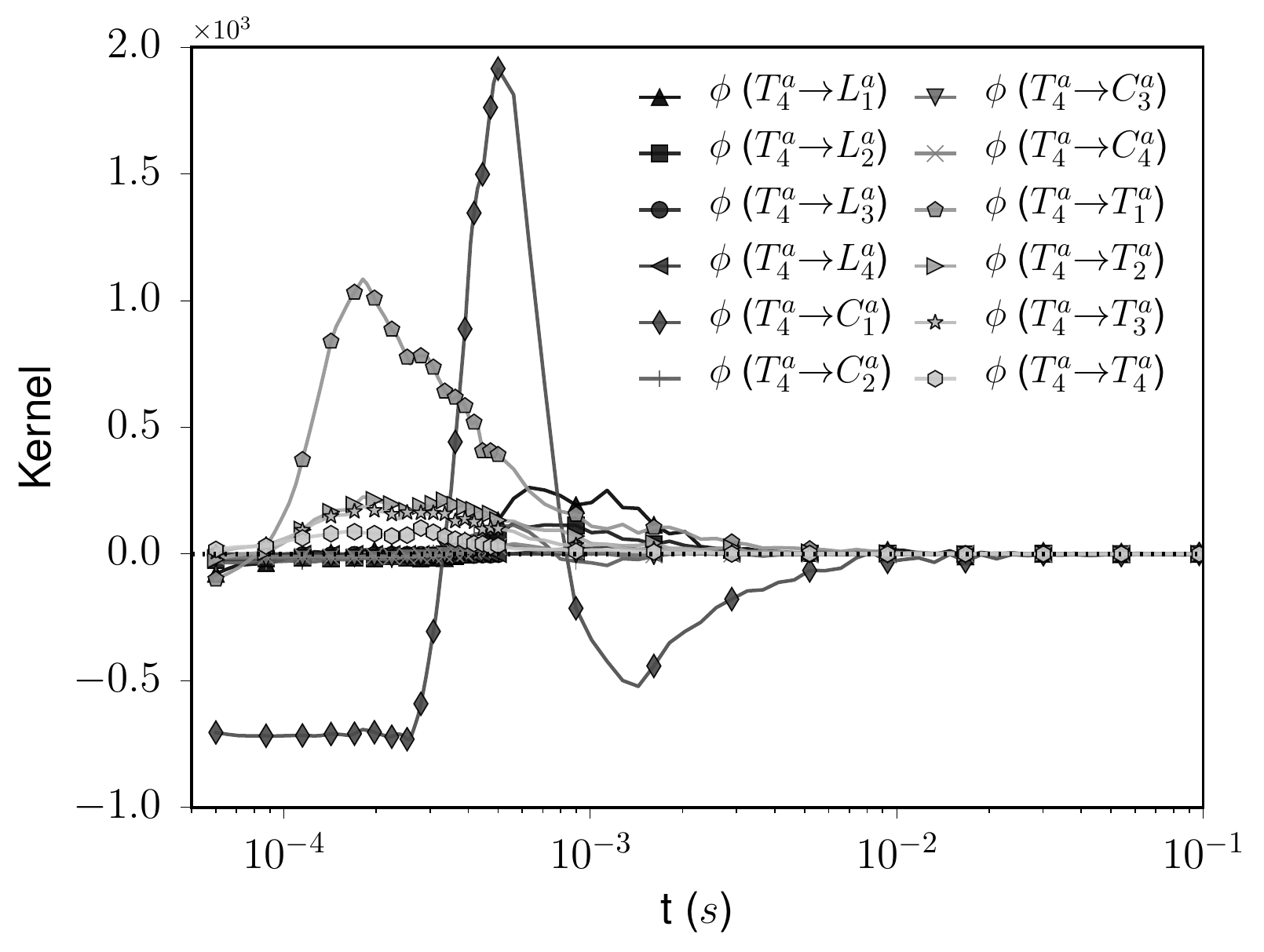}
\includegraphics[width=.48\textwidth]{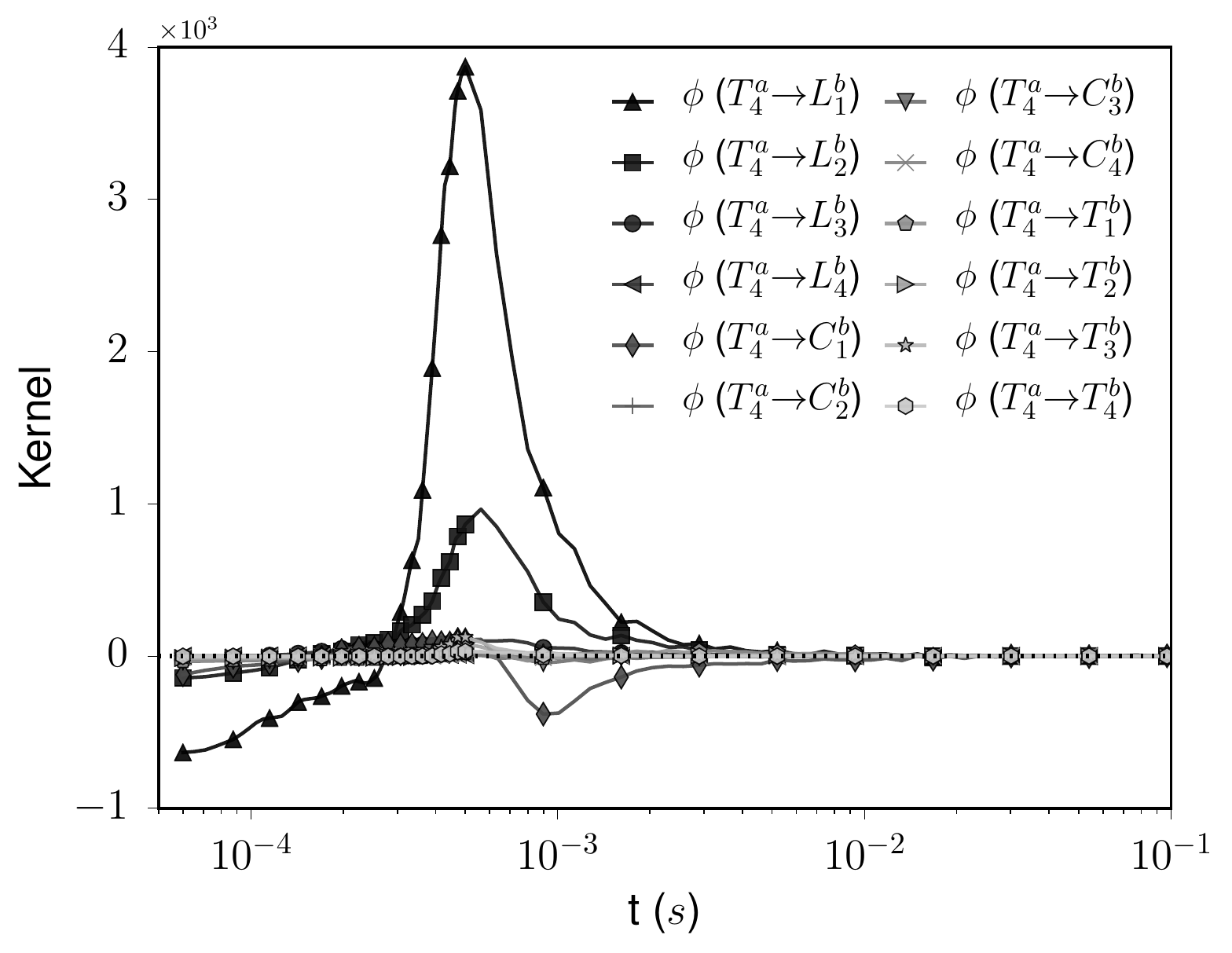}
\caption{DAX future: kernels describing the influence of a small (top) and a large (bottom) trade on other events. The left column refers to same side of the book, the right side to the opposite one.}
\label{fig:dax_trade}
\end{figure}

\begin{figure}[tbh]
\centering
\includegraphics[width=.48\textwidth]{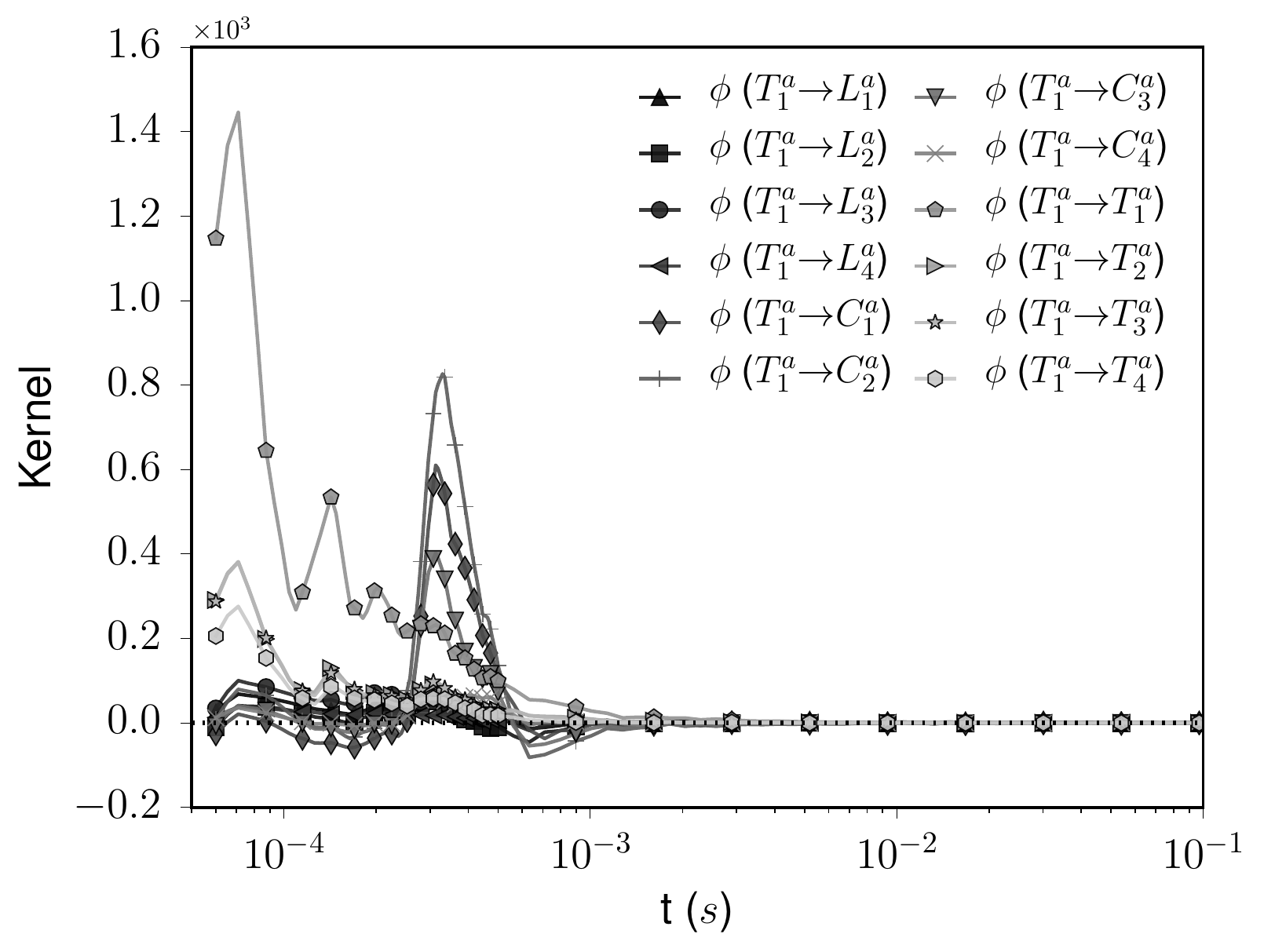}
\includegraphics[width=.48\textwidth]{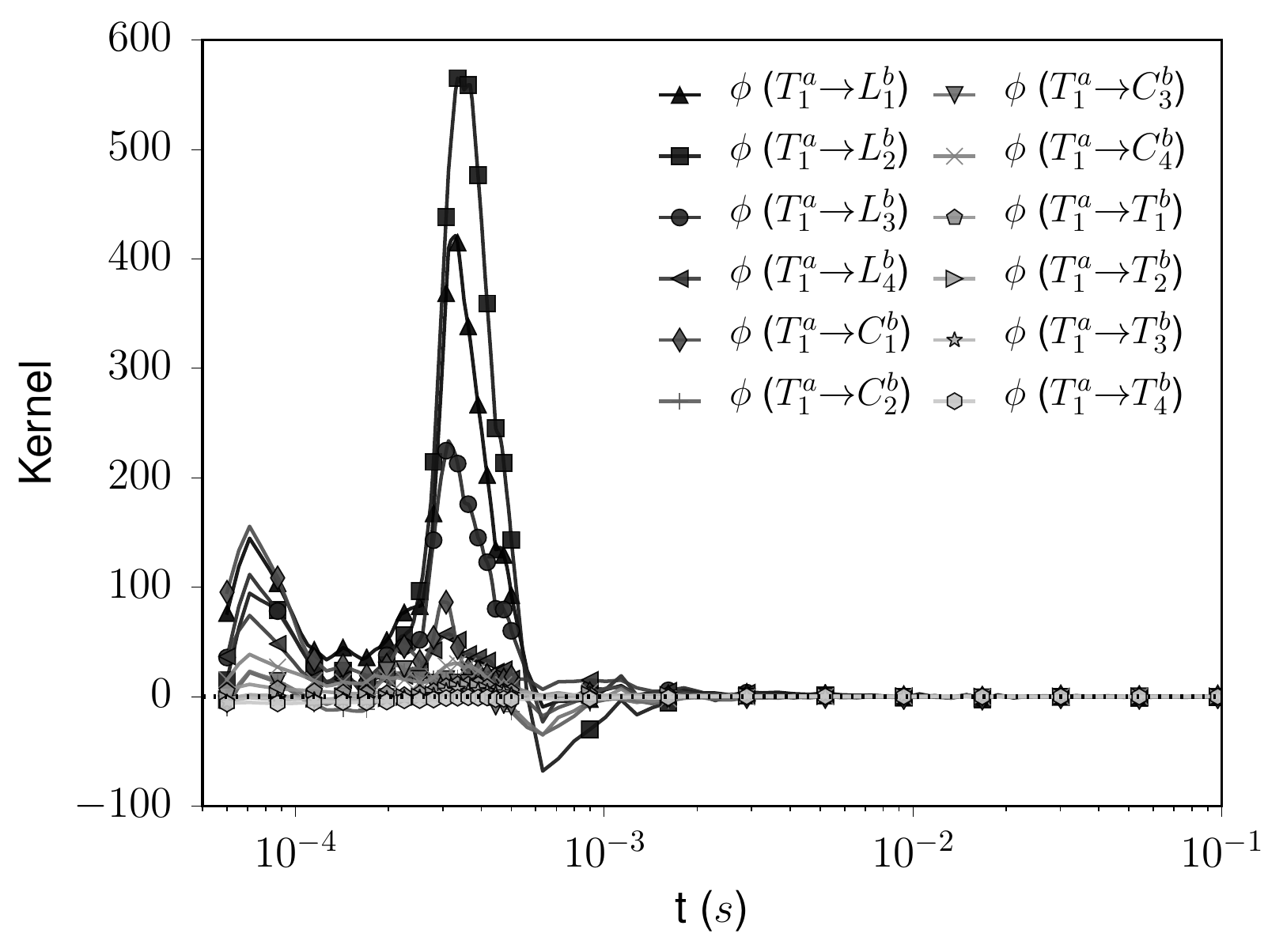}
\includegraphics[width=.48\textwidth]{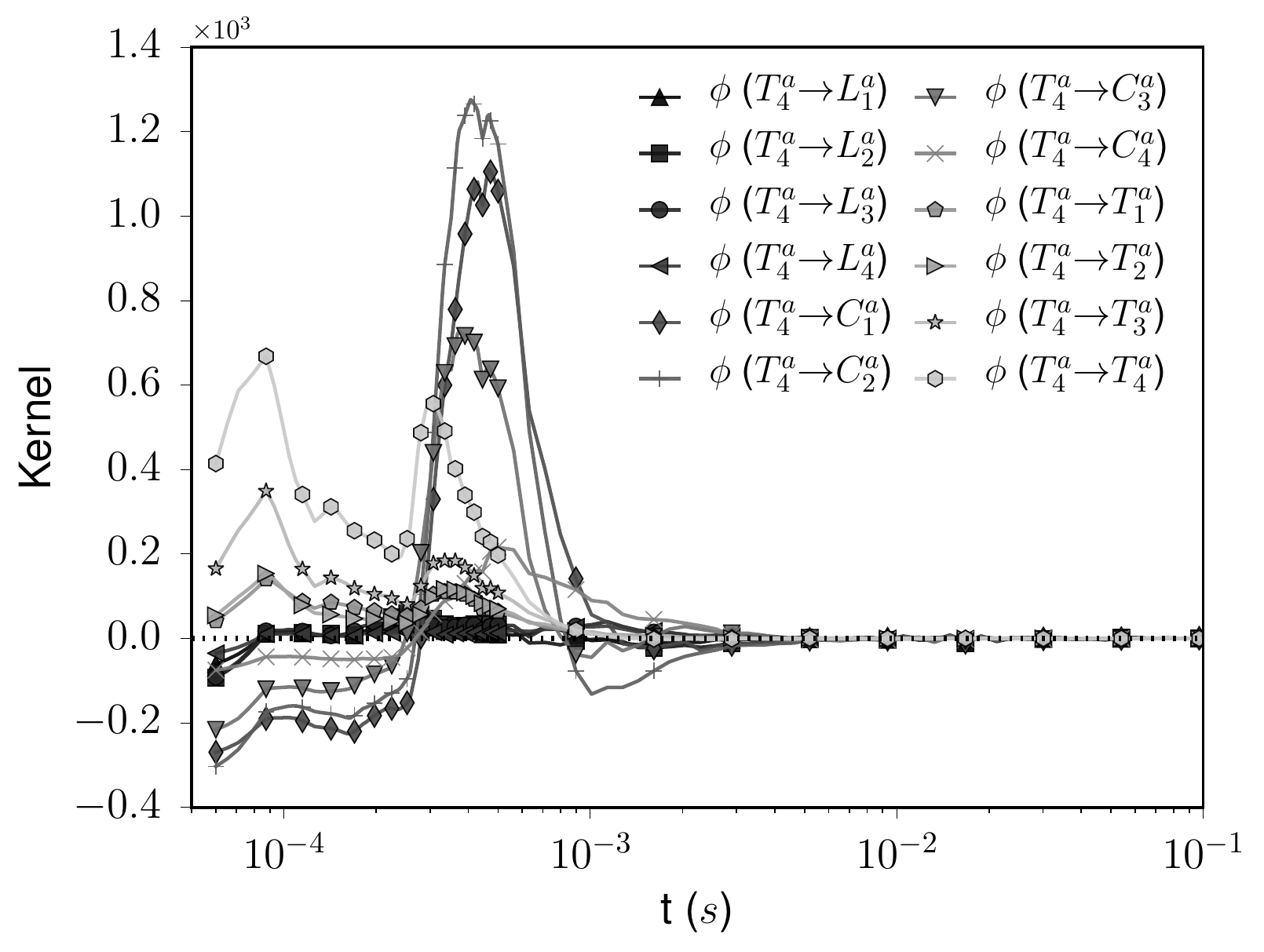}
\includegraphics[width=.48\textwidth]{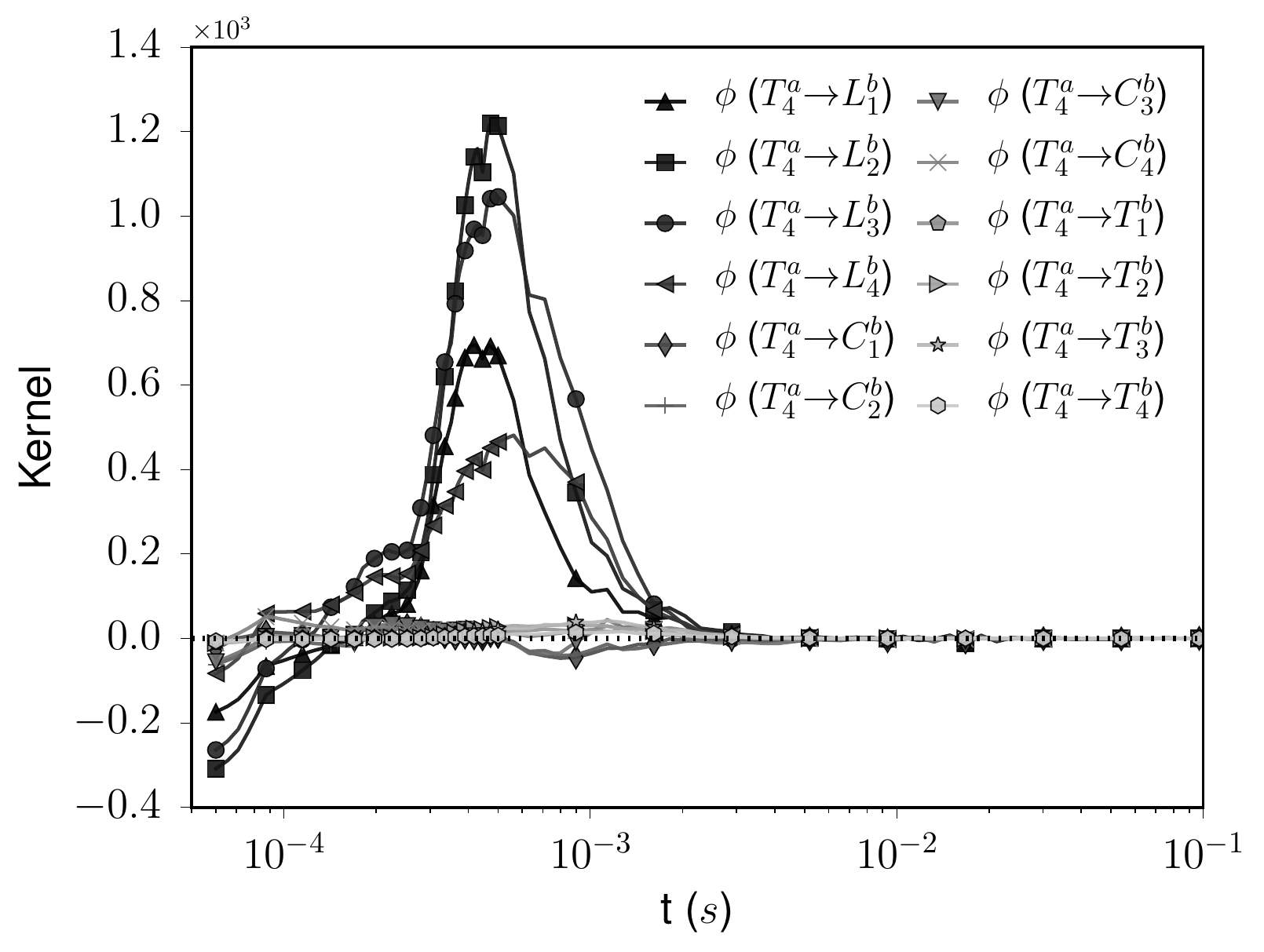}
\caption{Bund future: kernels describing the influence of a small (top) and a large (bottom) trade on other events. The left column refers to same side of the book, the right side to the opposite one.}
\label{fig:bund_trade}
\end{figure}

\subsection{Baseline intensities}

We complete the analysis of the order book section by examining the exogeneity of the different types and sizes of orders, measured by the baseline intensities $\mu_i$, we present the ratios $\mu_i/\Lambda_i$ in Table \ref{tab:mu_lam_ratio}. 

We can make three kind of considerations. The first concerns the degree of exogeneity of the different types of orders. Trades are the most exogenous, followed by limit orders. It is no surprise that the cancellations activity can be in large part explained by cross and self excitation, since cancel orders are mostly the reaction to some event. 
It is interesting to compare the ratios $\mu_i/\Lambda_i$ we obtained when examining trades alone with those of the full order book model. From Table \ref{tab:mu_comp} we note that, when including all the other orders into the picture the fraction of intensity explained by the baseline intensity decreases, albeit not dramatically. In fact, market orders have been shown to lead limit and cancels rather than be driven. Moreover, we have seen that most of the cross excitation comes from other trades, thus no big difference is expected between the full order book case and the trade-only case. Nevertheless, some trade activity is explained by the other events notably by large limit orders as we pointed out in the previous section.

The second consideration regards the effect of the order size. In line with the analysis conduced so far, we find large orders to be more exogenous than small ones. It is reasonable to assume that when a trader places a large order is because she has an "independent" reason to do so, either a clear view on the underling asset or a necessity to execute a large quantity immediately, rather than because she is following the market or engaging in market making. 

Finally, we observe that the ratios $\mu_i/\Lambda_i$ for large orders on the DAX are much higher than those on the Bund (see Tables \ref{tab:mu_lam_ratio} and \ref{tab:mu_comp}). This is mostly a consequence of the different volume distributions in the two assets. On the DAX future, orders larger than 10 are much more unusual, and perceived as big, than they are on the Bund.  
\begin{table}
\centering
\begin{tabular}{cccccccccccc}
\toprule[1pt]
\multicolumn{12}{c}{Bund}\\
\midrule
$L^a_1$ & $L^a_2$ & $L^a_3$ & $L^a_4$ & $C^a_1$ & $C^a_2$ & $C^a_3$ & $C^a_4$ & $T^a_1$ & $T^a_2$ & $T^a_3$ & $T^a_4$\\
20.6 &16.8 &13.0 &16.5 &9.0 &16.5 &13.3 &7.8 &27.3 &16.8 &17.4 &36.8 \\
\toprule
$L^b_1$ & $L^b_2$ & $L^b_3$ & $L^b_4$ & $C^b_1$ & $C^b_2$ & $C^b_3$ & $C^b_4$ & $T^b_1$ & $T^b_2$ & $T^b_3$ & $T^b_4$ \\
19.9 &16.5 &13.1 &17.1 &8.0 &16.5 &13.3 &7.6 &29.2 &17.0 &18.0 &36.4  \\
\toprule[1pt]
\multicolumn{12}{c}{DAX}\\
\midrule
$L^a_1$ & $L^a_2$ & $L^a_3$ & $L^a_4$ & $C^a_1$ & $C^a_2$ & $C^a_3$ & $C^a_4$ & $T^a_1$ & $T^a_2$ & $T^a_3$ & $T^a_4$\\
19.9 &19.1 &33.7 &72.3 &10.7 &10.8 &16.4 &18.5 &27.2 &30.5 &44.8 &50.6\\
\toprule
$L^b_1$ & $L^b_2$ & $L^b_3$ & $L^b_4$ & $C^b_1$ & $C^b_2$ & $C^b_3$ & $C^b_4$ & $T^b_1$ & $T^b_2$ & $T^b_3$ & $T^b_4$ \\
20.1 &20.0 &35.0 &71.2 &11.0 &11.5 &16.4 &18.4 &26.7 &30.5 &44.8 &49.6\\
\bottomrule[1pt]
\end{tabular}
\caption{Ratios $\frac{\mu_i}{\Lambda_i}$ expressed in percent.}
\label{tab:mu_lam_ratio}
\end{table}
\begin{table}
\centering
\begin{tabular}{cccccccc}
\toprule[1pt]
\multicolumn{8}{c}{Bund}\\
\midrule
$T^a_1$ & $T^a_2$ & $T^a_3$ & $T^a_4$ & $T^b_1$ & $T^b_2$ & $T^b_3$ & $T^b_4$\\
4.3&  4.3&  6.6 &10.7 &  2.4 & 3.6 & 5.6 & 10.4\\
\toprule[1pt]
\multicolumn{8}{c}{DAX}\\
\midrule
$T^a_1$ & $T^a_2$ & $T^a_3$ & $T^a_4$ & $T^b_1$ & $T^b_2$ & $T^b_3$ & $T^b_4$\\
 5.5&  7.3& 4.0 & 7.5 & 5.7 &  6.8 &  2.8 & 6.6\\
\bottomrule[1pt]
\end{tabular}
\caption{Difference in the ratios $\frac{\mu_i}{\Lambda_i}$ (expressed in percent) between the signed trade only case examined in Section \ref{sec:unsigned_vol} and the full order book case.}
\label{tab:mu_comp}
\end{table}

\section{Conclusions}
\label{sec:fin_rem}

In this paper we showed how multivariate Hawkes processes can be successfully applied to the modeling of the order book dynamics when order size is taken into account. We have shown that a simple multiplicative model, where time and size are factored, is not able to capture the complex interplay between these variables. On the contrary, our approach has proved to be capable of highlighting several features of the dynamics, from typical reaction time to complex interaction between different order type.

There are two main strengths of our approach as compared to the existing ones. First, Hawkes processes are point processes easily applicable in multiple dimensions and in this paper we have exploited this property by considering different types of orders (limit orders, cancellations, and trades) and different volume size. Second, we used a non parametric estimation method that limits the constraint imposed by traditional parametric models.  

When volume is not considered we recovered many of the results of \cite{bacry2014estimation} for limit order book dynamics. Moreover by using the multivariate approach we were able to separate contribution from different order sizes. The different impact of large orders clearly emerge from our analysis, as well as the longer persistence of their effect. Our work thus shows that the role of order size needs to be taken into account for a complete understanding of the order book dynamics. Our study still leave out some relevant information such as the size of the queue and the dynamics of the price and this will be the objective of future works.

We conclude by noting that our methodology is well suited to the application to different systems where the interaction of different event types as well as their marks are relevant.

\section*{Acknowledgments}

This research benefited from the support of the Chair Markets in Transition, under the aegis of Louis Bachelier Finance and Sustainable Growth laboratory, a joint initiative of Ecole Polytechnique, Université d’Evry Val d’Essonne and Fédération Bancaire Française. 
and from the chair of the Risk Foundation: Quantitative Management Initiative. We thank Thibault Jaisson, Iacopo Mastromatteo and Jean-François Muzy for useful discussions.

\appendix

\section{Proofs}
\label{sec:proof}
Here we report the proof of Proposition \ref{prop:2}. 
\begin{proof} 

Let us suppose that $\mathbf{\phi}(t) \geq 0$ for all $t \geq 0$, meaning that all the elements are positive functions. Define, for every $t\geq 0$, $S_{kj}^- = \lbrace s \in \mathbb{R^+} : g^{kj}(t-s) <0 \rbrace \neq \emptyset$  and $S_{kj}^+ = \lbrace s \in \mathbb{R^+} : g^{kj}(t-s) \geq 0 \rbrace$. 

Then we rewrite the integral equation \eqref{eq:wiener-hopf} as:
\begin{equation}
\begin{split}
g^{ij}(t)&=\phi^{ij}(t)+\sum_k\int_0^\infty g^{kj}(t-s) \phi^{ik}(s) \diff s\\
& = \phi^{ij}(t) +\sum_k\int_{S_{kj}^+} g^{kj}(t-s) \phi^{ik}(s) \diff s +\sum_k \int_{S_{kj}^-} g^{kj}(t-s) \phi^{ik}(s) \diff s
\end{split}
\end{equation}
Rearranging the terms we get:
\begin{equation}
g^{ij}(t)-\sum_k \int_{S_{kj}^-} g^{kj}(t-s) \phi^{ik}(s) \diff s = \phi^{ij}(t) +\sum_k\int_{S_{kj}^+} g^{kj}(t-s) \phi^{ik}(s) \diff s 
\end{equation}
The right hand side is always positive, so for the equation to hold we need, for all $t$ such that $g^{ij}(t)<0$:
\begin{equation}
g^{ij}(t)\geq \sum_k \int_{S_{kj}^-} g^{kj}(t-s) \phi^{ik}(s) \diff s
\end{equation}
Let us define 
\begin{equation}
m_{kj}= \inf_{\mathbb{R}^+} g^{kj}(t)
\end{equation}
and 
\begin{equation}
m_j = \min_{k} m_{kj}
\end{equation}
Now we can write
\begin{equation}
g^{ij}(t)\geq \sum_k \int_{S_{kj}^-} g^{kj}(t-s) \phi^{ik}(s) \diff s \geq \sum_k m_{kj} \int_{S_{kj}^-} \phi^{ik}(s) \diff s \geq m_j  \sum_k \int_{S_{kj}^-} \phi^{ik}(s) \diff s
\end{equation}
if we take this equation for the component $(i^*,j)$ such that $m_j= m_{i^*j}$, that is the $g$ that reaches the lowest value on the column $j$, we find that the above conditions requires
\begin{equation}
\sum_k \int_{S_{kj}^-} \phi^{i^*k}(s) \geq 1 \Rightarrow \sum_k \int_0^{\infty} \phi^{i^*k}(s) = \sum_k \| \phi^{i^*k} \|_1 \geq 1 
\end{equation} 
since the $\phi$ are positive and causal functions. 
If the matrix $\mathbf{g}(t)$ has at least one entry on every column (or row) wich takes negative values, then the minimum row (column) sum of $\mathbf{\hat{\phi}}_0$, $r$ ($c$) must be greater than one. This implies that also the spectral radius, $\rho$, of $\mathbf{\hat{\phi}}_0$ is greater than one (see \citet{minc1988nonnegative} pp. 24-25).

\end{proof}

\section{Robustness checks}
\label{app:b}
In this appendix we briefly discuss two possible issues that can arise when dealing with real data, namely the effects of finite time resolution and those of the intraday activity pattern. We find that none of these issues appears to significantly influence the results of our analysis. 

\paragraph{Timestamps randomization}
As noted in the main text, in our data we never observe two trades closer in time than 50 $\mu$s. This could be the result of some technical time required to process the order and write the data in the database. For changes in the best quote the minimum time difference observed is 10 $\mu$s, despite the timestamps being recorded with microsecond precision. So there is the possibility that actual events are sometimes closer in time to each other than they result from the database. 

We check the outcome of the non-parametric estimation also with randomized data. To randomize the data, we first round all the trades timestamps to the nearest 10$\mu s$.
Then, we subtract to them a random number uniformly distributed in $[0, 50)\, \mu s $. The effect of randomization on short durations are shown in Figure \ref{fig:dur_pre_after}.
\begin{center}
\begin{figure}[tb]
\centering
\includegraphics[width=0.49\textwidth]{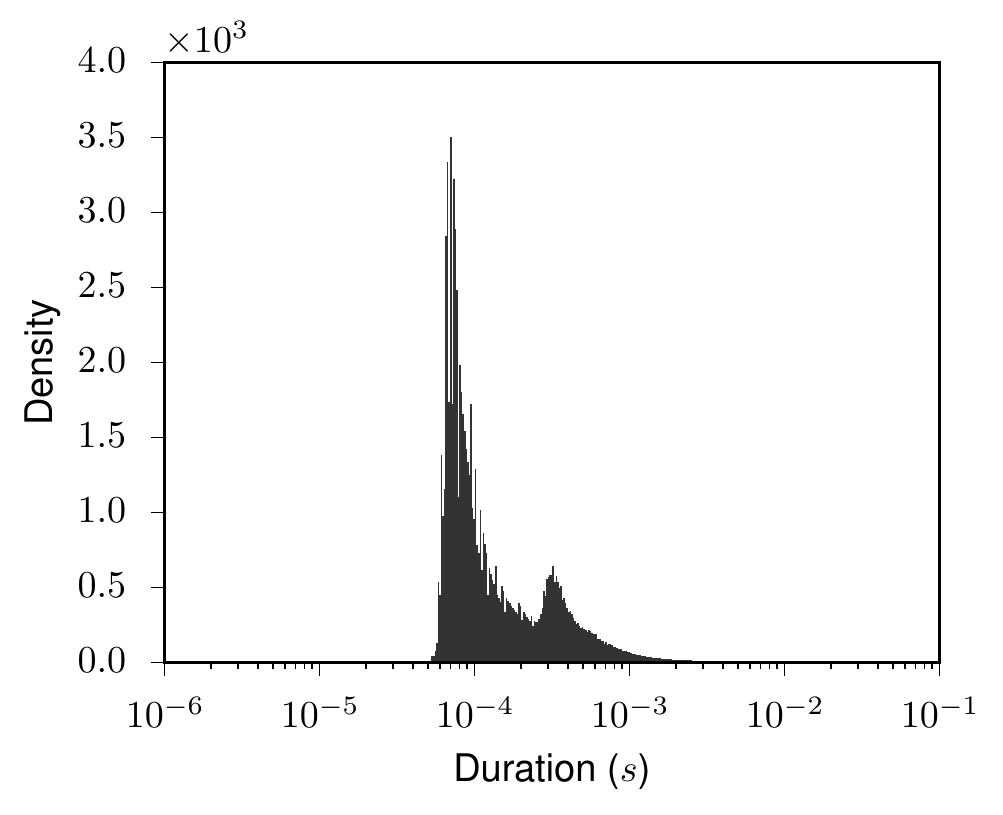}
\includegraphics[width=0.49\textwidth]{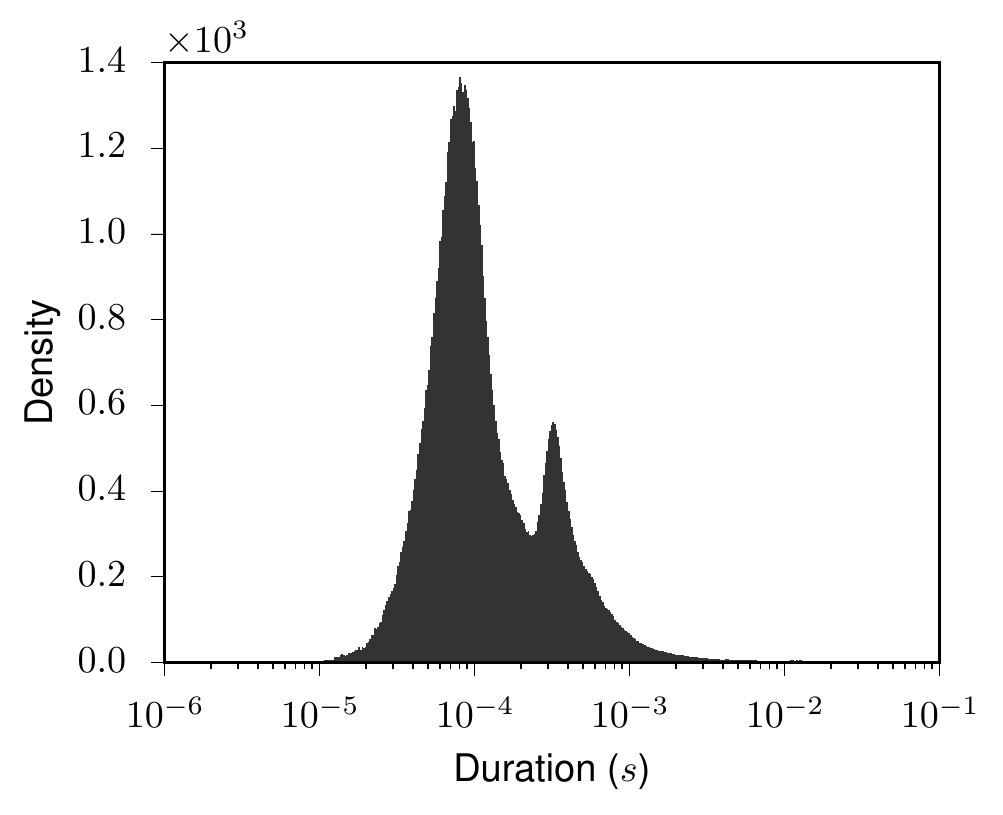}
\caption{Bund future: Inter events distribution before (left) and after (right) randomization.}
\label{fig:dur_pre_after}
\end{figure}
\end{center}
The estimation parameters are the same as those used in Section \ref{sec:unsigned_vol}. In Figure \ref{fig:6D_rand_norm} (left) the kernels norms are presented, we note that the outcome is almost identical to the non randomized case (see Fig. \ref{fig:dax_norm}). We find also that the shape of the obtained kernels is essentially the same,  except that the peaks at 100 and 300$\mu s$ are broader as a result of the broader distribution of the inter event times due to randomization. We conclude that the time resolution of our data is fine enough not to pose noticeable problem to the estimation procedure and we therefore prefer to use the original timestamps.
\begin{center}
\begin{figure}[tb]
\centering
\includegraphics[width=0.49\textwidth]{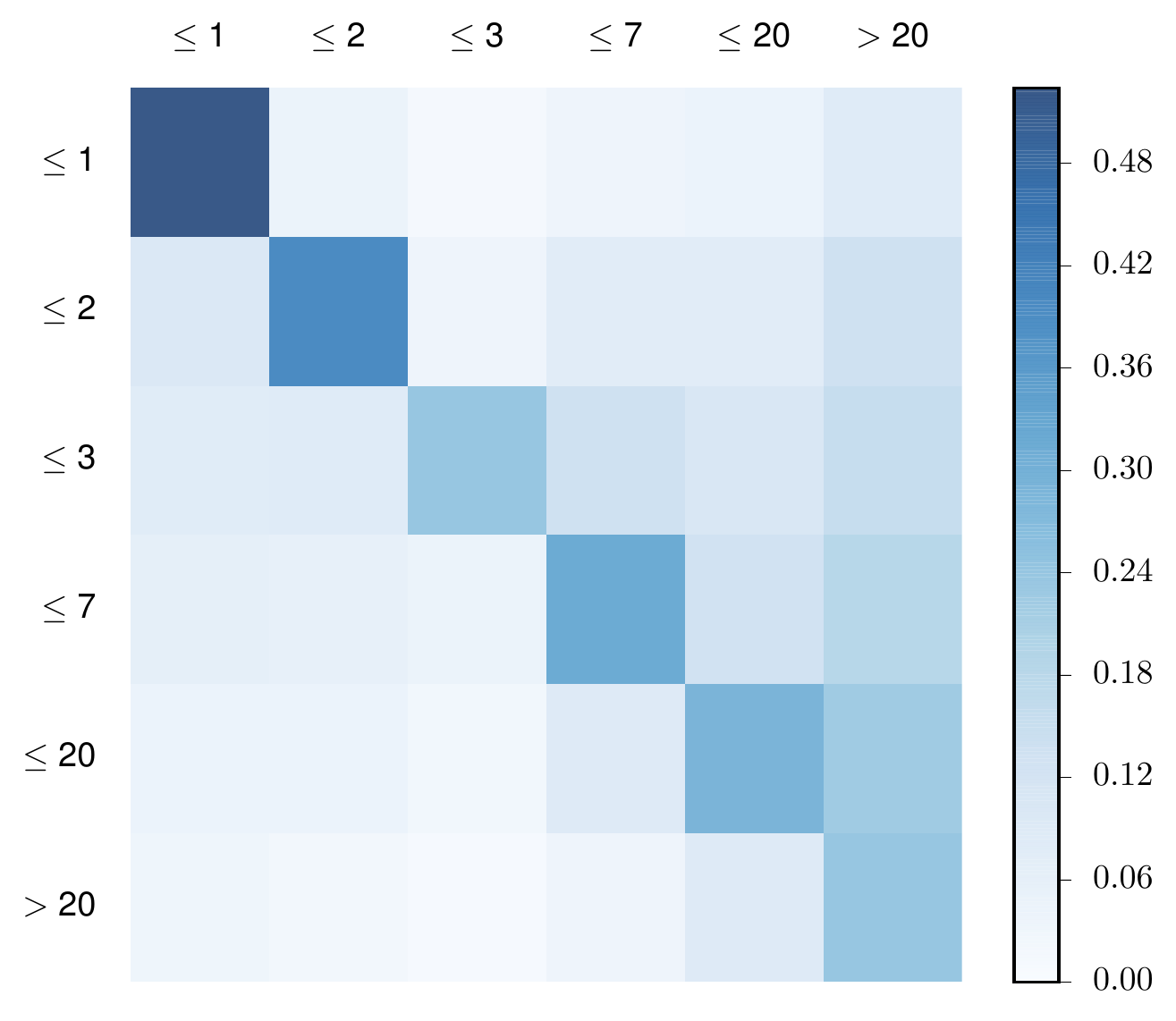}
\includegraphics[width=0.49\textwidth]{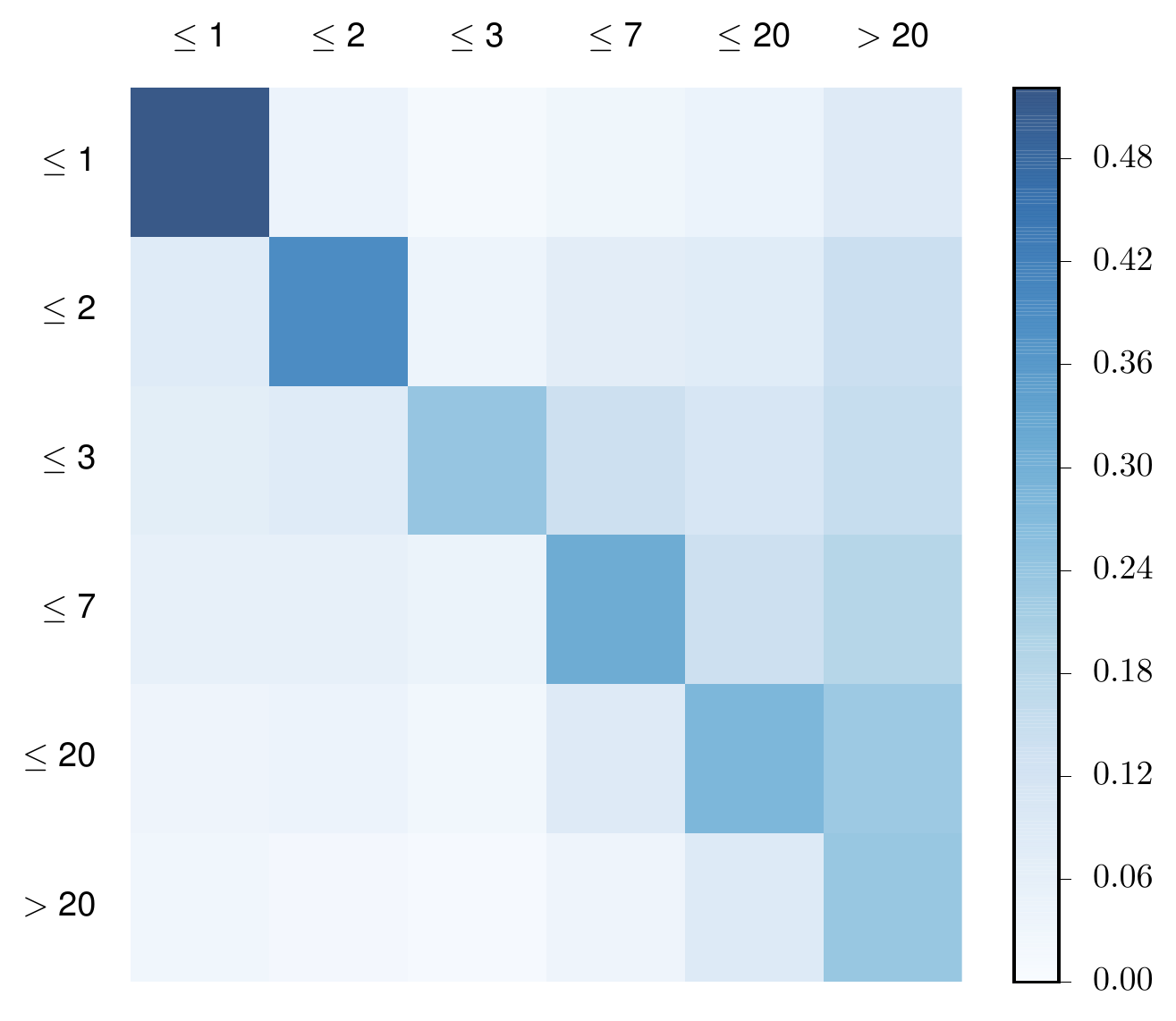}
\caption{Bund future: rescaled kernel norms for the six dimensional case. Results from randomized data (left), and results form trades occurred only between 11:00 and 17:00 local time (right).}
\label{fig:6D_rand_norm}
\end{figure}
\end{center}

\paragraph{Intraday Seasonality}
In principle intraday non stationarities could pose a problem for estimation. To check for daily-pattern effects on the estimation procedure we repeat the estimation on trade events comprised between 11:00 and 17:00 Frankfurt time. We consider the same six bins of volume as specified in Table \ref{tab:bin6D}. The matrix of the norms is reproduced in Figure \ref{fig:6D_rand_norm} right. We also computed the relative differences
$$ \frac{n_{ij}-n'_{ij}}{n_{ij}}$$
Where, $n_{ij}$ is the rescaled norm of $\phi_{ij}$ estimated on the whole period (08:00-22:00), while $n'$ is the analogous quantity estimated on the restricted period. We find that the differences are of the order of a few percent, with the highest differences (about 10\%) appearing on the smallest norms where we also have the highest estimation error. 
We therefore conclude that our main results are robust also with respect to seasonality effects. This can be linked to the fact that our study focuses only on the very short time scale dynamics (we estimate the kernels only up to 0.5$s$).

\bibliography{bibRambaldiBacryLillo}
\bibliographystyle{chicago}

\end{document}